\documentclass[runningheads,envcountsame,a4paper]{elsarticle}

\usepackage[utf8]{inputenc}       
\usepackage[T1]{fontenc}

\pdfoutput=1 

\usepackage{subcaption}

\usepackage{amsfonts,amssymb,amsthm}
\usepackage{breqn}

\usepackage{tikz,tikz-cd}
\usetikzlibrary{calc}

\usepackage[margin=2.5cm]{geometry}

\usepackage{algorithm}
\usepackage{comment}
\usepackage[noend]{algpseudocode}

\usepackage{mathtools}
\usepackage{setspace}
\usepackage{graphicx}
\usepackage{hyperref}
\usepackage{aliascnt}

\usepackage{array}
\newcolumntype{P}[1]{>{\centering\arraybackslash}p{#1}}
\newcolumntype{M}[1]{>{\centering\arraybackslash}m{#1}}

\usepackage{xspace}
\newcommand{\W}[1]{\ensuremath{\textsf{W}[#1]}\xspace}
\newcommand\paraNP{\ensuremath{\textsf{paraNP}}\xspace}

\newcommand\NP{\ensuremath{\textsf{NP}}\xspace}

\newcommand\XP{\ensuremath{\textsf{XP}}\xspace}

\newcommand\FPT{\ensuremath{\textsf{FPT}}\xspace}

\newcommand{\yes}{\textsf{Yes}}

\newcommand{\diam}{\mathrm{diam}}

\newcommand{\nd}{\textsf{nd}}
\newcommand{\dnd}{\textsf{dnd}}
\newcommand{\vc}{\textsf{vc}}

\bibliographystyle{plainurl}

\newtheorem{theorem}{Theorem}[section] 
\newaliascnt{proposition}{theorem}
\newtheorem{proposition}[proposition]{Proposition}
\aliascntresetthe{proposition}

\newaliascnt{lemma}{theorem}
\newtheorem{lemma}[lemma]{Lemma}
\aliascntresetthe{lemma}

\newaliascnt{claim}{theorem}
\newtheorem{claim}[claim]{Claim}
\aliascntresetthe{claim}

\newaliascnt{corollary}{theorem}
\newtheorem{corollary}[corollary]{Corollary}
\aliascntresetthe{corollary}

\newaliascnt{observation}{theorem}

\aliascntresetthe{observation}

\newaliascnt{definition}{theorem}

\aliascntresetthe{definition}

\newaliascnt{example}{theorem}
\newtheorem{example}[example]{Example}
\aliascntresetthe{example}

\newaliascnt{remark}{theorem}
\newtheorem{remark}[remark]{Remark}
\aliascntresetthe{remark}

\newcommand{\Oh}{\mathcal{O}}

\usepackage[most]{tcolorbox}

\newcommand{\problemdef}[4]{
	\begin{tcolorbox}[width = \textwidth,colback=white,arc=0pt,outer arc=0pt,boxrule=0.7pt,left =0.5em,right=0em]\textsc{#1} #2		\\[2pt]
		\begin{tabular}{ @{}l p{0.84\textwidth} c }
			\textsf{Input:} & #3 \\[.5pt]
			\textsf{Problem:} & #4
		\end{tabular}
	\vspace{-0.25em}
	\end{tcolorbox}
}

\makeatletter
\let\@upn\textbf
\makeatother

\makeatletter
\def\BState{\State\hskip-\ALG@thistlm}
\makeatother

\begin{document}

\title{Parameterizing Path Partitions}

%
%


\author[Trier]{Henning Fernau}
\ead{fernau@uni-trier.de}

\author[Clermont-Auvergne]{Florent Foucaud\corref{cor1}}
\ead{florent.foucaud@uca.fr}
\cortext[cor1]{Corresponding author.}
\author[Trier]{Kevin Mann}
\ead{mann@uni-trier.de}
\author[Bangalore]{Utkarsh Padariya}
\ead{utkarsh.prafulchandra@iiitb.ac.in}
\author[Bangalore]{Rajath Rao K.N.}
\ead{rajath.rao@iiitb.ac.in}

\address[Trier]{Universit\"at Trier, Fachbereich IV, Informatikwissenschaften, Germany}
\address[Clermont-Auvergne]{Universit\'e Clermont Auvergne, CNRS, Clermont Auvergne INP, Mines Saint-\'Etienne, LIMOS, 63000 Clermont-Ferrand, France}
 \address[Bangalore]{International Institute of Information Technology Bangalore, India}


%
\begin{abstract}
We study the algorithmic complexity of partitioning the vertex set of a given (di)graph into a small number of paths. The \textsc{Path Partition} problem (\textsc{PP}) has been studied extensively, as it includes \textsc{Hamiltonian Path} as a special case. The natural variants where the paths are required to be either \emph{induced} (\textsc{Induced Path Partition}, \textsc{IPP}) or \emph{shortest} (\textsc{Shortest Path Partition}, \textsc{SPP}), have received much less attention. Both problems are known to be \NP-complete on undirected graphs; we strengthen this by showing that they remain so even on planar bipartite directed acyclic graphs (DAGs), and that \textsc{SPP} remains \NP-hard on undirected bipartite graphs. When parameterized by the natural parameter ``number of paths'', both \textsc{SPP} and \textsc{IPP} are shown to be \W{1}-hard on DAGs. We also show that SPP is in \XP both for DAGs and undirected graphs for the same parameter, as well as for other special subclasses of directed graphs (\textsc{IPP} is known to be \NP-hard on undirected graphs, even for two paths). On the positive side, we show that for undirected graphs, both problems are in \FPT, parameterized by neighborhood diversity. 
We also give an explicit algorithm for the vertex cover parameterization of \textsc{PP}. When considering the dual parameterization (graph order minus number of paths), all three variants, \textsc{IPP}, \textsc{SPP} and \textsc{PP}, are shown to be in \FPT for undirected graphs.
We also lift the mentioned neighborhood diversity and dual parameterization results to directed graphs; here, we need to define a proper novel notion of directed neighborhood diversity. As we also show, most of our results also transfer to the case of covering by edge-disjoint paths, and purely covering.
\end{abstract}

%


\begin{keyword}Path Partitions, \NP-hardness, Parameterized Complexity, Neighborhood Diversity, Directed Neighborhood Diversity, Vertex Cover Parameterization
\end{keyword}

\maketitle       
\section{Introduction}

Graph partitioning and graph covering problems are among the most studied problems in graph theory and algorithms. There are several types of graph partitioning and covering problems including covering the vertex set by stars (\textsc{Dominating Set}), covering the vertex set by cliques (\textsc{Clique Covering}), partitioning the vertex set by independent sets (\textsc{Coloring}), and covering the vertex set by paths or cycles \cite{manuel2018revisiting}. In recent years, partitioning and covering problems by paths have received considerable attention in the literature because of their connections with well-known graph-theoretic theorems and conjectures like the Gallai-Milgram theorem~\cite{GM60}, Berge’s path partition conjecture~\cite{berge1983path,hartman1988variations} and a conjecture by Magnant and Martin \cite{DBLP:journals/ajc/MagnantM09}. These studies are motivated by applications in diverse areas such as code optimization~\cite{boesch1977covering}, machine learning / AI~\cite{TG21}, transportation networks~\cite{9568702}, bioinformatics~\cite{LafMou2024}, parallel computing~\cite{pinter1987mapping}, and program testing~\cite{ntafos1979path}. There are several types of paths that can be considered: unrestricted paths, induced paths, shortest paths, or directed paths (in a directed graph). A path~$P$ is an \emph{induced} path in $G$ if the subgraph induced by the vertices of~$P$ is a path. An induced path is also called a \emph{chordless} path; \emph{isometric} path and \emph{geodesic} are other names for shortest path. Various questions related to the complexity of these path problems, even in standard graph classes, remained open for a long time (even though they have uses in various fields), a good motivation for this project.

In this paper, we mainly study the problem of partitioning the vertex set of a graph (undirected or directed) into the minimum number of disjoint \emph{paths}, focusing on three problems, \textsc{Path Partition} (\textsc{PP}), \textsc{Induced Path Partition} (\textsc{IPP}) and \textsc{Shortest Path Partition} (\textsc{SPP}) --- formal definitions are given in \autoref{def}. A \emph{path partition} (pp) of a (directed) raph~$G$ is a partitioning of the vertex set into (directed) paths. The \emph{path partition number} of $G$ is the smallest size of a pp of~$G$. Similar definitions apply to ipp and spp. \textsc{PP} is studied extensively (often under the names \textsc{Path Cover} or \textsc{Hamiltonian Completion}) on many graph classes, see for example~\cite{boesch1974covering,DBLP:journals/siamdm/ChangK96,corneil2013ldfs,DamDKS92,franzblau_raychaudhuri_2002,goodman1974hamiltonian,kundu1976linear,DBLP:journals/dam/PanC05} and references therein.

We give a wide range of results in this paper including complexity (\NP- or \W{1}-hardness) and algorithms (polynomial time or \FPT); see \autoref{tab:survey-PPP} for a summary of our results. The types of graphs we consider are general directed, directed acyclic (DAG), general undirected and bipartite undirected graphs. 
We also consider some structural parameters like the neighborhood diversity 
and vertex cover number.

The three problems considered are all \NP-hard. They are mostly studied on undirected graphs, and unless stated otherwise, the following references are for undirected graphs. \textsc{PP} can be seen as an extension of \textsc{Hamiltonian Path} and is thus \NP-hard, even for one path (while for one path, \textsc{SPP} and \textsc{IPP} are very easy, as it suffices to check whether the whole graph is a path). Similarly, \textsc{IPP} is \NP-hard, even for two paths~\cite{le2003splitting}. Recently, \textsc{SPP} was proved to be \NP-hard~\cite{pmanuelisometric}, but note that, as opposed to the two other problems, it is polynomial-time solvable for any constant number of paths by an \XP algorithm~\cite{dumas2024graphs}. On trees, the three problems are equivalent, and are solvable in polynomial time (as shown in the 1970s
in~\cite{boesch1974covering,goodman1974hamiltonian,kundu1976linear,skupien1974path}, see also~\cite{franzblau_raychaudhuri_2002} for a more recent improved algorithm). As a special case of \textsc{Hamiltonian Path}, it is known that \textsc{PP} is \NP-hard for many graph classes such as planar graphs~\cite{DBLP:journals/siamcomp/GareyJT76} or chordal bipartite graphs~\cite{DBLP:journals/dm/Muller96a}. On the other hand, it can be solved in polynomial time for some graph classes, such as cographs~\cite{DBLP:journals/siamdm/ChangK96}, cocomparability graphs~\cite{corneil2013ldfs} or graphs whose blocks are cycles, complete graphs or complete bipartite graphs~\cite{DBLP:journals/dam/PanC05} (this includes trees, cactii and block graphs). We refer to~\cite{DBLP:journals/dam/PanC05} for further references. \textsc{SPP} remains \NP-hard on split graphs but becomes polynomial-time solvable on cographs and chain graphs~\cite{DBLP:conf/mfcs/ChakrabortyMOPR24}. \textsc{IPP} can also be solved in linear time for graphs whose blocks are cycles, complete graphs or complete bipartite graphs~\cite{DBLP:journals/tcs/PanC07} and on cographs and chain graphs~\cite{DBLP:conf/mfcs/ChakrabortyMOPR24}, but no further positive results for special graph classes seem to be known. For directed graphs, it is known that \textsc{PP} can be solved in polynomial time for DAGs~\cite[Probl.\,26-2]{CLRS3} but \textsc{IPP} is \NP-hard on DAGs, even for three paths~\cite{LafMou2024}.

For a detailed survey on these types of problems (both partitioning and covering versions), see~\cite{manuel2018revisiting}, and also~\cite{DBLP:journals/dam/AndreattaM95}. 

Although every shortest path is an induced path and every induced path is a path, the reverse are not true. In this sense, induced paths can be seen as an intermediate notion between shortest paths and unrestricted paths. However, this intuition may fail in certain contexts. Indeed, some seemingly simple problems related to paths are easy for unrestricted or shortest paths, but become hard for induced paths. For example, given a graph $G$ and three vertices $x,y,z$, determining whether there is an induced path from vertex $x$ to vertex $y$ going through vertex $z$ is \NP-complete~\cite{M10}, while this task is polynomial-time both for unrestricted paths and for shortest paths.

\medskip

\noindent\textbf{Further related work.} As sketched in~\cite{franzblau_raychaudhuri_2002}, the PP problem is equivalent to the \textsc{Hamiltonian Completion} problem, asking to add at most~$k$ edges/arcs to a (directed) graph to guarantee the existence of a Hamiltonian path. Besides \textsc{Hamiltonian Path}, \textsc{PP} is also directly connected to other graph problems such as $L(2,1)$-labelings~\cite{PP-L21} or the metric dimension~\cite{MD-PP}. It is the object of the Gallai-Milgram theorem, stating that for any minimal path partition of a directed graph, there is an independent set intersecting each path~\cite{GM60}, see also~\cite[Section 2.5]{D17}.

Variations of PP and IPP where the paths must have a prescribed length are also studied, see for example~\cite{DBLP:journals/ol/LiYL24,monnot2007path,DBLP:journals/tcs/Steiner03} for undirected graphs, and~\cite{DBLP:journals/iandc/ChenCKLXZ24,DBLP:conf/iwoca/EtoKLMO24} for directed graphs.

The covering versions of PP and IPP (where the paths need not necessarily be disjoint) seem to be very little studied~\cite{manuel2018revisiting}, however we can mention some works for PP on DAGs, which is polynomial-time solvable and has numerous practical applications (see~\cite{dagPC} and references therein). On the other hand, the covering version of \textsc{SPP} was recently studied (often under the name \textsc{Isometric Path Cover}), see \cite{DBLP:journals/ipl/PanC05} for a linear-time algorithm on block graphs, \cite{DBLP:conf/mfcs/ChakrabortyMOPR24} for algorithms on cographs and chain graphs, \cite{dumas2024graphs} for an \XP algorithm, \cite{foucaud2022} for \NP-hardness on chordal graphs and approximation algorithms for chordal graphs and other classes, and \cite{TG21} for a $\log n$-factor approximation algorithm. This problem is connected to the \emph{Cops and Robber game}, where a robber and a set of cops alternatively move in a graph (a move is to reach a neighboring vertex); the robber tries to evade the cops indefinitely, while they try to reach (catch) the robber. As shown in~\cite{AF84,FF01}, if we are given a set of shortest paths covering all the vertices, we can assign a cop to each path and let him patrol along it (while staying as close as possible to the robber). Eventually, they will catch the robber, and thus the smallest size of such a shortest path cover is an upper bound to the smallest number of cops required to catch the robber. This problem is also connected to decomposition results in structural graph theory~\cite{dujmovic2020planar} and to machine learning applications~\cite{TG21,TG22}.

The versions of PP and SPP where, on the other hand, covering is not required (and the endpoints of the solution paths, called \emph{terminals}, are prescribed in the input) are widely studied 
as \textsc{Disjoint Paths (DP)} \cite{ROBERTSON199565} (sometimes also called \textsc{Linkage}) and 
\textsc{Disjoint Shortest Paths (DSP)} \cite{DBLP:journals/siamdm/BentertNRZ23,lochet2021polynomial}. Both these problems are \NP-complete and have applications in network routing problems and VLSI design~\cite{PFVLSIbook}. \textsc{DP} in particular has been extensively studied, due to its deep connections with structural graph theory: as part of the celebrated Graph Minor project, Robertson and Seymour showed that for undirected graphs, \textsc{DP} is in \FPT, parameterized by the number of paths~\cite{ROBERTSON199565}, contrasting \textsc{PP}. An improved \FPT algorithm was given in~\cite{kawarabayashi2012disjoint}. Algorithms were also designed for planar directed graphs~\cite{ForHopWyl80} (\FPT), graphs of bounded directed treewidth~\cite{DBLP:journals/jct/JohnsonRST01} (\XP) and other classes~\cite{DBLP:journals/jgt/Bang-JensenCM17}. Recently, \textsc{DSP} on undirected graphs, which has applications in artificial intelligence~\cite{GDLHSH18}, was shown to have an \XP algorithm and to be \W{1}-hard when parameterized by the number of paths~\cite{lochet2021polynomial}; an improved algorithm was given in~\cite{DBLP:journals/siamdm/BentertNRZ23}, and the directed graph case is studied in~\cite{DBLP:conf/esa/Berczi017}.

A strengthening of the DP problem (motivated by the task of detecting induced subgraphs or induced subdivisions of graphs) is studied under the name \textsc{Induced Disjoint Paths}~\cite{GolPauLee2022,KK12,MarPSL2023}: here, given the terminals, one is looking for disjoint unrestricted paths connecting the terminals, but moreover, no edge should connect two vertices on different paths. This problem is \NP-hard even for two paths~\cite{KK12}.

Some further restricted variations around disjoint paths are studied for example in~\cite{DBLP:journals/algorithmica/BelmonteHKKKKLO22} (for unrestricted paths) and in~\cite{DBLP:journals/algorithmica/AraujoCMSS20} (for induced paths).

\begin{table}[tbh] \centering  
 \begin{tabular}{|c|c|c|c|}
 \hline
 \text{parameter}  & \textsc{PP}  & \textsc{SPP} & \textsc{IPP} \\ \hline\hline
    none (UG/DG)
    & \text{\NP-c.}\,\cite{GarJoh79}         & \text{\NP-c.\,\cite{pmanuelisometric} } & \text{\NP-c.}\,\cite{le2003splitting} \\ \hline
       none (bipartite UG)      & \text{\NP-c. \cite{krishnamoorthy1975np,le2003splitting}}   & \textbf{\NP-c.} & open                                                     \\ \hline
    solution size~$k$ (UG) & \text{\paraNP-h.} \cite{GarJoh79}    & \textbf{in \XP}                                                           & \text{\paraNP-h.} \cite{le2003splitting}                                               \\ \hline
        solution size~$k$ (DG) & \text{\paraNP-h.} \cite{GarJoh79}    & \begin{tabular}[c]{@{}c@{}}open\\ \textbf{in \XP(SG)}\end{tabular}                                                           & \text{\paraNP-h.} \cite{le2003splitting}                                               \\ \hline
    solution size~$k$ (DAG)        & \begin{tabular}[c]{@{}c@{}}polynomial,\\ see~\cite{CLRS3}, \\Problem 26-2 \end{tabular}   & \begin{tabular}[c]{@{}c@{}}\textbf{\NP-c.}\\ \textbf{\W{1}-h.}\\ \textbf{in \XP}\end{tabular} & \begin{tabular}[c]{@{}c@{}}\textbf{\NP-c.}\\ \textbf{\W{1}-h.} \\ \text{\paraNP-h.} \cite{LafMou2024}\end{tabular} \\ \hline

    \scalebox{.9}{neighborhood diversity (UG)}             & {\FPT} \cite{GajLamOrd2013}          & \textbf{\FPT}                                                        & 
    \textbf{\FPT}                                                   \\ 
    \hline
        \scalebox{.9}{neighborhood diversity (DG)}     &  open         & \textbf{\FPT}                                                        &     \textbf{\FPT}                                                   \\     \hline
        \scalebox{.86}{vertex cover number (UG/DG)}     &  \textbf{\FPT}         & \textbf{\FPT}                                                        &     \textbf{\FPT}                                                   \\\hline

dual~$n-k$  (UG/DG)         & \textbf{\FPT}         & \textbf{\FPT}                                                        & \textbf{\FPT}                                                 \\ \hline
    \end{tabular}

    \caption{Summary of known results concerning path partitioning problems. The abbreviations c. and h. refer to completeness and hardness, respectively. In parentheses, we put further input specifications, with UG, DG and SG referring to undirected graphs, directed graphs and special graph classes, respectively. Our results are highlighted in bold face.\\[-5ex]}
    \label{tab:survey-PPP}
\end{table}

\medskip

\noindent\textbf{Our contribution.} 
\autoref{tab:survey-PPP} summarizes known results about the three problems, with our results are highlighted in bold. We fill in most of the hitherto open questions concerning variations of \textsc{PP}, \textsc{SPP} and \textsc{IPP}, e.g., 
we show that \textsc{SPP} has a poly-time algorithm for a fixed number of paths (on undirected graphs, DAGs, and special classes of directed graphs). This is surprising as both \textsc{PP} and \textsc{IPP} are \NP-hard on undirected graphs for $k = 1$ and for $k = 2$, respectively, see~\cite{GarJoh79} and~\cite{le2003splitting}. To prove this, we use existing \XP (or faster) algorithms for the related \textsc{Disjoint (Shortest) Paths} problem. 
Many of our results concern our problems restricted to DAGs. Notice that \textsc{PP} has a polynomial-time algorithm when restricted to DAGs (using Maximum Matching~\cite[Probl.\,26-2]{CLRS3}), but the complexity for \textsc{IPP} and \textsc{SPP} was open for such inputs. We show that \textsc{IPP} and \textsc{SPP} are \NP-hard even when restricted to planar DAGs whose underlying graph is bipartite and have maximum degree~4.\footnote{We remark that in the conference version of this paper~\cite{FernauFMPN23}, this result was stated by mistake for graphs of maximum degree~3, but the proof is the same here and the statement is corrected.} We strengthen this result using a similar construction as in the classic proof of \textsc{DP} being \W{1}-hard on DAGs by Slivkins~\cite{Slivkins2010}, to show that \textsc{IPP} and \textsc{SPP} are \W{1}-hard on DAGs. As mentioned above,  Manuel recently showed that \textsc{SPP} is \NP-hard on undirected graphs \cite{pmanuelisometric}. We extend this result to show that \textsc{SPP} is \NP-hard even when restricted to bipartite graphs that are sparse (they have degeneracy at most~5). The complexity of these problems has not yet been studied when parameterized by structural parameters. 
We show that \textsc{IPP} and \textsc{SPP} both belong to \FPT when parameterized by standard structural parameters like vertex cover and neighborhood diversity using the technique of \textsc{Integer Linear Programming}. 
Also, we lift these considerations to directed graphs. To this end, we introduce the notion of \emph{directed neighborhood diversity} in this paper, which may be useful for other problems on directed graphs. We also obtain easy \FPT\ algorithms for \textsc{PP}, \textsc{IPP} and \textsc{SPP} on both undirected and directed graphs when parameterized by the vertex cover number of the (underlying) graph. We further obtain a non-trivial \FPT\ algorithm for \textsc{PP} on undirected graphs~$G$ with vertex cover number $\vc(G)$, with the improved running time $\Oh^*\left(2^{\Oh(\vc(G)\log(\vc(G)))}\right)$. 
Moreover, when considering the dual parameterization (graph order minus number of paths), we show that all three variants, \textsc{PP}, \textsc{IPP},  and \textsc{SPP}, are in \FPT for undirected graphs. This is also the case for directed graphs (except for \textsc{PP}, which is left open). Finally, we also remark that many of our results apply to the purely ``vertex covering'' versions of \textsc{PP}, \textsc{IPP},  and \textsc{SPP}, and to their intermediate ``vertex covering using edge-disjoint paths'' versions.

It is interesting to note the differences in the complexities of the three problems on different input classes. For example, we will see that \textsc{SPP} can be solved in \XP-time on undirected graphs when parameterized by solution size, while this is not possible for the other two problems. For DAGs, \textsc{PP} is polynomial-time solvable, but the other two problems are \NP-hard. In a way, as any shortest path is induced, \textsc{IPP} can be seen as an intermediate problem between \textsc{PP} and \textsc{SPP}. Thus, it is not too surprising that, when the complexity of the three problems differs, \textsc{IPP} sometimes behaves like \textsc{PP}, and sometimes, like \textsc{SPP}. It would be interesting to find a situation in which all three problems behave differently.

\medskip

\noindent\textbf{Organization of the paper.} We start with formally defining the studied problems in \autoref{def}.  Then we focus our attention on \NP-hardness results in \autoref{np-hard}, showing that \textsc{IPP} and \textsc{SPP} are \NP-hard when the input graph is a DAG, and that \textsc{SPP} is \NP-hard when the input graph is a bipartite graph. Our main result showing that \textsc{IPP} and \textsc{SPP} are \W{1}-hard for the standard parameter~$k$ (number of paths) is found in \autoref{w1hard}. We give an \XP algorithm for \textsc{SPP} in \autoref{xp-algo}. We then give an \FPT algorithm for \textsc{IPP} and \textsc{SPP} when parameterized by neighborhood diversity in \autoref{nd-param}, as well as a non-trivial direct \FPT algorithm for \textsc{PP}, parameterized by vertex cover. These results are then applied in \autoref{sec:duals} to prove \FPT results for the dual parameterization. In \autoref{sec:ED}, we discuss the (edge-disjoint) covering versions of the problems, and argue that most of our results translate to this setting as well. We conclude in \autoref{conclusion}.

\section{Definitions} \label{def}

We are using standard terminology concerning graphs, classical and parameterized complexity and we will not iterate this standard terminology here. In particular, a \emph{path} $P$ can be described by a sequence of non-repeated vertices such that there is an edge between vertices that are neighbors in this sequence. Sometimes, it is convenient to consider $P$ as a set of vertices. 
We are next defining the problems considered in this paper. All problems can be considered on undirected or directed graphs, or also on directed acyclic graphs (DAG). We will specify this by prefixing \textsc{U}, \textsc{D}, or \textsc{DAG} to our problem name abbreviations.

We say that a sub-graph~$G'$ of $G=(V,E)$ \emph{spans} $G$ if its vertex set is~$V$. 
In other words, $G'$ is a partial graph of~$G$. 
\problemdef{Path Partition}{(\textsc{PP} for short)}{A graph $G$, a non-negative integer~$k$}{Are there pairwise vertex-disjoint paths $P_1, \dots , P_{k'}$, with $k'\leq k$, such that, together, these paths span~$G$\,?}

A path~$P$ is an \emph{induced path} in~$G$ if the induced graph $G[P]$ is a path; here, $P$ is considered as a vertex set.   

\problemdef{Induced Path Partition}{(\textsc{IPP} for short)}{A graph $G$, a non-negative integer~$k$}{Are there pairwise vertex-disjoint paths $P_1, \dots , P_{k'}$, with $k'\leq k$, that are induced paths, and such that, together, these paths span~$G$\,?}

A \emph{shortest path} is a path with end-points $u,v$ that is shortest among all paths from $u$ to~$v$. The greatest length of any shortest path in a graph~$G$ is also known as its \emph{diameter}, written as $\diam(G)$.

\problemdef{Shortest Path Partition}{(\textsc{SPP} for short)}{A graph $G$, a non-negative integer~$k$}{Are there pairwise vertex-disjoint paths $P_1, \dots , P_{k'}$, with $k'\leq k$, that are shortest paths, and such that, together, these paths span~$G$\,?}


\begin{remark}
\label{rem:combi}
The following known combinatorial properties are helpful to connect the different yet related problems:
\begin{enumerate}
 \item 
Every shortest path is also an induced path.
\item 
 Every induced path of length at most two is also a shortest path.
\item  In bipartite graphs, an induced path of length three is a shortest path.
\item 
If $G=(V,E)$ is a subgraph of $H$ and $P$ is a shortest path in $H$ with only vertices of $V$, then $P$ is also a shortest path in~$G$.
\end{enumerate}    
\end{remark}

Omitting the vertex-disjointness condition, we arrive at three \emph{covering versions} instead:
\problemdef{Path Cover}{(\textsc{PC} for short)}{A graph $G$, a non-negative integer~$k$}{Are there  paths $P_1, \dots , P_{k'}$, with $k'\leq k$, such that, together, these paths span~$G$\,?}

\problemdef{Induced Path Cover}{(\textsc{IPC} for short)}{A graph $G$, a non-negative integer~$k$}{Are there  paths $P_1, \dots , P_{k'}$, with $k'\leq k$, that are induced paths and such that, together, these paths span~$G$\,?}

\problemdef{Shortest Path Cover}{(\textsc{SPC} for short)}{A graph $G$, a non-negative integer~$k$}{Are there  paths $P_1, \dots , P_{k'}$, with $k'\leq k$, that are shortest paths and such that, together, these paths span~$G$\,?}


Let us also define the non-covering versions of \textsc{PP}, \textsc{IPP} and \textsc{SPP}, that will be used in the paper.

\problemdef{Disjoint Paths}{(\textsc{DP} for short)}{A graph $G$, pairs of terminal vertices $\{(s_1, t_1) , \dots, (s_k , t_k) \}$}{Are there pairwise vertex-disjoint paths $P_1, \dots , P_k$ such that, for $1\leq i\leq k$, the end-points of~$P_i$ are $s_i$~and~$t_i$?}

Analogously to \textsc{DP}, we can define the problems \textsc{Disjoint Induced Paths}\footnote{This problem should not be confused with \textsc{Induced Disjoint Paths}, see \cite{GolPauLee2022,KK12,MarPSL2023}, where it is required that any two (unrestricted) solution paths have no adjacent vertices.} and \textsc{Disjoint Shortest Paths} as follows.

\problemdef{Disjoint Induced Paths}{(\textsc{DIP} for short)}{A graph $G$, pairs of terminal vertices $\{(s_1, t_1) , \dots, (s_k , t_k) \}$}{Are there pairwise vertex-disjoint induced paths $P_1, \dots , P_k$ such that, for $1\leq i\leq k$, the end-points of~$P_i$ are $s_i$~and~$t_i$?}

\problemdef{Disjoint Shortest Paths}{(\textsc{DSP} for short)}{A graph $G$, pairs of terminal vertices $\{(s_1, t_1) , \dots, (s_k , t_k) \}$}{Are there pairwise vertex-disjoint shortest paths $P_1, \dots , P_k$ such that, for $1\leq i\leq k$, the end-points of~$P_i$ are $s_i$~and~$t_i$?}

\begin{remark} \label{rem:DP-DIP}
The problems \textsc{DIP} and \textsc{DP} are \emph{equivalent}  when the inputs are undirected graphs or DAGs (in the sense that the \yes-instances are the same for both problems), but not for general directed graphs.
\end{remark}
\begin{proof}
Namely, if an input is a \yes-instance of \textsc{DIP}, then trivially it is also a \yes-instance of \textsc{DP}. But also if an input is a \yes-instance of \textsc{UDP} or \textsc{DAGDP}, then it is also a \yes-instance of \textsc{UDIP} or \textsc{DAGDIP}, as we could take each path~$P$ in the solution~$\mathbb P$ of \textsc{UDP} or \textsc{DAGDP} and find a path~$P'$ induced by some subset of vertices of~$P$; this can be done by starting a breadth-first search at one endpoint of the path~$P$ to reach the other endpoint and choosing a shortest path. This way, we get a set~$\mathbb{P}'$ of induced paths as a solution to \textsc{UDIP} or \textsc{DAGDIP}. This does not hold true for general directed graphs, as a solution to \textsc{DDP} does not imply a solution to \textsc{DDIP}. For example, consider a directed graph with three vertices $a_1,a_2,a_3$ and arcs $(a_1,a_2), (a_2,a_3), (a_3,a_1)$ with $(s_1,t_1) = (a_1,a_3)$ as the prescribed path endpoints.
\end{proof}

Interestingly, and contrasting \autoref{rem:DP-DIP}, we will see that the complexities of \textsc{DAGPP} and \textsc{DAGIPP} differ drastically, see \autoref{tab:survey-PPP}. 

\subsection{Graph notions}

A graph is \emph{$d$-degenerate} if every induced subgraph has a vertex of degree at most $d$.
The \emph{vertex cover number} of a graph $G$ is the smallest size of a vertex cover of $G$, that is, a set $S$ of vertices of $G$ such that each edge of $G$ intersects $S$.
The \emph{neighborhood diversity}~\cite{lampis2012algorithmic} of a graph $G$ (or just $\nd(G)$) is the number of equivalence classes of the following equivalence relation: two vertices $u, v \in V$ are equivalent (we also say that they have the \emph{same type}) if they have the same neighborhoods except for possibly themselves, i.e., if $N(v) \setminus \{u\} = N(u) \setminus \{v\}$. The equivalence classes, which we call \emph{neighborhood diversity classes}, form either cliques or independent sets and all vertices in one class are pairwise \emph{twins} (vertices with either the same closed neighborhood or the same open neighborhood)~\cite{Kou2013}. More information on this parameter can be found in~\cite{Kou2013}. Notice that $V$ can be partitioned into vertices of the same type in linear time using partition refinement~\cite[Algorithm 2]{DBLP:conf/stacs/HabibPV98}.

\section{\NP-hardness results} \label{np-hard}

\noindent
This section contains several \NP-hardness reductions for the studied problems.

\subsection{SPP and IPP on DAGs}

A well-known result related to $\textsc{PC}$ in DAGs is Dilworth's theorem: the minimal size of a directed path cover equals the maximal cardinality of an independent set (or, in the original language of posets, the minimum size of a chain cover is equal to the maximum size of an antichain)~\cite{dilworth2009decomposition}. Fulkerson~\cite{fulkerson1956note} gave a constructive proof of this theorem, from which it follows that the \textsc{DAGPC} problem can be reduced to a maximum matching problem in a bipartite graph. See also~\cite{dagPC} for further information. Even \textsc{DAGPP} can be solved in polynomial time by a similar method~\cite[Problem 26-2]{CLRS3}. We show that, in contrast, \textsc{DAGSPP} and \textsc{DAGIPP} are \NP-hard even when restricted to planar bipartite DAGs.

\begin{theorem}\label{thm:DAGSPP-NPhard}
\textsc{DAGSPP} and \textsc{DAGIPP} are \NP-hard even when the inputs are restricted to planar bipartite DAGs of maximum degree~4.
\end{theorem}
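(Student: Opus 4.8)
The plan is to give a polynomial-time many-one reduction from a suitably restricted, \NP-hard variant of \textsc{3-Sat} --- namely a planar version in which every variable occurs in a bounded number of clauses --- building from a formula $\varphi$ a graph $G_\varphi$ together with a budget $k$ so that $\varphi$ is satisfiable if and only if $V(G_\varphi)$ can be partitioned into at most $k$ shortest (equivalently, induced) directed paths. The starting observation is that, on a DAG, \textsc{DAGPP} is easy because minimising the number of paths amounts to choosing a maximum set of vertex-disjoint arcs (a matching in the split graph); the extra leverage we exploit is that for \textsc{DAGSPP} and \textsc{DAGIPP} a selected path is admissible only if it is shortest, respectively chordless. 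I would therefore use short ``detour'' configurations as \emph{conflict gadgets}: to forbid a solution path from running through two consecutive arcs inside $p\to q\to r\to s$, I add the single arc $p\to s$. Since the endpoints of a length-$3$ path in a bipartite graph lie on opposite sides, this arc keeps $G_\varphi$ bipartite, and it simultaneously destroys shortestness (now $d(p,s)=1$) and chordlessness (the arc $p\to s$ is a chord). This is precisely the mechanism that lets one construction serve both problems: I would engineer the gadgets so that every path in a minimum solution has length at most three, whence by \autoref{rem:combi}(2)--(3) being shortest and being induced coincide on $G_\varphi$.

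First I would fix a global topological order of all vertices and orient every arc --- including every shortcut arc, which always points from the earlier to the later endpoint --- forward in this order, so that $G_\varphi$ is acyclic by construction. Then I would design a \emph{variable gadget} for each $x$ admitting exactly two locally consistent ways of being covered by the permitted short paths, to be read as $x=\textsf{true}$ and $x=\textsf{false}$, and a \emph{clause gadget} for each clause that can be completed within the budget if and only if at least one of its literal wires arrives in the satisfying state. The wires connecting a variable gadget to the clause gadgets in which it occurs are laid out following a fixed planar embedding of the variable--clause incidence graph of the planar formula, so that no two wires cross; the bounded number of occurrences per variable, together with subdividing any vertex that would otherwise become too busy, keeps the maximum degree at $4$, and those subdivisions simultaneously guarantee bipartiteness.

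With the budget $k$ set to the total number of vertices minus the number of arcs in an intended ``fully chained'' cover, I would prove the two directions. For the forward direction, a satisfying assignment tells each variable gadget which of its two configurations to adopt and tells each clause gadget which satisfied literal to use to absorb its remaining vertices, yielding a partition meeting the budget in which, by design, every path is a shortest (hence induced) path. For the converse, a partition into at most $k$ shortest or induced paths cannot use any forbidden detour (those are no longer shortest or chordless), so every gadget is forced into one of its two intended configurations and every clause gadget is completed; reading off the configurations then yields a satisfying assignment. A final routine check confirms that $G_\varphi$ is a planar bipartite DAG of maximum degree $4$.

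I expect the main obstacle to be the global control of distances. The shortcut arc $p\to s$ makes its local detour non-shortest, but one must verify that \emph{no} other, far-away part of $G_\varphi$ either creates an unintended short route that would spoil an intended solution path, or, conversely, leaves a forbidden long path accidentally shortest; in a DAG with many interacting gadgets this forces the layering and the bipartite structure to pin down all relevant pairwise distances. The second delicate point is keeping the three class restrictions --- planarity, bipartiteness, and maximum degree $4$ --- satisfied \emph{simultaneously} with the acyclic orientation and with the ``all minimum-solution paths have length at most three'' property that makes \textsc{DAGSPP} and \textsc{DAGIPP} equivalent on the produced instance.
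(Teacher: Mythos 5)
Your overall plan is headed in a reasonable direction, and two of your mechanisms genuinely match the paper's: the observation that a single ``chord'' arc simultaneously destroys shortestness and chordlessness, and the use of \autoref{rem:combi} to make one construction serve both \textsc{DAGSPP} and \textsc{DAGIPP} once all solution paths are forced to be short in a bipartite graph. However, the proposal has a genuine gap: the variable and clause gadgets, which are the entire substance of such a reduction, are never constructed. You describe what properties they should have (``exactly two locally consistent configurations'', ``completable iff some literal wire arrives satisfied'') but give no vertices, no arcs, and no verification that these properties are achievable while keeping planarity, bipartiteness, acyclicity, maximum degree~4, and the ``all paths have length at most three'' property all at once --- you yourself flag this as the main obstacle, which is an admission that the proof is not done. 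Your budget ``$|V|$ minus the number of arcs in a fully chained cover'' is also too vague to support the converse direction: the paper's argument hinges on a precise count ($|V(G')|=3p+9q$ vertices versus $k=p+3q$ paths forces an \emph{average} of three vertices per path, and then a short case analysis eliminates the handful of 4-vertex candidate paths, so \emph{every} path has exactly three vertices). Without an analogous exact counting argument you cannot rule out unintended long or degenerate paths, and hence cannot force the gadgets into their two intended configurations.

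There is also a structural mismatch in your choice of source problem. The paper reduces from \textsc{Planar 3-Dimensional Matching}, whose ``each element covered exactly once'' semantics aligns perfectly with a vertex \emph{partition}: a triple gadget either keeps all its paths internal (triple not selected) or extends three of them outward to absorb the three element vertices (triple selected), and the partition constraint automatically enforces that each element is absorbed by exactly one selected triple. A clause in 3-SAT, by contrast, is an OR: if two or three literals of a clause are simultaneously true, your clause gadget must absorb its leftover vertices via exactly one of them while the other satisfied literals' wires must still be partitionable on their own --- this over-coverage issue is a classical source of broken partition reductions and you do not address it. So the proposal is not a correct proof as it stands; to repair it you would either need to exhibit concrete gadgets handling the OR semantics (nontrivial), or switch to an exact-cover-type source problem as the paper does.
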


\begin{proof}
Our reduction is adapted from \cite{monnot2007path,DBLP:journals/tcs/Steiner03}.
We  reduce from the \textsc{Planar $3$-Dimensional Matching} problem, or \textsc{Planar $3$-DM}, which is \NP-complete (see~\cite{dyer1986planar}), even when each element occurs in either two or three triples.
 A \textsc{$3$-DM} instance consists of three disjoint sets $X, Y, Z$ of equal cardinality $p$ and a set~$T$ of triples from $X \times Y \times Z$. Let $q= |T|$. The question is if there are $p$ triples which contain all elements of $X$, $Y$ and~$Z$. We associate a bipartite graph with this instance.
We assume that the four sets $T$, $X$, $Y$ and $Z$ are pairwise disjoint. We also assume that each element of $X\cup Y\cup Z$ belongs to at most three triples. We have a vertex for each element in $X, Y, Z$ and each triple in $T$. There is an edge connecting triples to elements if and only if the element belongs to the triple. This graph~$G$ is bipartite with vertex bipartition of $T, X \cup Y \cup Z$, and has maximum degree~3. We say the instance is planar if $G$ is planar. Given an instance of \textsc{Planar $3$-DM}, $G = (T , X \cup Y \cup Z , E)$,  and a planar embedding of it,\footnote{This planar embedding can be computed in linear time if necessary~\cite{DBLP:journals/tcs/ShihH99}.}.  we build an instance $G' = (V', E')$ of \textsc{DAGSPP}. 

\smallskip\noindent
\textbf{Construction:} We replace each $v_i = (x, y, z) \in T$, where $x \in X$, $y \in Y$, $z \in Z$, with a gadget $H(v_i)$ that consists of $9$ vertices named $l_{jk}^{i}$ where $1 \leq j,k \leq 3 $ and with edges as shown in  \autoref{fig:DAGSPP-vertex-gadget}; if the planar embedding has $x, y, z$ in clockwise order seen as neighbors of~$v_i$, then we add the arcs $(l_{12}^{i},x)$, $(l_{22}^{i},z)$ and $(l_{32}^{i},y)$, otherwise, we add the arcs  $(l_{12}^{i},x)$, $(l_{22}^{i},y)$ and $(l_{32}^{i},z)$.  

\begin{figure}[tbh]
    \centering
   \scalebox{1.1}{\tikzset{every picture/.style={line width=0.75pt}} 

\begin{tikzpicture}[
roundnode/.style={circle, draw=black,  thick, minimum size=5mm, scale=0.5}]

\node[roundnode,scale = 1.4] at (-5.5,1)     (x1)                              {$x$};

\node[roundnode,scale = 1.4] at (-4.3,1)     (y1)                              {$y$};

\node[roundnode,scale = 1.4] at (-2.6,1)     (z1)                              {$z$};

\node[roundnode] at (-6.5,2)     (l11)                              {$l_{11}^{i}$};

\node[roundnode] at (-5.5,2)     (l12)                              {$l_{12}^{i}$};

\node[roundnode] at (-5.5,3)     (l13)                              {$l_{13}^{i}$};

\node[roundnode] at (-3.3,2)     (l21)                              {$l_{21}^{i}$};

\node[roundnode] at (-4.3,2)     (l22)                              {$l_{22}^{i}$};

\node[roundnode] at (-4.3,3)     (l23)                              {$l_{23}^{i}$};

\node[roundnode] at (-1.6,2)     (l31)                              {$l_{31}^{i}$};

\node[roundnode] at (-2.6,2)     (l32)                              {$l_{32}^{i}$};

\node[roundnode] at (-2.6,3)     (l33)                              {$l_{33}^{i}$};

\draw[-{Stealth[scale=0.8]}] (l11) to (l12);
\draw[-{Stealth[scale=0.8]}] (l12) to (l13);
\draw[-{Stealth[scale=0.8]}] (l12) to (x1);

\draw[-{Stealth[scale=0.8]}] (l21) to (l22);
\draw[-{Stealth[scale=0.8]}] (l22) to (l23);
\draw[-{Stealth[scale=0.8]}] (l22) to (y1);
\draw[-{Stealth[scale=0.8]}] (l21) to (l33);

\draw[-{Stealth[scale=0.8]}] (l31) to (l32);
\draw[-{Stealth[scale=0.8]}] (l32) to (l33);
\draw[-{Stealth[scale=0.8]}] (l32) to (z1);

\draw[-{Stealth[scale=0.8]}] (l13) to (l23);
\draw[-{Stealth[scale=0.8]}] (l23) to (l33);

\draw [-{Stealth[scale=0.8]}] (l11.north) to [out=90,in=150] (l23.north);

\end{tikzpicture}}   
    \caption{The vertex gadget, replacing $v_i$ in~$G$ with nine vertices in $G'$,} as defined in the proof of \autoref{thm:DAGSPP-NPhard}.
    \label{fig:DAGSPP-vertex-gadget}
\end{figure}

\smallskip\noindent
We observe the following two properties of $G'$.

\begin{claim}\label{claim:DAG-reduction-props}
$G'$ is a planar DAG with maximum degree~4 in which every shortest/induced path is of length at most $3$, and the underlying undirected graph of~$G'$ is bipartite.
\end{claim}
\begin{proof}
    The first four conditions are easy to see in the given construction. We will prove the bipartiteness claim by giving a proper 2-coloring. We will color all vertices which represent the elements of sets $X$ and~$Z$ with color $1$, and all vertices which represent elements of set~$Y$ with color~$2$. Then we observe that within the gadget $H(v_i)$, the vertices $l_{12},l_{23},l_{21}, l_{32}$ have color $2$, and the remaining vertices have color~$1$. This  coloring is a proper $2$-coloring of~$G'$. \renewcommand{\qedsymbol}{$\Diamond$}
\end{proof}

\begin{claim}\label{claim:DAG-reduction-mainclaim}
The \textsc{Planar $3$-DM} instance has a solution if and only if $G'$ can be partitioned into $p + 3q$ shortest/induced paths.
\end{claim}
\begin{proof}
Since $|V(G')| = 3p + 9q $, if $G'$ can be partitioned into $p + 3q$ shortest/induced paths, we observe that the average path length has to be 2. Observe also that none of the paths can contain more than one of the original elements of $X \cup Y \cup Z$. 

The only possible shortest/induced paths of length 3 are of the form $P^i = l_{12}^{i} l_{13}^{i} l_{23}^{i} l_{33}^{i}$, but if one takes this path into the solution, to cover vertex $l_{11}^{i}$, one has to include this vertex as a single vertex path, hence the average path length of such a solution would be strictly less than 2. We can also eliminate paths $P^i_1 = l_{11}^{i} l_{23}^{i} l_{33}^{i}$ and $P^i_2 = l_{22}^{i} l_{23}^{i} l_{33}^{i}$ from the solution of $G'$ by similar arguments. More precisely, taking~$P^i_1$ would force us to also take the path $l_{12}^{i}l_{13}^{i}$, and taking~$P^i_2$ even makes us select the singe vertex path $l_{21}^{i}$. Again, the average path length of such a solution would be strictly less than~$2$.

In summary, each shortest or induced path in the solution contains exactly three vertices, and each gadget $H(v_i)$ is partitioned into $P_3$-paths in one of the two ways shown in \autoref{twopart}. 

We now describe how to read off a solution of the original \textsc{Planar $3$-DM} instance as described by~$G$ from a shortest/induced path partition of~$G'$ into $p + 3q$ shortest/induced paths.
If a gadget $H(v_i)$ is partitioned as in $(2)$ in \autoref{twopart}, we select that triple $v_i = (x, y, z) \in T$ into the solution~$S$ of~$G$,  
else we do not select the triple into the solution. Since the solution covers all vertices of $G'$, it will cover in particular all vertices from $X\cup Y\cup Z$; thus, the selected set~$S$ of triples will form a \textsc{Planar $3$-DM} solution.

Conversely, if $S$ is a solution of the \textsc{Planar $3$-DM} instance specified by~$G$, then for $v_i = (x, y, z) \in S$, partition the corresponding $H(v_i)$ as described in  $(2)$ of \autoref{twopart}, and partition every other $H(v_j)$ as shown in $(1)$ of \autoref{twopart}. This provides a shortest/induced path partition of~$G'$ into $p + 3q$ shortest paths.\renewcommand{\qedsymbol}{$\Diamond$}
\end{proof}
 
This completes our proof of the theorem.\end{proof}

\begin{figure}[!tbp] 
  \centering
  
\scalebox{1.1}{
\tikzset{every picture/.style={line width=0.75pt}} 

\begin{tikzpicture}[
roundnode/.style={circle, draw=black,  thick, minimum size=5mm, scale=0.55}]

\node[roundnode,scale = 1.4] at (-5,1)     (x1)                              {$x$};

\node[roundnode,scale = 1.4] at (-3.3,1)     (y1)                              {$y$};

\node[roundnode,scale = 1.4] at (-1.6,1)     (z1)                              {$z$};

\node[roundnode] at (-6,2)     (l11)                              {$l_{11}^{i}$};

\node[roundnode] at (-5,2)     (l12)                              {$l_{12}^{i}$};

\node[roundnode] at (-5,3)     (l13)                              {$l_{13}^{i}$};

\node[roundnode] at (-4.3,2)     (l21)                              {$l_{21}^{i}$};

\node[roundnode] at (-3.3,2)     (l22)                              {$l_{22}^{i}$};

\node[roundnode] at (-3.3,3)     (l23)                              {$l_{23}^{i}$};

\node[roundnode] at (-2.6,2)     (l31)                              {$l_{31}^{i}$};

\node[roundnode] at (-1.6,2)     (l32)                              {$l_{32}^{i}$};

\node[roundnode] at (-1.6,3)     (l33)                              {$l_{33}^{i}$};

\node[roundnode,scale = 1.4] at (1,1)     (x2)                              {$x$};

\node[roundnode,scale = 1.4] at (2.7,1)     (y2)                              {$y$};

\node[roundnode,scale = 1.4] at (4.4,1)     (z2)                              {$z$};

\node[roundnode] at (0,2)     (a11)                              {$l_{11}^{i}$};

\node[roundnode] at (1,2)     (a12)                              {$l_{12}^{i}$};

\node[roundnode] at (1,3)     (a13)                              {$l_{13}^{i}$};

\node[roundnode] at (1.7,2)     (a21)                              {$l_{21}^{i}$};

\node[roundnode] at (2.7,2)     (a22)                              {$l_{22}^{i}$};

\node[roundnode] at (2.7,3)     (a23)                              {$l_{23}^{i}$};

\node[roundnode] at (3.4,2)     (a31)                              {$l_{31}^{i}$};

\node[roundnode] at (4.4,2)     (a32)                              {$l_{32}^{i}$};

\node[roundnode] at (4.4,3)     (a33)                              {$l_{33}^{i}$};

\draw[-{Stealth[scale=1]}] (l11) to (l12);
\draw[-{Stealth[scale=1]}] (l12) to (l13);
\draw[-{Stealth[scale=1]}] (l21) to (l22);
\draw[-{Stealth[scale=1]}] (l22) to (l23);
\draw[-{Stealth[scale=1]}] (l31) to (l32);
\draw[-{Stealth[scale=1]}] (l32) to (l33);

\draw[-{Stealth[scale=1]}] (a11) to (a12);
\draw[-{Stealth[scale=1]}] (a12) to (x2);
\draw[-{Stealth[scale=1]}] (a21) to (a22);
\draw[-{Stealth[scale=1]}] (a22) to (y2);
\draw[-{Stealth[scale=1]}] (a31) to (a32);
\draw[-{Stealth[scale=1]}] (a32) to (z2);

\draw[-{Stealth[scale=1]}] (a13) to (a23);
\draw[-{Stealth[scale=1]}] (a23) to (a33);

\node at (-3.3,0.3) {$(1)$};
\node at (2.7,0.3) {$(2)$};
\end{tikzpicture}}

   \caption{Two different vertex partitions of a $H(v_i)$ gadget into 3-vertex paths, corresponding to different triple selections in the construction of \autoref{thm:DAGSPP-NPhard}.} \label{twopart}
\end{figure}
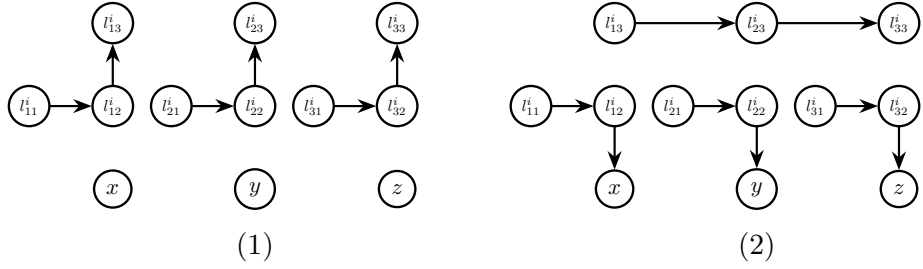

\subsection{SPP for bipartite undirected graphs}

Next, we prove that \textsc{USPP} is \NP-hard even when the input graph is restricted to bipartite $5$-degenerate graphs with diameter at most~4. To prove this, we reduce from \textsc{$4$-USPP} on bipartite graphs to \textsc{USPP} on bipartite graphs. \textsc{$4$-USPP} asks, given $G=(V,E)$, if there exists a partition~$\mathbb{P}$ of~$V$ such that each set in~$\mathbb{P}$ induces a shortest path of length 3 in~$G$. First, we show that \textsc{$4$-USPP} is \NP-hard on bipartite graphs (\autoref{exact4np}) by a reduction from \textsc{$4$-UIPP} (also known as 
\textsc{Induced $P_4$-Partition}) on bipartite graphs. 
\textsc{$4$-UIPP} asks if there exists a partition~$\mathbb{P}$ of~$V$ such that each set in $\mathbb{P}$ induces a path of length~3 in~$G$.

\begin{lemma}[\cite{monnot2007path}] \label{induced}
\textsc{$4$-UIPP} is \NP-hard for bipartite graphs of maximum degree~$3$.
\end{lemma}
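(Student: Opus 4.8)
The plan is to give a polynomial-time reduction from a restricted, bounded-occurrence satisfiability problem: concretely I would reduce from a variant of \textsc{3-SAT} in which every variable occurs in at most three clauses (which remains \NP-complete), so that the gadgets attached to each variable have bounded size and the final graph has maximum degree~$3$. An appealing alternative, which directly parallels the construction used later in \autoref{thm:DAGSPP-NPhard}, is to reduce from \textsc{Planar 3-DM} with each element in at most three triples, building one ``selection'' gadget per triple that is partitioned into induced $P_4$'s in exactly two ways (``triple selected'' versus ``not selected''), with element vertices shared across the gadgets containing them. I will describe the SAT version, because its choice gadgets are self-balanced and so avoid an awkward parity issue in the triple version: selecting a triple would transfer exactly the three shared element vertices, but a gadget cannot keep all blocks equal to $P_4$ in both modes when its internal vertex count must be divisible by $4$ in one mode and congruent to $1$ mod~$4$ in the other.

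The heart of the construction is two families of bipartite gadgets of maximum degree~$3$. For each variable $x$ I would build a \emph{choice gadget}: a bipartite, degree-$3$ subgraph whose vertices admit exactly two partitions into induced $P_4$'s, read off as $x=\textsf{true}$ and $x=\textsf{false}$. The gadget exposes, for each literal occurrence of $x$, a \emph{port} vertex that is saturated internally in exactly one of the two partitions; thus setting $x$ true leaves the positive-literal ports ``free'' (available to help a clause) and the negative-literal ports ``used'', and symmetrically for $x$ false. For each clause $C=\ell_1\vee\ell_2\vee\ell_3$ I would build a \emph{clause gadget} that can be completed into induced $P_4$'s if and only if at least one of its three attached ports is left free by the corresponding choice gadget, i.e.\ iff at least one literal of $C$ is satisfied. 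Ports are attached so that every vertex keeps degree at most~$3$ and the whole graph stays bipartite (respecting a fixed $2$-colouring in the spirit of \autoref{claim:DAG-reduction-props}), and the global vertex count is made divisible by~$4$ by adding, if needed, a constant number of disjoint $P_4$-components.

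Correctness then splits into the two usual directions. For the forward direction, given a satisfying assignment I would partition each choice gadget according to its variable's truth value and, in each clause gadget, use one free port supplied by a satisfied literal to complete that gadget's induced $P_4$-partition; together with the padding components this tiles all of $V$ by induced $P_4$'s. For the backward direction, I would argue that every induced $P_4$-partition of the constructed graph must respect the gadget boundaries: each choice gadget is forced into one of its two canonical tilings, which defines a consistent truth assignment, and the fact that every clause gadget is nevertheless tiled certifies that each clause received a free port, hence is satisfied.

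The main obstacle, and where essentially all the work lies, is the gadget design under the simultaneous constraints of being \emph{induced}, bipartite, and of maximum degree~$3$. The ``induced'' requirement cuts both ways: it forbids chords inside the intended $P_4$'s (so the layout must avoid them while still wiring up the ports), and it can create \emph{spurious} induced $P_4$'s crossing gadget boundaries that would yield unintended partitions, which must be ruled out by a careful local analysis at every boundary. Keeping all degrees at most~$3$ sharply limits how ports may be attached, and the mod-$4$ vertex count of each gadget must be engineered so that its two canonical tilings, and only those, exist; establishing that these local tilings are the unique legal ones is the crux of the backward direction. This is precisely the kind of case analysis carried out in~\cite{monnot2007path}, on which this lemma relies.
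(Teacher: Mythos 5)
This lemma is not proved in the paper at all: it is imported verbatim from Monnot and Toulouse~\cite{monnot2007path}, and the paper's sole use of it is as a black box feeding into \autoref{exact4np} (via the observation that induced $P_4$'s and shortest $P_4$'s coincide in bipartite graphs). So there is no in-paper argument to compare against; the question is whether your proposal could stand on its own as a proof, and as written it cannot. What you give is a reduction \emph{template} -- choice gadgets with two canonical tilings, clause gadgets completable iff a port is free, a parity fix by padding with $P_4$-components -- but none of the gadgets are actually constructed. You yourself identify the crux (``the main obstacle, and where essentially all the work lies, is the gadget design'') and then do not carry it out: no explicit bipartite, degree-$3$ gadget is exhibited, no argument is given that its only induced-$P_4$-partitions are the two intended ones, and no boundary analysis rules out spurious induced $P_4$'s straddling a port. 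These are exactly the load-bearing steps; a reviewer cannot verify a reduction whose gadgets are hypothetical. Closing the circle by appealing to ``the kind of case analysis carried out in~\cite{monnot2007path}'' reduces your proof to the same citation the paper already makes, so nothing new has been established.

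Two smaller points. First, your parity objection to the direct \textsc{Planar 3-DM} route is a sound observation (a triple gadget that must have interior size $\equiv 0 \pmod 4$ when unselected and $\equiv 1 \pmod 4$ when selected cannot exist), and it correctly explains why the construction of \autoref{thm:DAGSPP-NPhard} -- which uses $P_3$-blocks and an \emph{average}-length counting argument rather than exact $P_4$-blocks -- does not transfer verbatim to \textsc{$4$-UIPP}. Second, if your goal were merely to justify the lemma as stated, the honest move is simply to cite~\cite{monnot2007path}, as the paper does; if your goal is a self-contained proof, you must supply the gadgets and the uniqueness-of-tiling arguments explicitly.
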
 

\begin{lemma} \label{exact4np}
\textsc{$4$-USPP} is \NP-hard for bipartite graphs of maximum degree~$3$.
\end{lemma}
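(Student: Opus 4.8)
The plan is to observe that, \emph{on bipartite graphs}, the problems \textsc{$4$-UIPP} and \textsc{$4$-USPP} are in fact the \emph{same} problem, so that the claimed hardness follows from \autoref{induced} via the identity reduction (which trivially preserves both bipartiteness and the degree bound~3). In other words, I would not build any gadget at all; the whole content lies in a combinatorial equivalence of admissible path-blocks.

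Concretely, I would fix a bipartite graph $G$ and argue that a four-element vertex set $S \subseteq V(G)$ induces a path of length~3 if and only if it induces a \emph{shortest} path of length~3. The forward direction is exactly \autoref{rem:combi}(3): if $G[S]$ is an induced $P_4$, say $v_1 v_2 v_3 v_4$, then its endpoints $v_1, v_4$ lie in opposite parts of the bipartition (they are at odd distance along the path), so they are non-adjacent and no length-$2$ connection between them can exist in a bipartite graph; hence $d_G(v_1,v_4)$ is odd and at least~$3$, while the path itself certifies $d_G(v_1,v_4) \le 3$, giving $d_G(v_1,v_4)=3$, so that $G[S]$ is a shortest path. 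The reverse direction is immediate from \autoref{rem:combi}(1): every shortest path is induced, so a shortest path of length~$3$ is in particular an induced $P_4$.

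Since the two notions of an admissible ``block'' coincide on $G$, a partition of $V(G)$ witnesses a \yes-instance of \textsc{$4$-UIPP} if and only if it witnesses a \yes-instance of \textsc{$4$-USPP}. Thus the two problems have identical \yes-instances among bipartite graphs, and restricting to maximum degree~$3$ is preserved verbatim by the identity map; hence the \NP-hardness of \textsc{$4$-USPP} for bipartite graphs of maximum degree~$3$ follows directly from \autoref{induced}.

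I do not expect a genuine obstacle here. The only point that needs care is verifying \autoref{rem:combi}(3) in the present setting, i.e.\ that the two endpoints of an induced $P_4$ in a bipartite graph really are forced to distance exactly~$3$ (bipartiteness rules out distance~$1$ or~$2$). Everything else is bookkeeping, and in particular no modification of the instance is required.
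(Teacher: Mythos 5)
Your proposal is correct and is essentially the paper's own argument: the paper likewise derives the result from \autoref{induced} together with the observation (\autoref{rem:combi}) that in a bipartite graph an induced path of length three is the same thing as a shortest path of length three, so the identity reduction suffices. Your verification of the parity argument for the endpoints of an induced $P_4$ is exactly the content needed to justify that equivalence.
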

\begin{proof}
This follows from \autoref{induced} and the fact that, in a bipartite graph, a path of length three is an induced path if and only if  
it is a shortest path, see \autoref{rem:combi}. 
\end{proof} 

\begin{theorem}\label{thm:SPP-NP-hardness}
\textsc{USPP} is \NP-hard, even for bipartite 5-degenerate graphs with diameter 4.
\end{theorem}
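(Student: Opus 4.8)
The plan is to reduce from \textsc{$4$-USPP} on bipartite graphs of maximum degree~$3$, which is \NP-hard by \autoref{exact4np}. Given such an instance $G=(A\cup B,E)$, observe that a \yes-instance forces every part of the partition to be an induced $P_4$, which has exactly two vertices in each side; hence we may assume $|A|=|B|=2m$ and $|V(G)|=4m$, rejecting immediately otherwise. The whole difficulty is that in \textsc{USPP} the solution paths may be shortest paths of \emph{any} length, whereas here I want to force them to have exactly four vertices and to coincide, on $G$, with induced $P_4$'s; recall that by \autoref{rem:combi} an induced $P_4$ of a bipartite graph is exactly a shortest path of length~$3$. I would achieve this through the combination of a diameter-reducing gadget and a vertex-counting argument.

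First I would build $G'$ from $G$ by adding a small gadget whose purpose is to pull the diameter down to a constant while \emph{preserving} the distance of every pair that could be the endpoints of an induced $P_4$. Concretely, add two vertices $p,q$ with the edge $pq$, join $p$ to all of $A$ and $q$ to all of $B$, and attach one pendant vertex $p^{*}$ to $p$ and $q^{*}$ to $q$. This keeps $G'$ bipartite (colour $A\cup\{q,p^{*}\}$ against $B\cup\{p,q^{*}\}$), of degeneracy at most~$4$ and hence at most~$5$ (peel off the pendants, then the original vertices, each of which retains only its $\le 3$ neighbours in $G$ together with one edge to $p$ or $q$, and finally $p,q$), and of diameter at most~$4$. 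The point I must verify is that no pair at distance~$3$ in $G$ is brought closer: the only new routes go through $p$ and $q$, and for a cross-part pair such a route again has length~$3$, so the endpoints of every induced $P_4$ of $G$ remain at distance exactly~$3$, whence every induced $P_4$ of $G$ is still a shortest path of $G'$; conversely, any four-vertex shortest path lying entirely inside $G$ is chordless (a chord would shorten the endpoint distance), i.e.\ an induced $P_4$ of $G$.

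I would then set $k=m+1$ and claim that $G$ is a \yes-instance of \textsc{$4$-USPP} iff $G'$ is a \yes-instance of \textsc{USPP}. For the forward direction, take the $m$ induced $P_4$'s that partition $G$ (still shortest paths in $G'$) together with the single gadget path $p^{*}pqq^{*}$. For the backward direction I use the counting argument: since the constructed diameter is small, every shortest path spans at most four vertices, so covering the $4m+4$ vertices of $G'$ needs at least $m+1$ paths; a budget of $m+1$ therefore forces every path to have exactly four vertices. The two degree-$1$ pendants then force the gadget to be covered exactly by $p^{*}pqq^{*}$ (any other four-vertex shortest path touching the gadget strands a pendant), and the remaining $m$ four-vertex shortest paths lie inside $G$ and are thus induced $P_4$'s, giving a \textsc{$4$-USPP} solution. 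I expect the gadget design to be the main obstacle: it must simultaneously reduce the diameter to a constant and respect the parity constraints of bipartiteness, yet it must introduce \emph{no} shortcut between distance-$3$ pairs (which would destroy valid $P_4$'s) and \emph{no} spurious short path covering a non-$P_4$ quadruple of original vertices; tuning the pendant structure and the vertex count so that the counting argument pins the partition down exactly is where the real care is required.
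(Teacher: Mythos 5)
Your overall strategy is the same as the paper's (reduce from \textsc{$4$-USPP} on bipartite graphs of maximum degree~$3$, attach a gadget that caps the diameter while preserving the distance-$3$ status of endpoints of induced $P_4$'s, then use a vertex-counting argument to force every solution path to have exactly the intended size), but your gadget does not do the job you ask of it, and the backward direction fails. The problem is the step ``the two degree-$1$ pendants force the gadget to be covered exactly by $p^{*}pqq^{*}$.'' The path covering $p^{*}$ must indeed begin $p^{*}-p-\cdots$, but since $p$ is adjacent to \emph{all} of $A$, it may continue into $G$: for any edge $ab\in E$ with $a\in A$, $b\in B$, the walk $p^{*}-p-a-b$ is a path of length~$3$, and $d_{G'}(p^{*},b)=3$ (the only neighbour of $p^{*}$ is $p$, and $d_{G'}(p,b)=2$), so it is a legitimate four-vertex shortest path. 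Symmetrically $q^{*}-q-b'-a'$ is available for any edge $a'b'$. A solution of $G'$ may therefore use these two ``leaking'' paths, absorb four arbitrary vertices $a,b,a',b'$ of $G$ into the gadget paths, and partition only $G-\{a,b,a',b'\}$ into induced $P_4$'s --- which does not yield a $4$-USPP solution of $G$. Concretely, take $G=P_6\cup P_2$ with $P_6=v_1\cdots v_6$ and $P_2=u_1u_2$: this is a \no-instance of \textsc{$4$-USPP} (the $P_2$-component cannot host a $P_4$), yet $G'$ admits the three-path partition $p^{*}pu_1u_2$, $q^{*}qv_2v_1$, $v_3v_4v_5v_6$, so your reduction reports \yes. (A smaller point: the diameter of your $G'$ is actually $3$, not $4$, since $p^{*}pqq^{*}$ has length~$3$; this is why all shortest paths have at most four vertices, not the ``diameter at most 4'' reason you give.)

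The paper avoids this trap by spending more vertices on the enforcement: it attaches two five-vertex paths $x_1x_2x_3x_4x_5$ and $y_1y_2y_3y_4y_5$ (with $x_2,x_4$ joined to $B\cup\{y_2,y_4\}$ and $y_2,y_4$ joined to $A$), so that $|V'|=4k+10$ with budget $k+2$ forces exactly two paths on five vertices and $k$ paths on four vertices; a separate combinatorial claim shows that the \emph{only} two five-vertex shortest paths that can coexist in a partition are the two designated ones, which exactly exhaust the gadget. Your construction would need an analogous rigidity argument --- some structural reason why the path through each pendant cannot re-enter $V(G)$ --- and the pendant-plus-universal-vertex design does not provide one.
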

\begin{proof}
To prove this claim, we use  \autoref{exact4np}. Given an instance of \textsc{$4$-USPP}, say, $G = (V, E)$, with bipartition $V=A \cup B$ and $|V| = 4k$, as the number of vertices must be divisible by 4, we create an instance $G'=(V',E')$ of \textsc{USPP}.

\smallskip\noindent
\textbf{Construction:}  We add $10$ new vertices to $G$, getting $$V' = V \cup \{x_1,x_2,x_3,x_4,x_5,y_1,y_2,y_3,y_4,y_5\}\,.$$ We add edges from $x_2$ and $x_4$ to all vertices of $B\cup\{y_2,y_4\}$. Also, add edges from $y_2$ and~$y_4$ to all vertices of~$A$, 
add further edges to form paths $x_1x_2x_3x_4x_5$ and $y_1y_2y_3y_4y_5$.
The remaining edges all stem from~$G$. See \autoref{bipartitefig} for an illustration. 
This describes $E'$ of $G'$.


\begin{claim}\label{obs:bipartite_degenerate}
$G'=(V',E')$ is bipartite and 5-degenerate.
\end{claim}
\begin{proof}
We can make the claimed bipartition explicit by writing  $V'=A' \cup B'$, where $A'=A \cup \{x_2,x_4,y_1,y_3,y_5\}$ and $B'=B \cup \{x_1,x_3,x_5,y_2,y_4\}$ (see \autoref{bipartitefig}).

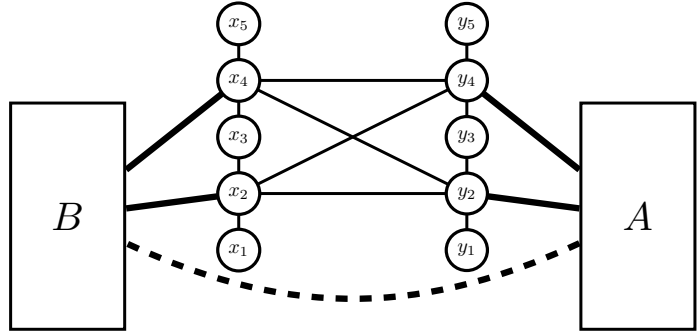
\begin{figure}[tbh]
    \centering
   \scalebox{1.5}{
\tikzset{every picture/.style={line width=0.75pt}} 
\begin{tikzpicture}[
roundnode/.style={circle, draw=black,  thick, minimum size=5mm, scale=0.5},
squarednode/.style={rectangle, draw=black, thick, minimum size=5mm},
]

\node[roundnode] at (0.5,0)     (x1)                              { $x_1$};

\node[roundnode] at (0.5,0.5)     (x2)                              { $x_2$};

\node[roundnode] at (0.5,1)     (x3)                              { $x_3$};

\node[roundnode] at (0.5,1.5)     (x4)                              { $x_4$};
\node[roundnode] at (0.5,2)     (x5)                              { $x_5$};

\node[roundnode] at (2.5,0)     (y1)                              { $y_1$};

\node[roundnode] at (2.5,0.5)     (y2)                              { $y_2$};

\node[roundnode] at (2.5,1)     (y3)                              { $y_3$};

\node[roundnode] at (2.5,1.5)     (y4)                              { $y_4$};
\node[roundnode] at (2.5,2)     (y5)                              { $y_5$};

\node[rectangle,draw,minimum width = 1cm, 
    minimum height = 2cm] (b) at (-1,0.3) {$B$};
    
    \node[rectangle,draw,minimum width = 1cm, 
    minimum height = 2cm] (a) at (4,0.3) {$A$};


;

\draw (x1) to (x2);
\draw (x2) to (x3);
\draw (x3) to (x4);
\draw (x4) to (x5);
\draw (y1) to (y2);
\draw (y2) to (y3);
\draw (y3) to (y4);
\draw (y4) to (y5);

\draw (x2) to (y2);
\draw (x2) to (y4);

\draw (x4) to (y2);
\draw (x4) to (y4);

\draw [ultra thick](x4) to (b);
\draw [ultra thick](x2) to (b);

\draw[ultra thick] (y4) to (a);
\draw[ultra thick] (y2) to (a);

\draw[ultra thick,dashed,bend left=25] (a) to (b);

\end{tikzpicture}}   
    \caption{A sketch of $G' = (V',E')$; thick lines mean that all the edges across two sets are present. The thick dashed line represents all edges that have been present between $A$ and $B$ already in~$G$.}
    \label{bipartitefig}
\end{figure}

Hence, $G'$ is bipartite. Now we prove that $G'$ is 5-degenerate. 
As the maximum degree of~$G$ is 3 and as the construction of $G'$ will add two more neighbors to each vertex of $A\cup B$, vertices from $A\cup B$ have degree at most~5 in~$G'$. Moreover, vertices $x_1$, $x_3$, $x_5$, $y_1$, $y_3$ and $y_5$ have degree~1 or 2 in $G'$. Thus, any induced subgraph containing a vertex other than $x_2$, $x_4$, $y_2$, $y_4$ has some vertex of degree at most~5. Moreover, the subgraph induced by these four vertices is a 4-cycle. Thus, $G'$ is 5-degenerate. 
\end{proof}

\begin{claim} \label{obs:diameter}
 Any shortest path of $G'$, not starting and ending with one of the vertices from $\{y_1,y_5,x_1,x_5\}$ has at most $4$ vertices. 
Also, $G'$ has a diameter of $4$.
\end{claim} 

\begin{proof} 
In $G'$, any vertex $v \in A' \setminus \{y_1,y_5\}$, is at a distance at most~$2$ away from any vertex in~$A'$ and at a distance at most~$3$ from a vertex in~$B'$. Similarly, any vertex $u \in B' \setminus \{x_1,x_5\}$ is at a distance at most~$2$ away from a vertex in~$B'$ and at a distance at most~$3$ from a vertex in~$A'$. From the above arguments, we can see that any shortest path with $5$ vertices has to start and end with one of the following vertices: $\{y_1,y_5,x_1,x_5\}$. We can see the distance between any two vertices in the set is~$4$.\renewcommand{\qedsymbol}{$\Diamond$}
\end{proof}

\noindent
The arguments leading to the previous claim also give rise to the following one.
\begin{claim} \label{obs:two_paths_in_partition}
There can be only two shortest paths in $G'$ that have five vertices and that can simultaneously exist in a path partition.
\end{claim}

\begin{proof}
    The above claim is true, as the starting and ending vertices of shortest path having $5$ vertices must start and end at $\{y_1,y_5,x_1,x_5\}$.  \renewcommand{\qedsymbol}{$\Diamond$}
\end{proof} 
 
Notice that \autoref{obs:bipartite_degenerate} and \autoref{obs:diameter} together guarantee the additional properties of the constructed graph $G'$ that have been claimed in \autoref{thm:SPP-NP-hardness}. 

\begin{claim}  \label{obs:dist}
Let $u,v\in V$ have distance $d \leq 3$ in $G$. Then, they also have distance~$d$ in~$G'$.
Hence, if $P$ is a shortest path on at most four vertices in~$G$, then $P$ is also a shortest path in~$G'$.
\end{claim} 

\begin{proof}
This observation is trivial if $d=0,1$; it follows for $d=2$ as (also) the graph~$G'$ is bipartite by \autoref{obs:bipartite_degenerate}. For $d=3$, the distance remains the same, as no edge is added between $u \in A$ and $v \in B$ in~$G'$. 
\renewcommand{\qedsymbol}{$\Diamond$}
\end{proof}

Now, we claim that $G$ is a \yes-instance of \textsc{$4$-USPP} if and only if $G'$ has a shortest path partitioning of cardinality $k'=k + 2$, where $|V|=4k$. For the forward direction, let $\mathbb{P}$ be any solution of \textsc{$4$-USPP} for~$G$ containing~$k$ shortest paths. To construct a solution~$\mathbb{P}'$ of \textsc{USPP} for the instance~$(G',k')$, we just need to add the two paths $x_1x_2x_3x_4x_5$ and $y_1y_2y_3y_4y_5$ to~$\mathbb{P}$. 
By \autoref{obs:dist}, every path $P \in \mathbb{P}$ is in fact a shortest path in~$G'$. Hence, $\mathbb{P}'$ is a set of shortest paths with cardinality $k'=k + 2$ that covers all vertices of~$G'$.

For the backward direction, assume~$G'$ has a solution $\mathbb{P}'$, where $|\mathbb{P}'|=k'=k+2$. As $|V'| = 4k + 10$ by construction, we know by \autoref{obs:diameter} and \autoref{obs:two_paths_in_partition} that $\mathbb{P}'$ contains $k$ paths of length three and two paths of length~4. The only two possible paths of length~$4$ following the condition are $\{x_1x_2x_3x_4x_5,y_1y_2y_3y_4y_5\}$, otherwise we would create independent vertices $y_3$ or $x_3$. Hence, $$\{x_1x_2x_3x_4x_5,y_1y_2y_3y_4y_5\} \subseteq \mathbb{P}'$$ and the rest of the paths of $\mathbb{P}'$ are of length~3 and consists of vertices from~$V$ only. Let $\mathbb{P} = \mathbb{P}' \setminus \{x_1x_2x_3x_4x_5,y_1y_2y_3y_4y_5\}$.
As the $k$ paths of~$\mathbb{P}$ are each of length~3 also in~$G$ (by \autoref{rem:combi}) and as they cover $V$ completely, $\mathbb{P}$ provides a solution to the \textsc{$4$-USPP} instance~$G$. 
\end{proof}

\section{\W{1}-hardness for DAGSPP and DAGIPP}\label{w1hard}

The natural or standard parameterization of a parameterized problem stemming from an optimization problem is its solution size. We will study this type of parameterization in this section for path partitioning problems. More technically speaking, we are parameterizing these problems by an upper bound on the number of paths in the partitioning. Unfortunately, our results show that for none of the variations that we consider, we can expect \FPT-results.

\begin{theorem} \label{thm:w-hard-dag}
    \textsc{DAGSPP} and \textsc{DAGIPP} are \W{1}-hard when parameterized by solution size.
\end{theorem}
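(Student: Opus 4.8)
The plan is to adapt Slivkins' reduction \cite{Slivkins2010}, which proves that \textsc{DAGDP} is \W{1}-hard parameterized by the number~$k$ of terminal pairs, so that it outputs a \emph{path-partition} instance rather than a disjoint-paths instance. Slivkins encodes a $k$-clique as a grid-like DAG~$H$ in which a prescribed number of vertex-disjoint $s_i$-$t_i$ routes exists exactly when the clique exists; I would reuse this grid but augment it to a DAG~$G'$ on which partitioning into few shortest (equivalently, induced) paths is possible exactly in the same situation. Two facts from the preliminaries let me treat \textsc{DAGSPP} and \textsc{DAGIPP} with a single construction: by \autoref{rem:combi} every shortest path is induced, so if the \emph{intended} solution paths are calibrated to be shortest, they are automatically valid for both problems; and by \autoref{rem:DP-DIP}, on a DAG disjoint-path solutions may be assumed induced, which matches this calibration.

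The central design feature is source/sink counting. I would arrange $G'$ so that its sources and sinks form a prescribed set whose size~$K$ is a function of~$k$ only, including the terminals $s_1,\dots,s_k$ among the sources and $t_1,\dots,t_k$ among the sinks, and set the path budget to~$K$. Since no arc enters a source and none leaves a sink, every path of a partition starts at a distinct source and ends at a distinct sink, so a partition using at most~$K$ paths uses exactly~$K$ of them and realizes a bijection between sources and sinks. It then remains to force, via gadgets, that (i) the terminal sources are paired as $s_i\mapsto t_i$ while the auxiliary sources follow fixed ``background'' routes, (ii) all routes are vertex-disjoint, and (iii) together they cover \emph{all} of~$V(G')$ --- and that this is jointly possible exactly when $H$ admits $k$ disjoint $s_i$-$t_i$ paths. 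For (i) I would attach short index gadgets calibrated so that any $s_i$-to-$t_j$ route with $i\neq j$ is strictly longer than the intended geodesic length (failing the shortest-path test) and additionally carries a chord (failing the induced test); for (iii) I would thread coverage gadgets through Slivkins' grid so that the $K$ disjoint source-to-sink routes can exhaust $V(G')$ precisely when the $k$ terminal routes encode a consistent clique.

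For correctness I would argue both directions. Forward: from $k$ disjoint $s_i$-$t_i$ paths in~$H$, extend each through its index gadget and combine them with the forced background routes; the equal-length calibration guarantees that every route is a shortest, hence induced, path and that together they exhaust $V(G')$, yielding a \yes-instance of both \textsc{DAGSPP} and \textsc{DAGIPP}. Backward: any partition into at most~$K$ shortest (resp.\ induced) paths is, by the source/sink count, a family of exactly~$K$ source-to-sink paths; the index gadgets force the pairing $s_i\mapsto t_i$ on the terminal routes, and the coverage gadgets force an internally consistent routing, which reads off $k$ disjoint $s_i$-$t_i$ paths in~$H$ and hence a clique. The transformation is polynomial and the new parameter~$K$ depends only on~$k$, so \W{1}-hardness of \textsc{DAGDP} transfers to both target problems.

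The hard part will be reconciling three requirements that do not coexist in the original \textsc{DP} problem: covering \emph{every} vertex, pinning the number of paths to exactly~$K$, and keeping each solution path genuinely shortest (hence induced). The shortest-path constraint is the delicate one, since it forces all parallel segments of Slivkins' grid and of the index gadgets to be length-calibrated so that every \emph{unintended} route is strictly longer than its intended counterpart while the intended routes remain exactly geodesic. For \textsc{DAGIPP} one must strengthen this: because induced paths are more permissive than shortest paths, each unintended route must additionally be forced to contain a chord, so that the weaker induced-path partition cannot exploit a non-shortest detour to bypass a coverage gadget. Verifying that these calibrations can be met simultaneously inside the bounded-width grid, without introducing spurious sources or sinks that would spoil the counting argument, is where the bulk of the technical effort will lie.
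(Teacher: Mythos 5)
Your high-level plan is the same as the paper's: both adapt Slivkins' Clique-to-grid construction, both pin down the paths of a small partition by counting degree-zero vertices, and both force coverage so that the routing of the distinguished paths reads off a clique. But what you have written is a design brief, not a proof --- the index gadgets, coverage gadgets and length calibrations are described only by the properties you would like them to have, and you yourself defer the question of whether they can coexist. That deferral hides a concrete obstacle.

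The obstacle is your counting step. You assert that ``since no arc enters a source and none leaves a sink, every path of a partition starts at a distinct source and ends at a distinct sink,'' and you plan to make the budget $K$ equal to the number of sources and sinks so that the partition realizes a bijection. The implication is backwards: what is forced is that every source must \emph{begin} some path and every sink must \emph{end} some path; a path may perfectly well begin at a vertex of positive in-degree. More importantly, in the grid construction the budget \emph{cannot} be made equal to the number of sources: when the selector path in row $i$ skips a gadget, it drops from the top level to the bottom level, leaving the top-level vertices to the right of the skip to be covered by a path that necessarily starts at an internal vertex of the row (one with in-degree at least one). This is why the paper's budget is $k'=\frac{k(k-1)}{2}+3k$ while the number of in-degree-zero vertices is only $k'-k$ (\autoref{obs:in_out_deg}), and why the paper must add the ``shortest path enforcer'' arcs and prove \autoref{obs:top_row}: without a path entering the terminal chain $a_1^{i,n+1},\dots,a_{k+2}^{i,n+1}$ from within row $i$, covering that chain would cost $k+2$ extra paths and blow the budget. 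Your clean source--sink bijection does not survive this, so the backward direction of your correctness argument --- ``any partition into at most $K$ paths is a family of exactly $K$ source-to-sink paths'' --- fails, and you would need to replace it with an argument in the spirit of \autoref{obs:top_row} through \autoref{obs:sijtij} that locates the start and end of each path individually. Until that is done, the reduction is not established.
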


The following reduction is non-trivially adapted from~\cite{Slivkins2010}. The reduction starts out from \textsc{Clique} and produces an array-like layout of the input graph as an instance of \textsc{DAGSPP} (or of \textsc{DAGIPP}), as can be seen in the figures on the following pages. The construction is quite technical and is now described in details.

\begin{proof}
We define a parameterized reduction from \textsc{Clique} to \textsc{DAGSPP} (or to \textsc{DAGIPP}), all parameterized by solution size. We focus on \textsc{DAGSPP}, but all the arguments work for \textsc{DAGIPP} as well. Let $(G,k)$ be an instance of \textsc{Clique}, where $k\in\mathbb{N}$ and $G=(V,E)$, with $V=[n]$ for simplicity. We construct an equivalent instance $(G',k')$ of the \textsc{DAGSPP} problem, where $G'=(V',E')$ is a DAG and $k' = \frac{k\cdot (k-1)}{2} + 3k$. We will  now formally describe $V'$ and $E'$ in the following and later explain their functionalities. To avoid trivialities as well as many corner cases, we tacitly assume $k\geq 2$ in the following.

We define the vertex set first. For $i \in [k]$ and $u \in [n]$, let  $$V^{i,u} = \{a_l^{i,u},b_l^{i,u}\mid l \in [k] \setminus i\} $$ be the vertex set of our \emph{main gadgets} $G^{i,u}$. For  $i \in [k]$, the \emph{dummy gadgets} $G^{i,0}$ and  $G^{i,n+1}$ have vertex sets $$V^{i,0} = \{a_l^{i,u},b_l^{i,u}\mid l \in \{1,2\}\} \text{ and } V^{i,n+1} = \{a_l^{i,u},b_1^{i,n+1},b_2^{i,n+1}\mid l \in [k+2]\}\,,$$respectively. Then, define $$V_{\text{gadgets}}\coloneqq \bigcup_{i \in [k]}\bigcup_{u\in \{0\}\cup [n+1]} V^{i,u}\,. $$  As we think of the gadgets being arranged in an array, we also speak of the $i^{\text{th}}$ \emph{row} (containing $V^{i,u}$ and more) in the following, as well as of the $u^{\text{th}}$ \emph{column}.
For each row $i \in [k]$, we also have \emph{row start terminals} $s_i$, $s'_i$ (collected in the set $V_{\text{rst}}$) and \emph{row target terminals}  $t_i$, $t'_i$ (collected in the set $V_{\text{rtt}}$). Also, for each $i \in [k]$, we have $2(k-i)$ \emph{column start terminals} $s_{i,j}$ and $s'_{i,j}$, where $k \geq j \geq i+1$, (collected in the set $V_{\text{cst}}$), as well as $2(i-1)$ \emph{column target terminals}
$t_{j,i}$ and $t'_{j,i}$, with $1\leq j \leq i-1$ (collected in the set $V_{\text{ctt}}$). This gives the set of terminals $V_{\text{terminals}}\coloneqq V_{\text{rst}}\cup V_{\text{rtt}} \cup  V_{\text{cst}}\cup V_{\text{ctt}}$. Altogether, we have $V' \coloneqq V_{\text{gadgets}}\cup V_{\text{terminals}}$.

Now, we describe the arc set. Let us first describe the arcs within the gadgets. For the main gadgets  $G^{i,u}$, with $i \in [k]$ and $u \in [n]$, we set $$E^{i,u} = E_{\text{down}}^{i,u} \cup E_{\text{right}}^{i,u} \cup E_{\text{bridge}}^{i,u}\,,$$ where $E_{\text{down}}^{i,u} = 
\{ (a_j^{i,u}, b_j^{i,u})\mid j \in [k] \setminus \{i\} \}\,,$  
$E_{\text{right}}^{i,u} = 
\{(a_j^{i,u},a_{j+1}^{i,u}), (b_j^{i,u},b_{j+1}^{i,u})\mid j \in [k] \setminus \{i-1,i\}\}\,,$ and 
$E_{\text{bridge}}^{i,u} = \{(a_{i-1}^{i,u},a_{i+1}^{i,u}), (b_{i-1}^{i,u},b_{i+1}^{i,u})\}\,.$ As visualized in \autoref{fig:I-SPP-W-gadget}, one can think of an array layout also for $G^{i,u}=(V^{i,u},E^{i,u})$; then $e\in E_{\text{down}}^{i,u}$ go downwards, $e\in E_{\text{right}}^{i,u}$ point to the right, in principle always to the next vertex in the index ordering, except for the $i^{\text{th}}$ column within $G^{i,u}$ that is bridged by $E_{\text{bridge}}^{i,u}$.

For the {dummy gadgets} $G^{i,0}$, we have $E^{i,0} = \{(a_1^{i,0},a_2^{i,0}), (b_1^{i,0},b_2^{i,0}) \}$ and for the {dummy gadgets} $G^{i,n+1}$, $E^{i,n+1} = 
\{ (a_l^{i,n+1},a_{l+1}^{i,n+1})\mid l \in [k + 1]\} \cup \{(b_{1}^{i,n+1}, b_2^{i,n+1})\}\,.$
This describes all arcs within the gadgets $G^{i,u}=(V^{i,u},E^{i,u})$, i.e., $G^{i,u}=G'[V^{i,u}]$ for all $i\in [k]$ and $u\in \{0\}\cup [n+1]$.

Next, we describe (most of) the connections between terminals. For the row terminals, let $E_{\text{rst}} = 
\{ (s_i,s'_i)\mid i \in [k] \}$ and $E_{\text{rtt}} = 
\{(t'_i,t_i)\mid i \in [k]\}$. Then, $(V_{\text{rst}},E_{\text{rst}})=G'[V_{\text{rst}}]$
and $(V_{\text{rtt}},E_{\text{rtt}})=G'[V_{\text{rtt}}]$.
For the column terminals, let 
$E_{\text{cst}} = 
\{ (s_{i,j},s'_{i,j})\mid 1\leq i < j \leq k\}$ and 
$E_{\text{ctt}} = 
\{ (t'_{j,i},t_{j,i})\mid 1\leq j < i \leq k\}\,.$
Then, $(V_{\text{cst}},E_{\text{cst}})=G'[V_{\text{cst}}]$
and $(V_{\text{ctt}},E_{\text{ctt}})=G'[V_{\text{ctt}}]$.

The remaining description of the set of arcs of~$G'$ is dedicated to arcs between certain gadgetries.
Here, first we describe all arcs that connect main gadgets with other main gadgets or with dummy gadgets.
We have two arcs which connect two consecutive gadgets in a row $i \in [k] \setminus \{1,k\}$: 
$$ E_{\text{consec}}^{i} = \{ (a_{2}^{i,0},a_{1}^{i,1}),(b_{2}^{i,0},b_{1}^{i,1})\}\cup 
\{(a_{k}^{i,u},a_{1}^{i,u+1}),(b_{k}^{i,u},b_{1}^{i,u+1})\mid u \in  [n]\} \,.$$ 
The two cases $i=1$ and $i=k$ are special because then, the first (or last, respectively) column are missing in the main gadget.
\begin{eqnarray*} E_{\text{consec}}^{1} &=& \{ (a_{2}^{1,0},a_{2}^{1,1}),(b_{2}^{1,0},b_{2}^{1,1})\}\cup 
\{(a_{k}^{1,u},a_{2}^{1,u+1}),(b_{k}^{1,u},b_{2}^{1,u+1})\mid u \in [n-1]\}\\
&\cup &\{(a_{k}^{1,n},a_{1}^{1,n+1}),(b_{k}^{1,n},b_{1}^{1,n+1})\}\,\text{ and}\\
 E_{\text{consec}}^{k} &=& \{ (a_{2}^{k,0},a_{1}^{k,1}),(b_{2}^{k,0},b_{1}^{k,1})\}\cup
\{(a_{k-1}^{k,u},a_{1}^{k,u+1}),(b_{k-1}^{1,u},b_{1}^{k,u+1})\mid u \in  [n]\}\,.\end{eqnarray*}

Moreover, we have some \emph{skipping arcs} between the gadgets, namely, for each row  $i \in [k] \setminus \{1,k\}$, we set
    $$ E_{\text{skip}}^{i} = \{(a_{2}^{i,0},b_{1}^{i,2})\} \cup
    \{(a_{k}^{i,u},b_{1}^{i,u+2})\mid u \in [n-1]\}\,,$$
again with two special cases for $i=1$ and $i=k$, resulting in     $$E_{\text{skip}}^{1}= \{(a_{2}^{1,0},b_{2}^{1,2})\}\cup 
    \{(a_{k}^{1,u},b_{2}^{1,u+2})\mid u \in [n-2]\}\cup\{(a_{k}^{1,n-1},b_{1}^{1,n+1})\}\,\text{ and}  $$
$$E_{\text{skip}}^{k}= \{(a_{2}^{k,0},b_{1}^{k,2})\}\cup 
\{(a_{k-1}^{k,u},b_{1}^{k,u+2})\mid u \in [n-1]\}  \,.$$
The arcs from $\bigcup_{i\in [k]}(E_{\text{consec}}^{i}\cup E_{\text{skip}}^{i}) $
are also illustrated in \autoref{fig:I-SPP-verifier-a}.

To model the edges of the original graph~$G$, we need the following arcs 
$$E_{\text{con}} = 
\{(b_{j}^{i,u}, a_{i}^{j,v} )\mid uv \in E, 1\leq i < j \leq k \}\,.$$ These arcs are also illustrated in \autoref{fig:I-SPP-verifier}. Apart from some shortcut arcs defined below, we have all arcs of $G'[V_{\text{gadgets}}]$ collected in
$$ E_{\text{gadgets}} = \left(\bigcup_{i \in [k]} \left(\bigcup_{u \in \{0\} \cup [n+1]}  E^{i,u}\right) \cup E_{\text{skip}}^{i} \cup E_{\text{consec}}^{i}\right) \cup E_{\text{con}}\,.$$  Next, we have some arcs that connect the row start and row target terminal to the dummy gadget:
    $$E_{\text{stcon}}^{i} = 
    \{ (s'_i, a_1^{i,0}), (b_2^{i,n+1},t'_i)\mid i \in [k]\}\,.$$ Similarly, we have some arcs that connect column terminals to the main gadgets: for $i \in [k]$, define $$E_{\text{upcon}}^{i} =
    \{(s'_{i,j},a_{j}^{i,u})\mid 1\leq i < j \leq k, u \in [n]\}\,,$$ 
    $$ E_{\text{downcon}}^{i} = 
    \{(b_{j}^{i,u},t'_{j,i})\mid 1\leq j < i \leq k, u \in [n]\}\,.$$ We then set for all these terminal connectors 
    $$ E_{\text{tercon}} = \bigcup_{i \in k}\left( E_{\text{stcon}}^{i} \cup E_{\text{upcon}}^{i} \cup E_{\text{downcon}}^{i} \right)\,.$$ We have some arcs that help enforce the start and target vertices of the paths. Their subscripts indicate if they \underline{f}orce some \underline{r}ow/column \underline{s}tart to some \underline{r}ow/column \underline{t}arget (as in the first case) or other combinations. In the following, let $i\in [k]$.
    $$E_{\text{frsrt}}^{i} = 
    \{(s_i,t_l), (s_i,t'_l), (s'_i,t_l), (s'_i,t'_l)\mid  i < l \leq k\}\,,$$ 
    with $E_{\text{frsrt}}^{k}=\emptyset$ as a corner case,
    $$ E_{\text{frsct}}^{i} = 
    \{(s_i,t_{j,l}), (s_i,t'_{j,l}) , (s'_i,t_{j,l}), (s'_i,t_{j,l})\mid i \leq l \leq k, j\in [l-1]\}\,,$$     with $E_{\text{frsct}}^{1}=\emptyset$ as a corner case, 
$$E_{\text{fcsrt}}^{i} =  
\{(s_{i,j},t_{l}),(s'_{i,j},t_{l}), (s_{i,j},t'_{l}),(s'_{i,j},t'_{l})\mid i < j \leq k, i \leq l \leq k\}\,, $$ 
    with $E_{\text{fcsrt}}^{k}=\emptyset$ as a corner case, 
    and

 \begin{eqnarray*} E_{\text{fcsct}} ^{i}&=&\{ (s_{i,j},t_{l,h}), (s_{i,j},t'_{l,h}), (s'_{i,j},t_{l,h}), (s'_{i,j},t'_{l,h})\mid \\&&\phantom{XXXXXXXX}i < j \leq k,i\leq h\leq k, l \in [h-1],(i,j)\neq(h,l)\}\,.
 \end{eqnarray*}
    We also connect the dummy gadget to some target vertices as follows: 
    $$E_{\text{fbrt}}^{i} = 
    \{ (b_1^{i,0},t_l), (b_1^{i,0},t'_l), (b_2^{i,0},t_l), (b_2^{i,0},t'_l)\mid i \leq l \leq k\}\,,$$  
    $$ E_{\text{fbct}}^{i} = 
    \{(b_1^{i,0},t_{j,l}), (b_1^{i,0},t'_{j,l}) , (b_2^{i,0},t_{j,l}), (b_2^{i,0},t_{j,l})\mid i \leq l \leq k,j\in [l-1]\}\,.$$  

    So we can collect all these enforcing arcs into one arc set: 
    $$ E_{\text{force}} = \bigcup_{i \in [k]} \left(E_{\text{frsrt}}^{i} \cup E_{\text{frsct}}^{i} \cup E_{\text{fbrt}}^{i} \cup E_{\text{fbct}}^{i} \cup  E_{\text{fcsrt}}^{i} \cup E_{\text{fcsct}}^{i}\right)\,.$$ 
    
To prove \autoref{obs:top_row}, we need the following arcs (for row $i \in [k]$) that connect start terminals to the rightmost dummy gadget: 
    $$ E_{\text{spfrst}}^{i} =  
    \{ (s_i,a_{q}^{l,n+1}), (s'_i,a_{q}^{l,n+1}) \mid i \leq l \leq k,q \in [k+2]\}\,,$$ 
    $$ E_{\text{spfcst}}^{i} = 
    \{(s_{i,j},a_{q}^{l,n+1}), (s'_{i,j},a_{q}^{l,n+1})\mid i \leq l \leq k, i<j\leq k, q \in [k+2]\}\,.$$
    Finally, we also connect the bottom main gadget vertices (and those of the leftmost dummy gadget) to the top vertices of the rightmost dummy gadget, this way defining the set $ E_{\text{spfbot}}^{i}$ as 
    $$ 
    \bigcup_{q \in [k + 2]}\bigcup_{i < j \leq k} 
    \left(\left\{(b_l^{i,u}, a_q^{j,n+1})\mid u \in [n],l \in [k] \setminus \{i\}\right\} \cup \{(b_1^{i,0},a_q^{j,n+1}),(b_2^{i,0},a_q^{j,n+1})\}\right)\,. $$ We collect these arc sets into the set of \emph{\underline{s}hortest \underline{p}ath en\underline{f}orcers} that provide the shortcuts mentioned above.
    $$ E_{\text{spf}} = \bigcup_{i \in [k]} \left(E_{\text{spfcst}}^{i} \cup E_{\text{spfrst}}^{i} \cup E_{\text{spfbot}}^{i}\right)\,.$$  
  We can now define $$ E' = E_{\text{rst}} \cup E_{\text{rtt}} \cup E_{\text{cst}} \cup E_{\text{ctt}} \cup E_{\text{gadgets}} \cup E_{\text{tercon}} \cup E_{\text{force}} \cup E_{\text{spf}}\,.$$

\noindent
This concludes the formal description of $G'=(V',E')$.

    \smallskip\noindent
    \textbf{Overview of the construction:} We create an array of $k\times n$ identical gadgets for the construction, with each gadget representing a vertex in the original graph. We can visualize this array as having $k$ {rows} and $n$ columns.  If the \textsc{DAGSPP} instance $(G',k')$ is a \yes-instance, then we show that each row has a so-called \emph{selector} in the solution. Here, a selector is a path that traverses all but one gadget in a row, hence skipping exactly one of the gadgets. The  vertices in~$G$ corresponding to the skipped gadgets form a clique of size~$k$ in~$G$. To ensure that all selected vertices form a clique in~$G$, we have so-called \emph{verifiers} in the \textsc{DAGSPP} instance. {Verifiers} are the paths that are used for each pair of rows to ensure that the vertices corresponding to the selected gadgets in these rows are adjacent in~$G$. This way, we also do not have to check separately that the selected vertices are distinct, see \autoref{fig:I-SPP-verifier}. 
    \smallskip

    \noindent
    \textbf{Construction details:} The array of gadgets is drawn with row numbers increasing downward and column numbers increasing to the right. Arcs between columns go down and arcs within the same row go to the right. To each row~$i$, with $i \in [k]$, we add row start terminals, connected by an arc $(s_i,s'_i)\in E_{\text{rst}}$, and  row target terminals, connected by an arc $(t'_i,t_i)\in E_{\text{rtt}}$. Next, add arcs starting from $s_i,s'_i$ to $t_l$ and $t'_l$, with $l>i$, see $E_{\text{frsrt}}^{i}$. Also, for each row, we have $k - i$ \emph{column start terminals}, arcs $(s_{i,j},s'_{i,j})\in E_{\text{cst}}$ with $i < j$, and $i - 1$ \emph{column target} terminals,  arcs $(t'_{j,i},t_{j,i})\in E_{\text{rtt}}$ with $j \in [i-1]$. Also, add arcs from vertices $s_{i,j}$ and $s'_{i,j}$ to $t_l$, $t'_l$ if $l\geq i$ and to column target terminals (see $E^i_{\text{fcsct}}$ in the formal construction). 

 \smallskip
 \noindent 
    \textbf{Gadgets} are denoted by $G^{i,u}$, $i \in [k]$, corresponding to $u \in V$. Each gadget~$G^{i,u}$ consists of $k-1$ top-level vertices $a_j^{i,u}$ and of $k-1$ bottom-level  vertices $b_j^{i,u}$, connected by arcs $(a_j^{i,u}, b_j^{i,u})\in E_{\text{down}}^{i,u}$, with $j \in [k] \setminus \{i\}$. Due to the natural ordering of the vertices within a gadget, we can also speak of the first vertex on the top level of a gadget or of the last vertex on the bottom level of a gadget. On both levels, the vertices are also connected following their natural ordering. More technically speaking, this means that within gadget~$G^{i,u}$, with $u\in V$, there is an arc from the first vertex of the upper level to the second vertex of the upper level, from the second vertex of the upper level to the third vertex of the upper level, etc., up to an arc from the penultimate vertex of the upper level to the last vertex of the upper level. The formal description of the edge set $E^{i,u}$ is a bit more technical, because one index of the set $[k]$ is missing in the column indexing. The arcs within a gadget described so far lead to a grid-like drawing of each gadget, see \autoref{fig:I-SPP-W-gadget}.
    
\smallskip
\noindent    
    To each row~$i\in [k]$, we also add \emph{dummy gadgets}
    $G^{i,0}$ and $G^{i,n+1}$. Gadget $G^{i,0}$ consists of two arcs, $(a_1^{i,0},a_2^{i,0})$ and $(b_1^{i,0},b_2^{i,0})$, also add arcs from $b_1^{i,0}$ and $b_2^{i,0}$ to $t_h, t'_h$ and to $t'_{l,h}$, $t_{l,h}$ where $l< h$ and $i \leq h$, see the formal definition of $E_{\text{fbrt}}^{i}$ and of $E_{\text{fbct}}^{i}$. Gadget $G^{i,n+1}$ consists of a directed $P_{k+2}$ and an arc $(b_1^{i,n+1},b_2^{i,n+1})$; the $P_{k+2}$-vertices are named $a_j^{i,n+1}$, with $j \in [k+2]$, where $a_{k+2}^{i,n+1}$ has out-degree zero. The arcs inside the dummy gadgets are formally defined as the sets $E^{i,0}$ and $E^{i,n+1}$ above.
    We can speak of $$T^{i}:=\{a_r^{i,u}\mid r\in [k]\setminus\{i\},u\in [n]\}\cup\{a_1^{i,0},a_2^{i,0},a_j^{i,n+1}\mid j \in [k+2]\}$$ as being on the  $i^{\text{th}}$ \emph{top level}, while the vertices $$B^{i}:=\{b_r^{i,u}\mid r\in [k]\setminus\{i\},u\in [n]\}\cup\{b_1^{i,0},b_2^{i,0},b_1^{i,n+1},b_2^{i,n+1}\}$$ form the $i^{\text{th}}$ \emph{bottom level}, where $i\in [k]$.

Moreover, there is an arc from the last vertex of the upper level of $G^{i,u-1}$ to the first vertex of the upper level of $G^{i,u}$ and from the last  vertex of the upper level of $G^{i,u}$ to the first vertex of the upper level of $G^{i,u+1}$, etc.
Analogously, the vertices of the lower level of the gadgets are connected. For a formal treatment, we refer to the definition of $E_{\text{consec}}^{i}$. These notions are illustrated in \autoref{fig:I-SPP-W-gadget} and \autoref{fig:I-SPP-verifier-a}. In the figures, for readability we omit the superscripts and simply write $(a_l, b_l) = (a_l^{i,u} , b_l^{i,u} )$ where $l \in  [k]$. Also, notice that the mentioned vertex names $a_1$, $b_1$, $a_k$ and $b_k$ are rather meant in a symbolic fashion. Clearly, by our construction, the first vertex of the top level of $G_{1,u}$ would be $a_2^{1,u}$, to explicitly mention one exceptional case.

\begin{figure}
\centering
\begin{minipage}{.47\textwidth}
  \centering
  \tikzset{every picture/.style={line width=0.75pt}} 

\begin{tikzpicture}[
roundnode/.style={circle, draw=black,  thick, minimum size=5mm, scale=0.8},
squarednode/.style={rectangle, draw=red!60, fill=red!5, very thick, minimum size=5mm},
dot/.style={fill=black,circle,minimum size=1pt}
]

\def\x{0.6}
\def\y{-0.1}

\node[rectangle,draw,minimum width = 6cm, 
    minimum height = 2cm, rounded corners=5pt] (b) at (\x,\y) {};
    \node[roundnode] at (\x - 1.5cm ,\y + 0.5cm )     (a1){}  ;
    \node[roundnode] at (\x - 0.5 cm ,\y + 0.5cm )     (a2){}  ;

    \node[dot,scale=0.3] at (\x - 0.1cm ,\y + 0.5cm) (d1){};
    \node[dot,scale=0.3] at (\x - 0.0cm ,\y + 0.5cm) (d1){};
    \node[dot,scale=0.3] at (\x + 0.1cm ,\y + 0.5cm) (d1){};
    
 \node[roundnode] at (\x + 0.5cm ,\y + 0.5cm )     (ai1){}  ;
    \node[roundnode] at (\x + 1.5cm ,\y + 0.5cm )     (ai){}  ;
    
     \node[dot,scale=0.3] at (\x + 2cm ,\y + 0.5cm) (d1){};
    \node[dot,scale=0.3] at (\x + 2.1cm ,\y + 0.5cm) (d1){};
    \node[dot,scale=0.3] at (\x + 2.2cm ,\y + 0.5cm) (d1){};

    \node[roundnode] at (\x + 2.7cm ,\y + 0.5cm )     (ak){}  ;

    \node[roundnode] at (\x + 1.5cm ,\y + 1.5cm )     (si){}  ;
    \node[roundnode] at (\x + 1.5cm ,\y + 2.3cm )     (sij){}  ;
    \node[roundnode] at (\x + 2.7cm ,\y + 1.5cm )     (s'k){}  ;
    \node[roundnode] at (\x + 2.7cm ,\y + 2.3cm )     (sk){}  ;

    \node[roundnode] at (\x - 1.5cm ,\y - 0.5cm )     (b1){}  ;
    \node[roundnode] at (\x - 1.5cm ,\y - 1.5cm )     (t'1){}  ;
    \node[roundnode] at (\x - 1.5cm ,\y - 2.3cm )     (t1){}  ;

    \node[roundnode] at (\x - 0.5cm ,\y - 0.5cm  )     (b2){}  ;
    \node[roundnode] at (\x - 0.5cm ,\y - 1.5cm )     (t'2){}  ;
    \node[roundnode] at (\x - 0.5cm ,\y - 2.3cm )     (t2){}  ;

    \node[dot,scale=0.3] at (\x - 0.1cm ,\y - 0.5cm) (d1){};
    \node[dot,scale=0.3] at (\x + 0cm ,\y - 0.5cm) (d1){};
    \node[dot,scale=0.3] at (\x + 0.1cm ,\y - 0.5cm) (d1){};
    
      \node[roundnode] at (\x + 0.5cm ,\y - 0.5cm )     (bi1){}  ;
      \node[roundnode] at (\x + 0.5cm ,\y - 1.5cm )     (t'i1){}  ;
      \node[roundnode] at (\x + 0.5cm ,\y - 2.3cm )     (ti1){}  ;
      
    \node[roundnode] at (\x + 1.5cm ,\y - 0.5cm )     (bi){}  ;
    
     \node[dot,scale=0.3] at (\x + 2.cm ,\y-0.5cm) (d1){};
    \node[dot,scale=0.3] at (\x + 2.1cm ,\y -0.5cm) (d1){};
    \node[dot,scale=0.3] at (\x + 2.2cm ,\y -0.5cm) (d1){};

    \node[roundnode] at (\x + 2.7cm ,\y - 0.5cm )     (bk){} ; 
    
    \node[] at (\x - 1.1cm , \y + 0.2cm) {$a_1$};
     \node[] at (\x - .15cm , \y + 0.2cm) {$a_2$};
     \node[] at (\x + 1cm , \y + 0.2cm) {$a_{i-1}$};
     \node[] at (\x+0.9cm  , \y + 2.3cm) {$s_{i,i+1}$};
     \node[] at (\x+0.9cm  , \y + 1.5cm) {$s'_{i,i+1}$};
     \node[] at (\x + 2.05cm , \y + 0.2cm) {$a_{i+1}$};
         \node[] at (\x + 3cm , \y + 0.2cm) {$a_{k}$};
           \node[] at (\x+3.2cm  , \y + 1.5cm) {$s'_{i,k}$};
           \node[] at (\x+3.2cm  , \y + 2.3cm) {$s_{i,k}$};
     
     \node[] at (\x - 1.1cm , \y - 0.8cm) {$b_1$};
     \node[] at (\x - 2cm , \y - 1.5cm) {$t'_{1,i}$};
     \node[] at (\x - 2cm , \y - 2.3cm) {$t_{1,i}$};
     \node[] at (\x - .15cm , \y - 0.8cm) {$b_2$};
     \node[] at (\x -1cm , \y - 1.5cm) {$t'_{2,i}$};
     \node[] at (\x -1cm , \y - 2.3cm) {$t_{2,i}$};
     
     \node[] at (\x + 1cm , \y - 0.8cm) {$b_{i-1}$};
     \node[] at (\x+1.2cm  , \y - 1.5cm) {$t'_{i-1,i}$};
     \node[] at (\x+1.2cm  , \y - 2.3cm) {$t_{i-1,i}$};
     
      \node[] at (\x + 2.05cm , \y - 0.8cm) {$b_{i+1}$};
     \node[] at (\x + 3cm , \y - 0.8cm) {$b_{k}$};

      \draw[-{Stealth[scale=1]}] (a1) to (a2);
    \draw[-{Stealth[scale=1]}] (a1) to (b1);
    \draw[-{Stealth[scale=1]}] (a2) to (b2);
    \draw[-{Stealth[scale=1]}] (ai1) to (ai);
    \draw[-{Stealth[scale=1]}] (ai) to (bi);
    \draw[-{Stealth[scale=1]}] (ai1) to (bi1);
    \draw[-{Stealth[scale=1]}] (bi1) to (bi);

    \draw[-{Stealth[scale=1]}] (si) to (ai);
    \draw[-{Stealth[scale=1]}] (sij) to (si);
    \draw[-{Stealth[scale=1]}] (s'k) to (ak);
    \draw[-{Stealth[scale=1]}] (sk) to (s'k);

     \draw[-{Stealth[scale=1]}] (b1) to (t'1);
    \draw[-{Stealth[scale=1]}] (b2) to (t'2);
    \draw[-{Stealth[scale=1]}] (bi1) to (t'i1);
    \draw[-{Stealth[scale=1]}] (t'1) to (t1);
    \draw[-{Stealth[scale=1]}] (t'2) to (t2);
    \draw[-{Stealth[scale=1]}] (t'i1) to (ti1);

    \draw[-{Stealth[scale=1]}] (b1) to (b2);
    \draw[-{Stealth[scale=1]}] (ak) to (bk);
    \draw[-{Stealth[scale=1]}] (\x - 2.5cm, \y + 0.5cm) to (a1);
    \draw[-{Stealth[scale=1]}] (\x - 2.5cm, \y - 0.5cm) to (b1);

    \draw[-{Stealth[scale=1]}] (ak) to (\x + 3.8cm, \y + 0.5cm);
    \draw[-{Stealth[scale=1]}] (bk) to (\x + 3.8cm, \y - 0.5cm);


 \end{tikzpicture}
  \captionof{figure}{Gadget $G^{i,u}$, $0< u \leq n$, $i \in [k]$}
  \label{fig:I-SPP-W-gadget}
\end{minipage}%
\qquad 
\begin{minipage}{.46\textwidth}
  \centering
  \tikzset{every picture/.style={line width=0.75pt}} 

\begin{tikzpicture}[
roundnode/.style={circle, draw=black,  thick, minimum size=5mm, scale=0.7},
squarednode/.style={rectangle, draw=red!60, fill=red!5, very thick, minimum size=5mm},
dot/.style={fill=black,circle,minimum size=1pt}
]

\def\x{0.6}
\def\y{0}

\def\l{2.6}
\def\m{-2.5}

\node[rectangle,draw,minimum width = 3cm, 
    minimum height = 1.5cm, rounded corners=5pt] (b) at (\x,\y) {};

     \node[roundnode] at (\x - 0.5cm ,\y + 0.5cm )     (a1){}  ;
      \node[dot,scale=0.3] at (\x - 0.1cm ,\y + 0.5cm) (d1){};
    \node[dot,scale=0.3] at (\x - 0.0cm ,\y + 0.5cm) (d1){};
    \node[dot,scale=0.3] at (\x + 0.1cm ,\y + 0.5cm) (d1){};
    \node[roundnode] at (\x  cm ,\y + 0.5cm )     (aj){}  ;

    \node[dot,scale=0.3] at (\x + 0.9cm ,\y + 0.5cm) (d1){};
    \node[dot,scale=0.3] at (\x + 1.cm ,\y + 0.5cm) (d1){};
    \node[dot,scale=0.3] at (\x + 1.1cm ,\y + 0.5cm) (d1){};
    \node[roundnode] at (\x + 1.5cm ,\y + 0.5cm )     (ak){}  ;
    \node[roundnode] at (\x - 0.5 cm ,\y - 0.5cm )     (b1){}  ;
    \node[dot,scale=0.3] at (\x - 0.1cm ,\y - 0.5cm) (d1){};
    \node[dot,scale=0.3] at (\x - 0.0cm ,\y - 0.5cm) (d1){};
    \node[dot,scale=0.3] at (\x + 0.1cm ,\y - 0.5cm) (d1){};

    \node[roundnode] at (\x cm ,\y - 0.5cm )     (bj){}  ;
     \node[dot,scale=0.3] at (\x + 0.9cm ,\y - 0.5cm) (d1){};
    \node[dot,scale=0.3] at (\x + 1.cm ,\y - 0.5cm) (d1){};
    \node[dot,scale=0.3] at (\x + 1.1cm ,\y - 0.5cm) (d1){};
    \node[roundnode] at (\x +1.5 cm ,\y - 0.5cm )     (bk){}  ;

    \draw[-{Stealth[scale=1]}] (a1) to (b1);
    \draw[-{Stealth[scale=1]}] (aj) to (bj);
    \draw[-{Stealth[scale=1]}] (ak) to (bk);

    \node[] at (\x - 0.2cm , \y + 0.2cm) {$ a_{1}$};
    \node[] at (\x + 0.9cm , \y + 0.2cm) {$ a_{j}$};
    \node[] at (\x + 1.74cm , \y + 0.2cm) {$ a_{k}$};

    \node[] at (\x - 0.5cm , \y - 1cm) {$ b_{1}$};
    \node[] at (\x + 0.6cm , \y - 1cm) {$ b_{j}$};
    \node[] at (\x + 1.9cm , \y - 0.5cm) {$ b_{k}$};

    \node[] at (\x + 2.5cm , \y ) {$ G^{i,u}$};

    \pgfmathsetmacro {\xa }{ \x + 2.2cm}
    \pgfmathsetmacro {\ya }{ \y - 2.1cm}

    \node[rectangle,draw,minimum width = 3cm, 
    minimum height = 1.5cm, rounded corners=5pt] (b) at (\xa + 0.6cm ,\ya + 0cm) {};

    \node[roundnode] at (\xa - 0.5cm ,\ya + 0.5cm )     (a12){}  ;
      \node[dot,scale=0.3] at ( \xa - 0.1cm , \ya + 0.5cm) (d1){};
    \node[dot,scale=0.3] at (\xa - 0.0cm , \ya + 0.5cm) (d1){};
    \node[dot,scale=0.3] at (\xa + 0.1cm , \ya + 0.5cm) (d1){};
    \node[roundnode] at (\xa + 0.5cm , \ya + 0.5cm )     (aj2){}  ;

    \node[dot,scale=0.3] at (\xa + 0.9cm , \ya + 0.5cm) (d1){};
    \node[dot,scale=0.3] at (\xa + 1.cm , \ya + 0.5cm) (d1){};
    \node[dot,scale=0.3] at (\xa + 1.1cm , \ya + 0.5cm) (d1){};
    \node[roundnode] at (\xa + 1.5cm , \ya + 0.5cm )     (ak2){}  ;
    \node[roundnode] at (\xa - 0.5 cm , \ya - 0.5cm )     (b12){}  ;
    \node[dot,scale=0.3] at (\xa - 0.1cm , \ya - 0.5cm) (d1){};
    \node[dot,scale=0.3] at (\xa - 0.0cm , \ya - 0.5cm) (d1){};
    \node[dot,scale=0.3] at (\xa + 0.1cm , \ya - 0.5cm) (d1){};

    \node[roundnode] at (\xa + 0.5 cm  , \ya - 0.5cm )     (bj2){}  ;
    
     \node[dot,scale=0.3] at (\xa + 0.9cm , \ya - 0.5cm) (d1){};
    \node[dot,scale=0.3] at (\xa + 1.cm , \ya - 0.5cm) (d1){};
    \node[dot,scale=0.3] at (\xa + 1.1cm , \ya - 0.5cm) (d1){};
    \node[roundnode] at (\xa +1.5 cm , \ya - 0.5cm )     (bk2){}  ;

    \draw[-{Stealth[scale=1]}] (a12) to (b12);
    \draw[-{Stealth[scale=1]}] (aj2) to (bj2);
    \draw[-{Stealth[scale=1]}] (ak2) to (bk2);

    \node[] at (\xa - 0.2cm ,  \ya + 0.2cm) {$ a_{1}$};
    \node[] at (\xa + 0.9cm ,  \ya + 0.2cm) {$ a_{i}$};
    \node[] at (\xa + 1.8cm ,  \ya + 0.2cm) {$ a_{k}$};

    \node[] at (\xa - 0.5cm ,  \ya - 1cm) {$ b_{1}$};
    \node[] at (\xa + 0.7cm ,  \ya - 1cm) {$ b_{i}$};
    \node[] at (\xa + 1.5cm ,  \ya - 1cm) {$ b_{k}$};

    \draw[-{Stealth[scale=1]}] (bj) to (aj2);

    \node[] at (\xa - 1.5cm , \ya - 0cm   ) {$ G^{j,v}$};


    \end{tikzpicture}
  \captionof{figure}{$G^{i,u}$ connected to $G^{j,v}$, $i<j$}
  \label{fig:I-SPP-verifier}
\end{minipage}
\end{figure}

 \begin{figure}[tbh]
    \centering
    \tikzset{every picture/.style={line width=0.75pt}} 

\begin{tikzpicture}[
roundnode/.style={circle, draw=black,  thick, minimum size=5mm, scale=0.5},
squarednode/.style={rectangle, draw=red!60, fill=red!5, very thick, minimum size=5mm},
dot/.style={fill=black,circle,minimum size=1pt}
]

\def\x{0}
\def\y{0}
 \pgfmathsetmacro {\ya }{ \y - 1cm }
 \def\sc{0.8}

    minimum height = 1.5cm, rounded corners=5pt] (b) at (\x + 1.25cm,\y - 0.5cm) {};
    \draw[rounded corners = 5pt] (\x - 0.5cm, \y + 0.2 cm) rectangle (\x + .5cm, \y - 0.5 cm) {};
    
    \draw[rounded corners = 5pt] (\x + 0.75cm, \y + 0.5 cm) rectangle (\x + 1.75cm, \ya - 0.5 cm) {};
    \draw[rounded corners = 5pt] (\x + 2.25cm, \y + 0.5 cm) rectangle (\x + 3.75cm, \ya - 0.5 cm) {};
    \draw[rounded corners = 5pt] (\x + 4.25cm, \y + 0.5 cm) rectangle (\x + 5.75cm, \ya - 0.5 cm) {};
    \draw[rounded corners = 5pt] (\x + 6.25cm, \y + 0.5 cm) rectangle (\x + 7.75cm, \ya - 0.5 cm) {};
    \draw[rounded corners = 5pt] (\x + 2.25cm, \y + 0.5 cm) rectangle (\x + 3.75cm, \ya - 0.5 cm) {};
    \draw[rounded corners = 5pt] (\x + 9.25cm, \y + 0.5 cm) rectangle (\x + 10.75cm, \ya - 0.5 cm) {};

\node[roundnode] at (\x - 0.3cm  ,\y  )     (si){}  ;
\node[roundnode] at (\x + 0.3cm  ,\y  )     (sid){}  ;

\node[roundnode] at (\x + 1cm  ,\y  )     (a10){}  ;
\node[roundnode] at (\x + 1.5cm  ,\y  )     (a20){}  ;

\node[roundnode] at (\x + 2.5cm  ,\y  )     (a11){}  ;
\node[roundnode] at (\x + 3.5cm  ,\y  )     (ak1){}  ;

\node[roundnode] at (\x +4.5cm  ,\y  )     (a12){}  ;
\node[roundnode] at (\x + 5.5cm  ,\y  )     (ak2){}  ;

\node[roundnode] at (\x + 6.5cm  ,\y  )     (a13){}  ;
\node[roundnode] at (\x + 7.5cm  ,\y  )     (ak3){}  ;

 \node[dot,scale=0.3] at (\x + 8.4cm ,\y - 0.5cm) (d1){};
    \node[dot,scale=0.3] at (\x + 8.5cm ,\y - 0.5cm) (d1){};
    \node[dot,scale=0.3] at (\x + 8.6cm ,\y - 0.5cm) (d1){};

\node[roundnode] at (\x + 9.5cm  ,\y  )     (a1n){}  ;
\node[roundnode] at (\x + 10.5cm  ,\y  )     (akn){}  ;


\node[roundnode] at (\x + 1cm  ,\ya + 0cm )     (b10){}  ;
\node[roundnode] at (\x + 1.5cm  ,\ya  + 0cm)     (b20){}  ;

\node[roundnode] at (\x + 2.5cm  ,\ya + 0cm  )     (b11){}  ;
\node[roundnode] at (\x + 3.5cm  ,\ya + 0cm )     (bk1){}  ;

\node[roundnode] at (\x +4.5cm  ,\ya + 0cm )     (b12){}  ;
\node[roundnode] at (\x + 5.5cm  ,\ya + 0cm )     (bk2){}  ;

\node[roundnode] at (\x + 6.5cm  ,\ya + 0cm )     (b13){}  ;
\node[roundnode] at (\x + 7.5cm  ,\ya + 0cm )     (bk3){}  ;

\node[roundnode] at (\x + 9.5cm  ,\ya + 0cm )     (b1n){}  ;
\node[roundnode] at (\x + 10.5cm  ,\ya + 0cm )     (bkn){}  ;

\node[roundnode] at (\x + 11cm  ,\ya + 0cm )     (tpi){}  ;
\node[roundnode] at (\x + 11.5cm  ,\ya + 0cm )     (ti){}  ;

\draw[-{Stealth[scale=\sc]}] (si) to (sid);
\draw[-{Stealth[scale=\sc]}] (sid) to (a10);
\draw[-{Stealth[scale=\sc]}] (a10) to (a20);
\draw[-{Stealth[scale=\sc]}] (b10) to (b20);
\draw[-{Stealth[scale=\sc]}] (a20) to (a11);
\draw[-{Stealth[scale=\sc]}] (b20) to (b11);
\draw[-{Stealth[scale=\sc]}] (ak1) to (a12);
\draw[-{Stealth[scale=\sc]}] (bk1) to (b12);
\draw[-{Stealth[scale=\sc]}] (ak2) to (a13);
\draw[-{Stealth[scale=\sc]}] (bk2) to (b13);
\draw[-{Stealth[scale=\sc]}] (a20) .. controls (3.25,-1) and (3.25,0) .. (b12);
\draw[-{Stealth[scale=\sc]}] (ak1) .. controls (5.25,-1) and (5.25,0) .. (b13);
\draw[-{Stealth[scale=\sc]}] (si) to (sid);
\draw[-{Stealth[scale=\sc]}] (b1n) to (bkn);
\draw[-{Stealth[scale=\sc]}] (bkn) to (tpi);
\draw[-{Stealth[scale=\sc]}] (tpi) to (ti);

\draw[-{Stealth[scale=\sc + 0.2]}, bend left = 24,dashed,thick] (\x , \y + 0.25cm) to  (\x + 10 cm , \y + 0.5cm);

  \node[] at (\x - 0.3cm , \y - 0.3 cm) {\small $s_i$};
  \node[] at (\x + 0.3cm , \y - 0.3 cm) {\small $s'_i$};

  \node[] at (\x + 1cm, \y + 0.3 cm) {\small $a_1$};
  \node[] at (\x + 1.5cm , \y + 0.3 cm) {\small $a_2$};

   \node[] at (\x + 2.5cm, \y + 0.3 cm) {\small $a_1$};
  \node[] at (\x + 3.5cm , \y + 0.3 cm) {\small $a_k$};

  \node[] at (\x + 4.5cm, \y + 0.3 cm) {\small $a_1$};
  \node[] at (\x + 5.5cm , \y + 0.3 cm) {\small $a_k$};

  \node[] at (\x + 6.5cm, \y + 0.3 cm) {\small $a_1$};
  \node[] at (\x + 7.5cm , \y + 0.3 cm) {\small $a_k$};

  \node[] at (\x + 9.5cm, \y + 0.3 cm) {\small $a_1$};
  \node[] at (\x + 10.4cm , \y + 0.3 cm) {\small $a_{k+2}$};

  \node[] at (\x + 1cm, \y - 1.3 cm) {\small $b_1$};
  \node[] at (\x + 1.5cm , \y - 1.3 cm) {\small $b_2$};

   \node[] at (\x + 2.5cm, \y - 1.3 cm) {\small $b_1$};
  \node[] at (\x + 3.5cm , \y - 1.3 cm) {\small $b_k$};

  \node[] at (\x + 4.5cm, \y - 1.3 cm) {\small $b_1$};
  \node[] at (\x + 5.5cm , \y - 1.3 cm) {\small $b_k$};

  \node[] at (\x + 6.5cm, \y - 1.3 cm) {\small $b_1$};
  \node[] at (\x + 7.5cm , \y - 1.3 cm) {\small $b_k$};

  \node[] at (\x + 9.5cm, \y - 1.3 cm) {\small $b_1$};
  \node[] at (\x + 10.4cm , \y - 1.3 cm) {\small $b_2$};

   \node[] at (\x + 11cm, \y - 0.7cm) {\small $t'_i$};
  \node[] at (\x + 11.5cm , \y - 0.7 cm) {\small $t_i$};

    \node[] at (\x + 1.2cm , \y - 1.8cm) {$G^{i,0}$};
    \node[] at (\x + 3cm , \y - 1.8cm) {$G^{i,1}$};
    \node[] at (\x + 5cm , \y - 1.8cm) {$G^{i,2}$};
    \node[] at (\x + 7cm , \y - 1.8cm) {$G^{i,3}$};
    \node[] at (\x + 10cm , \y - 1.8cm) {$G^{i,n+1}$};


 \end{tikzpicture}
    \caption{The $i^{th}$ row (only the first two skipping arcs are shown), the dashed line indicates arcs (shortest paths enforcers)} from $s_i, s'_i$ to $a_j^{i,n+1}$, with $j \in [k+2]$.
    \label{fig:I-SPP-verifier-a}
    
\end{figure}

\smallskip 
\noindent
A \textbf{selector} is a path that starts at $s_i$, enters its row at the top level, and exits it at the bottom level, ending at $t_i$ and skipping exactly one gadget~$G^{i,u}$, with $i \in [k]$ and $u \in V$. In order to implement this, we add the arcs $(s'_i, a_1^{i,0})$ and $(b_2^{i,n+1}, t'_i)$ for row~$i$, see the definition of $E_{\text{stcon}}^{i}$, as well as \emph{skipping arcs} that allow to skip a gadget $G^{i,u}$ for $u\in V$. Such an arc connects the last vertex of the top level of $G^{i,u-1}$ to  the first vertex of the bottom level of $G^{i,u+1}$, see \autoref{fig:I-SPP-verifier-a} for an illustration and the definition of $ E_{\text{skip}}^{i}$ for a formal treatment.

\noindent
A \textbf{verifier} is a path that routes through one of the vertices of the skipped gadget and connects column terminals $s_{i,j}$ to $t_{i,j}$. In order to implement this, we add the arcs $(b_{1}^{i,0}, t'_{i,j})$, $(b_{1}^{i,0}, t_{i,j})$,$(b_{2}^{i,0}, t_{i,j})$ and $(b_{2}^{i,0}, t'_{i,j})$ to every gadget in rows~$i$ and~$j$. See the formal definition of $E_{\text{fbct}}^i$. To connect the  gadget in row~$i$ and~$j$, for every edge $uv$ in $G$, we add an arc 
$(b_j^{i,u}, a_i^{j,v})$ for each $i<j$, see \autoref{fig:I-SPP-verifier} for an illustration and the definition of $E_{\text{con}}$ for formalities.

    To force the start and end vertices of paths in the spp of $G'$ of size~$k'$, we add the following arcs, called \emph{shortest paths enforcers}, see the definition of $E_{\text{spf}}$. For every row $i$, from all the bottom vertices of gadget $G^{j,u}$, where $j < i$ and $0 \leq u \leq n$, from every vertex $s_l$ and $s'_l$ of a row start terminal and from every vertex $s_{l,h}$, $s'_{l,h}$ of a column start terminal, where $l \leq i$ and  $l < h$, add arcs to $a_j^{i,n+1}$, with $j \in [k+2]$, see \autoref{fig:I-SPP-verifier-a}. The idea is, except for $T^{i}$, for all the vertices which can reach $a_j^{i,n+1}$, we add an arc connecting it to $a_j^{i,n+1}$.

\smallskip
\noindent

We are now going to show a number of properties of our construction.

\smallskip
\noindent
Observe that $G'$ is a DAG because all arcs either go from left to right or from top to bottom.
\noindent
Next, using \autoref{obs:in_out_deg} and \autoref{obs:top_row},  we can deduce that, if $G'$ has an spp of size $k'$, then the start vertex and end vertex of each path in the solution have the desired properties.

\begin{claim} \label{obs:in_out_deg}
        $G'$ has $k' - k$ vertices with in-degree zero and $k' - k$ vertices with out-degree zero.
    \end{claim}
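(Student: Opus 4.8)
The plan is to prove the claim by a direct structural count of the source vertices (in-degree zero) and the sink vertices (out-degree zero) of the DAG~$G'$; I read the statement as asserting that \emph{each} of these two counts equals $\binom{k}{2}+2k=k'-k$. Since $G'$ is completely determined by the construction (and is identical for the \textsc{DAGSPP} and \textsc{DAGIPP} variants), this is a purely combinatorial degree-counting argument that does not reference any path partition at all; it is used only afterwards, together with \autoref{obs:top_row}, to constrain where solution paths may start and end.

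First I would list the three families of sources. The $k$ start terminals $s_i$ ($i\in[k]$) are sources, because the construction only ever adds arcs \emph{leaving} $s_i$ (to $s'_i$, and to the later end terminals $t_l,t'_l$ with $l>i$). The $\binom{k}{2}$ column start terminals $s_{i,j}$ ($i<j$) are sources for the same reason. The remaining, less obvious family is the set of bottom-left dummy vertices $b_1^{i,0}$ ($i\in[k]$): in the dummy gadget $G_{i,0}$ the only incident arcs are the internal $b_1^{i,0}\to b_2^{i,0}$ and the outgoing arcs from $b_1^{i,0}$ to various terminals, while no arc enters $b_1^{i,0}$ (see \autoref{fig:I-SPP-verifier-a}); hence it is a source. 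This yields $k+\binom{k}{2}+k=\binom{k}{2}+2k$ sources. Symmetrically, the sinks split into three families: the $k$ end terminals $t_i$; the $\binom{k}{2}$ column-end terminals $t'_{j,i}$ (which absorb the verifier arcs and the arc $t_{j,i}\to t'_{j,i}$ but emit nothing); and the $k$ right-hand dummy tails $a_{k+2}^{i,n+1}$, which the construction explicitly gives out-degree zero. This again totals $\binom{k}{2}+2k$.

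The main work, and the part most prone to error, is verifying that these two lists are \emph{exhaustive}, i.e.\ that every other vertex of $G'$ has both in-degree and out-degree at least one. For this I would go type by type through the generic gadget $G_{i,u}$ (each top- or bottom-wire vertex receives an arc either from the corresponding vertex of the preceding gadget or via the within-gadget wire $a_r^{i,u}\to b_r^{i,u}$, and emits an arc either to the next gadget or down its own wire, cf.\ \autoref{fig:I-SPP-W-gadget} and \autoref{fig:I-SPP-verifier}), and then treat the irregular boundary objects separately: the dummy gadgets $G_{i,0}$ and $G_{i,n+1}$, the primed terminals $s'_i$ and $s'_{i,j}$, and the unprimed column-end terminals $t_{j,i}$, where the degrees are atypical.

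I expect the genuinely delicate checks to be exactly two. First, that $a_1^{i,0}$ is not an extra source: it is fed by the selector arc $s'_i\to a_1^{i,0}$. Second, that each unprimed column-end terminal $t_{j,i}$ is not an extra source: it receives an incoming arc from some dummy vertex $b_1^{l,0}$ (recall the construction adds arcs from $b_1^{l,0},b_2^{l,0}$ to all $t_{m,h}$ with $m<h$ and $l\le h$, so taking $l=1\le i$ covers $t_{j,i}$). One must also confirm that the enforcer vertices $a_j^{i,n+1}$ with $j\le k+1$ each have an outgoing arc along the path $P_{k+2}$, so that only $a_{k+2}^{i,n+1}$ is a sink there. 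Once exhaustiveness is confirmed, both counts equal $\binom{k}{2}+2k=k'-k$, which proves the claim.
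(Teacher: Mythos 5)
Your proposal is correct and follows essentially the same route as the paper: the paper's proof is exactly this count of $k+\binom{k}{2}+k$ sources ($s_i$, the $k-i$ column start terminals per row, and $b_1^{i,0}$) and the symmetric count of sinks ($t_i$, the $i-1$ column end terminals per row, and $a_{k+2}^{i,n+1}$). Your additional care in verifying exhaustiveness (e.g., that $a_1^{i,0}$ and the $t_{j,i}$ are not extra sources) goes slightly beyond what the paper writes down, but it is the same argument.
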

    
\begin{proof}
    First, recall that  $k' = \frac{k\cdot (k-1)}{2} + 3k$. Hence, we have to show that we have $ \frac{k\cdot (k-1)}{2} + 2k$ many vertices with in-degree zero. Namely, for each row $i$, with $i \in [k]$, we have one such vertex as its {row start terminal}~$s_i$, $k-i$ {column start terminals} and one vertex of the dummy gadget $b_1^{i, 0}$. We also have $\frac{k\cdot (k-1)}{2} + 2k$ vertices with out-degree zero, for each row $i$, $i \in [k]$, one as the row's target terminal~$t_i$, $i-1$ as its column target terminals and as a vertex of the dummy gadget $a_{k+2}^{i,n+1}$, where  $|V|=n$. 
         \renewcommand{\qedsymbol}{$\Diamond$}
    \end{proof}
    
\smallskip
\noindent
Hence, we know that $k' - k$ start and $k' - k$ end vertices of the solution are fixed. The next claim shows that each row $i$ has one more start vertex of some path, hence fixing all the starting vertices of all the paths in the solution.

    \begin{claim} \label{obs:top_row}
        If a solution~$\mathbb{P}$ of the created  \textsc{DAGSPP} instance~$G'$ is of size $k'$ then, for each row $i \in [k]$, there is a path in $\mathbb{P}$ starting from $v\in T^{i}$ which covers all the vertices of $\{a_j^{i,n+1}\mid j \in [k + 2]\}$.
    \end{claim}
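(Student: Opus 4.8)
The plan is to combine the in-/out-degree count of \autoref{obs:in_out_deg} with a rigidity property forced by the shortest-path enforcers. By \autoref{obs:in_out_deg} there are exactly $k'-k$ vertices of in-degree zero, and each must be the first vertex of some path of $\mathbb{P}$; since $|\mathbb{P}|=k'$, precisely $k$ paths begin at a vertex of positive in-degree. Call these the \emph{free starts}, and define \emph{free ends} symmetrically (again exactly $k$ of them). Because there are $k$ rows, it suffices to show that every row $i$ owns at least one path that begins in some $T^{i,u}$ and passes through $a_1^{i,n+1}$: the $k$ free starts then distribute as one per row.

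First I would record the crucial effect of the enforcers. Each enforcer source $w$ for row $i$ has an arc to every vertex $a_1^{i,n+1},\dots,a_{k+2}^{i,n+1}$, so $\mathrm{dist}_{G'}(w,a_j^{i,n+1})=1$ for all $j$. On the other hand, the dummy top $G_{i,n+1}$ is an induced directed path whose only internal arcs are the consecutive ones, and the sole non-enforcer arc entering $a_j^{i,n+1}$ comes from its top-level predecessor in row $i$ (the vertex $a_{j-1}^{i,n+1}$, or the last top vertex of $G_{i,n}$ when $j=1$). I would then prove the following rigidity: any path of $\mathbb{P}$ containing two consecutive vertices $a_j^{i,n+1},a_{j+1}^{i,n+1}$ must begin in some $T^{i,u}$. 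Indeed, trace such a path backwards from $a_{j+1}^{i,n+1}$: as long as it follows top-level predecessors it stays in row $i$, but the moment it would leave the top level of row $i$ (through a verifier arc from a lower row, a column-start arc, or an enforcer arc) its starting vertex $s$ is a gadget vertex of a row below $i$ or a start-/column-start terminal of index at most $i$ --- in all cases an enforcer source, so that $\mathrm{dist}_{G'}(s,a_{j+1}^{i,n+1})=1$ while the path has length at least two, contradicting that it is a shortest path. Hence its first vertex lies in some $T^{i,u}$.

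Next I would apply this to $a_{k+2}^{i,n+1}$, which is one of the out-degree-zero vertices of \autoref{obs:in_out_deg} and hence the last vertex of some path $Q_i\in\mathbb{P}$. If $Q_i$ reaches $a_{k+2}^{i,n+1}$ along the arc from $a_{k+1}^{i,n+1}$, then it contains two consecutive dummy-top vertices, so by the rigidity it starts in $T^{i,u}$; moreover the same backward trace forces it to run along the whole dummy top, so it covers $a_1^{i,n+1}$, and we are done for row $i$. The only way to avoid this conclusion is that the dummy top $a_1^{i,n+1}\to\cdots\to a_{k+2}^{i,n+1}$ is split among several paths of $\mathbb{P}$. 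As its internal arcs are exactly the consecutive ones, each split point is a vertex $a_j^{i,n+1}$ with $j\le k+1$ that is simultaneously the last vertex of a path and of positive out-degree, i.e., a free end located on the dummy top.

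The step I expect to be the main obstacle is to exclude this fragmentation using the global budget of exactly $k$ free ends. I would show that in every size-$k'$ solution the $k$ free ends are already forced to sit outside the dummy tops --- the natural candidates being the bottom vertices $b_2^{i,0}$ of the initial dummy gadgets $G_{i,0}$, which can be shown not to merge into any selector or verifier path --- so that all $k$ of them are consumed there. Any split of a dummy top would then create a $(k+1)$-st free end, contradicting \autoref{obs:in_out_deg}. Consequently no dummy top is fragmented, each $Q_i$ is the single long path through its row, and the claim follows. Pinning down precisely which vertices must carry the $k$ free ends in an arbitrary solution is the delicate part, and it is here that the detailed interplay of the selectors, the verifiers, and the two dummy gadgets per row has to be exploited.
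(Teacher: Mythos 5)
Your setup is sound and, up to the rigidity lemma, is a more carefully argued version of what the paper does: the enforcer arcs make every vertex that enters the dummy top ``from outside'' adjacent to all of $a_1^{i,n+1},\dots,a_{k+2}^{i,n+1}$, so a shortest (or induced) path that traverses two consecutive dummy-top vertices must have entered row~$i$'s top level only at its start; and \autoref{obs:in_out_deg} caps the number of paths that may start (resp.\ end) at a vertex of positive in-degree (resp.\ out-degree) at exactly~$k$. However, the decisive step --- excluding that the dummy top is split into a path ending at some $a_j^{i,n+1}$ ($j\le k+1$) followed by a path starting at $a_{j+1}^{i,n+1}$ --- is exactly the part you leave open, and the concrete route you sketch for it does not work. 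The $k$ free ends are \emph{not} forced to sit at the fixed vertices $b_2^{i,0}$: even in the intended solution the free end of row~$i$ is the last bottom-level vertex before whichever gadget that row skips, so its location varies with the solution; and, more fundamentally, at the point where \autoref{obs:top_row} is being proved nothing about the row-by-row structure of an arbitrary size-$k'$ solution has been established yet, so you cannot assume the free ends are pinned anywhere in advance. As written, your argument would let a solution spend one free end on $a_1^{i,n+1}$ and one free start on $a_2^{i,n+1}$ without any visible violation of the budget, which is precisely the case the claim must rule out.

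The paper takes a blunter route that avoids locating the free ends at all: assuming no path starts in $T^{i,u}$ and covers $a_1^{i,n+1}$, it argues that each of the $k+2$ vertices $a_j^{i,n+1}$ must then be covered by its own path (since any other access to the dummy top comes from a vertex adjacent to \emph{all} of its vertices), and then counts: those $k+2$ paths can absorb only one of the $k'-k$ out-degree-zero vertices (namely $a_{k+2}^{i,n+1}$), so together with the $k'-k-1$ remaining out-degree-zero vertices, each requiring a distinct path, one exceeds $k'$ paths. If you want to complete your version, the missing lemma is of this flavour: you must show that a split of the dummy top forces strictly more than the available budget of free starts plus free ends once the covering of the rest of row~$i$ (the top level up to $G_{i,n}$, whose last vertex can no longer continue into $a_1^{i,n+1}$, and the bottom level) is accounted for --- not that the free ends already live at some fixed set of vertices.
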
 
    \begin{proof}
    First, we will prove that there is a path in $\mathbb{P}$ starting from $v\in T^{i}$ which covers at least one vertex from $\{a_j^{i,n+1}\mid j \in [k + 2]\}$. We will prove this by contradiction. Assume that there is an spp $\mathbb{P}$ such that there exists a row $i$, with $i \in [k]$, such that there is no shortest path starting from~$v$, where $v \in T^{i}$, which covers any of the vertices $a_j^{i,n+1}$, with $j \in [k+2]$. Then, to cover the vertices $a_j^{i,n+1}$, with $j \in [k+2]$, $\mathbb{P}$ needs $k+2$ paths, as every vertex that can reach $a_l^{i,n+1}$, where $l \in [k+2]$, is connected to all the vertices $a_j^{i,n+1}$, where $j \in [k+2]$. Hence, we require $k+2$ paths to just cover $a_j^{i,n+1}$ for $j \in [k+2]$, but this leaves $G'$ with $k'-k-1$ out-degree zero vertices. This contradicts that $\mathbb{P}$ is of size $k'$. Next, it is easy to see that if a path starting at $v \in T^i$ ends at some $a_{q}^{i,n+1}$ where $q \in [k+1]$, it must cover all the vertices $a_{l}^{i,n+1}$ where $l < q$, as there is only one shortest path joining $v$ and $a_{q}^{i,n+1}$. Finally, we can observe that we can extend the path starting at $v \in T^i$ to end at $a_{k+2}^{i,n+1}$, with $i \in [k]$, otherwise it would violate the small size of the partition.    
    \renewcommand{\qedsymbol}{$\Diamond$}
    \end{proof}
\smallskip
\noindent

Hence, if $G'$ has an spp of size $k'$, then the paths in the solution must start at: for $i \in[k]$, $s_i$, $b_1^{i,0}$, $v$ ($v\in T^{i}$) and $s_{i,j}$, with $i< j $.  In the next claims, we will show observations about the end vertices of these paths.

    \begin{claim} \label{obs:siti}
        If $G'$ has an spp of size $k'$, then a path (in this \textsc{DAGSPP} solution) starting at $s_i$ has to end at $t_i$, with $i \in [k]$.
    \end{claim}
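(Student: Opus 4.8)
The plan is to trace forward the path $P$ that starts at $s_i$ and to argue, by combining a forced-coverage (unique-in-neighbour) argument with a downward induction on~$i$ and the already-established accounting of sources and sinks, that its only admissible terminus is $t_i$. First I would fix the global picture: by \autoref{obs:in_out_deg} the vertex $s_i$ is a source, so some path $P\in\mathbb{P}$ starts there, and together with \autoref{obs:top_row} and the discussion preceding the claim the $k'$ start vertices are exactly the listed ones, so $\mathbb{P}$ realises a bijection between its $k'$ paths and these starts. I would also record two facts restricting where $P$ may end: each dummy sink $a_{k+2}^{i',n+1}$ is already consumed, since the path guaranteed by \autoref{obs:top_row} enters the directed $P_{k+2}$ of $G_{i',n+1}$ at $a_1^{i',n+1}$ and must therefore run through it to its unique sink $a_{k+2}^{i',n+1}$; and the column-end terminals are reserved to be the endpoints of verifier paths.

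The key local step is forced coverage. The vertex $s'_i$ has $s_i$ as its unique in-neighbour and is itself not a start vertex, so the only path that can cover $s'_i$ is $P$; hence $P$ must traverse the arc $(s_i,s'_i)$, which already excludes $P$ leaving $s_i$ directly via a jump arc $s_i\to t_l,t'_l$ with $l>i$. Pushing the same reasoning one step further (the first dummy-top vertex $a_1^{i,0}$ likewise has only $s'_i$ as a possible predecessor along the row), and continuing along the top level of row~$i$ as drawn in \autoref{fig:I-SPP-verifier-a}, I would show that $P$ is forced onto the selector route: it threads the top level, takes a single skipping arc down to the bottom level, and runs along it toward $b_2^{i,n+1}\to t'_i\to t_i$. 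The jumps that issue from $s'_i$ itself (namely $s'_i\to t_l,t'_l$ and $s'_i\to a_\mu^{i',n+1}$) are deferred to the induction below.

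It then remains to exclude every other terminus. Ending at some $a_{k+2}^{i',n+1}$ is impossible because those sinks are already consumed, so any excursion $s'_i\to a_\mu^{i',n+1}$ would make $P$ a second path ending at an occupied sink. Ending at a column-end terminal is excluded since $P$ starts at a row-start terminal rather than a column-start terminal, and the shortest-path condition forbids the detour required to reach such a terminal. Finally, ending at $t_l$ with $l>i$ is ruled out by a downward induction on~$i$: the base case $i=k$ has no out-arc to any higher $t_l$ at all, and in the inductive step the equalities $s_l\to t_l$ for $l>i$ are already fixed, so $t_l$ cannot serve as a second endpoint for $P$. With every reachable sink other than $t_i$ ruled out, and with the shortest-path and budget constraints leaving no non-sink terminus available, $P$ must end at $t_i$.

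I expect the main obstacle to be exactly the interaction between the jump arcs $s_i,s'_i\to t_l,t'_l$ and the shortest-path enforcers: these give $P$ many tempting short exits, and disposing of each of them cleanly requires the forced-coverage uniqueness arguments (for $s'_i$ and the successive top-level vertices) to mesh correctly with the global source/sink count of \autoref{obs:in_out_deg} and with the inductive hypothesis, while the enforcers must be seen to keep the relevant distances short enough that a selector of precisely the intended length is the only shortest path from $s_i$ surviving all these restrictions.
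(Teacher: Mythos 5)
Your proposal runs in the opposite direction from the paper's proof: you trace the path forward from $s_i$ and try to exclude every possible terminus other than $t_i$, whereas the paper looks at the path that covers the sink $t_i$ (which necessarily ends there, since $t_i$ has out-degree zero) and eliminates every possible \emph{start} other than $s_i$. The paper's direction is the easier one for a structural reason: by \autoref{obs:in_out_deg} and \autoref{obs:top_row} the $k'$ start vertices are completely pinned down, so the backward argument only has to discard a known finite list of candidates --- a start at $s_j$ with $j<i$ would force the path to be the single arc $(s_j,t_i)$ by the shortest-path condition and thereby strand $s'_j$ as an extra path start; starts at column start terminals or at $b_1^{l,0}$ are discarded similarly, and starts at $v\in T^{m,u}$ by unreachability of $t_i$. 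The end vertices, by contrast, are \emph{not} all pinned down: only $k'-k$ paths end at sinks, and the remaining $k$ paths (in the intended solution, those starting at the $b_1^{i,0}$) end at ordinary non-sink vertices. This is where your argument has a genuine gap: your final step asserts that ``the shortest-path and budget constraints leave no non-sink terminus available,'' but the budget does leave exactly $k$ non-sink termini available, so you have not excluded the possibility that the path from $s_i$ simply stops at an interior vertex of row~$i$.

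A second gap: you declare each sink $a_{k+2}^{i',n+1}$ ``already consumed'' because the path guaranteed by \autoref{obs:top_row} ``must run through'' the directed $P_{k+2}$ to its end. That observation only guarantees coverage of $a_1^{i',n+1}$; that this path continues all the way to $a_{k+2}^{i',n+1}$ is a separate, later claim in the paper and cannot be assumed here, especially since the shortest-path enforcers provide arcs into every $a_j^{i',n+1}$ from many other vertices, so those vertices could in principle be covered otherwise. The exclusion of the column-end terminals is likewise only asserted. Your forced-coverage observation about $s'_i$ (unique in-neighbour $s_i$, not a designated start) is correct and is essentially the same lever the paper pulls for $s'_j$, but to make the forward direction sound you would still need a concrete argument for the non-sink termini; the cleaner fix is to adopt the paper's backward framing, where no such case arises.
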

    
\begin{proof}
 Let $\mathbb{P}$ be a \textsc{DAGSPP} solution of size~$k'$. The shortest path from $\mathbb{P}$ that covers~$t_i$ cannot have started at $s_j$, where $j < i$, as this path would consist of two vertices due to the arc $(s_j,t_i)$ and hence $\mathbb{P}$ would also contain a path starting at $s'_j$, which is a contradiction to its small size. Similar arguments work if the path would start at any {column start terminal} $s_{l,m}$ or at any $b_1^{l,0}$, with $l < m$, $l \leq i$. Also, the path does not start at $v \in T^{m}$, with $m \leq i$, as this has to go through $a_1^{m,n+1}$ and there is no path from $a_1^{m,n+1}$ to $t_i$. Hence, the path should start at vertex~$s_i$.\renewcommand{\qedsymbol}{$\Diamond$}
    \end{proof}

    \begin{claim}  \label{obs:sijtij}
        If $G'$ has an spp of size $k'$, then any path in this partition starting at $s_{i,j}$ has to end at $t_{i,j}$, with $i < j$.
    \end{claim}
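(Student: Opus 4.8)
The plan is to mirror the argument used for \autoref{obs:siti}, exploiting the fact that by \autoref{obs:in_out_deg}, \autoref{obs:top_row} and \autoref{obs:siti} the set of start vertices and the set of sink vertices (i.e.\ the in-degree-zero, resp.\ out-degree-zero, vertices actually used) of the $k'$ paths in any size-$k'$ solution $\mathbb{P}$ are both completely determined; since $|\mathbb{P}|=k'$ equals the number of fixed start vertices and no vertex can be the start of two paths of a partition, these paths realize a \emph{bijection} between the fixed start vertices and the fixed sinks. Hence it suffices to pin down, for the column end terminal $t_{i,j}$ (with $i<j$), the start vertex of the unique path of $\mathbb{P}$ covering $t_{i,j}$: once we know that path starts at $s_{i,j}$, the bijection forces the path starting at $s_{i,j}$ to be exactly the one ending at $t_{i,j}$.

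First I would consider the path $P\in\mathbb{P}$ covering the out-degree-zero vertex $t_{i,j}$ and rule out every candidate start vertex other than $s_{i,j}$. By construction, the sources $s_{l,m},s'_{l,m}$ (and $s_l,s'_l$, and the dummy sources $b_1^{l,0}$) are joined by arcs directly to the relevant row and column end terminals. Thus, exactly as in \autoref{obs:siti}, if $P$ started at such a ``higher'' source that has an arc to $t_{i,j}$, then $P$ would be a single arc reaching $t_{i,j}$; but then the primed partner of that source ($s'_{l,m}$, $s'_l$, or $b_2^{l,0}$), whose only in-arc comes from the source just used, would be forced to open its own path, yielding strictly more than $k'$ paths, a contradiction.

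Next I would eliminate the possibility that $P$ starts at a top-level vertex $v\in T^{m,u}$. By the argument of \autoref{obs:top_row}, any path starting on the top level of a gadget is routed through $a_1^{m,n+1}$ and then along the directed $P_{k+2}$ of the dummy gadget $G_{m,n+1}$, from which, by the left-to-right/top-to-bottom DAG orientation and the shortest-path enforcers, there is no directed route to the column end terminal $t_{i,j}$. Hence $P$ cannot start at a top-level vertex either, and the only remaining admissible start vertex reaching $t_{i,j}$ is $s_{i,j}$; combined with the bijection above, this proves the claim (the verbatim argument works for \textsc{DAGIPP} as well, since every path in question has length at most that of a single wire crossing and is induced by the same reasoning).

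The main obstacle I anticipate is the bookkeeping of the two-dimensional $(\text{row},\text{column})$ indexing and the asymmetry between rows $i$ and $j$: one must verify that the arcs emanating from \emph{every} competing source --- row start terminals, the other column start terminals, and the dummy sources across all relevant rows --- indeed supply the direct shortcut to $t_{i,j}$ needed to trigger the ``extra path'' contradiction, and, conversely, that no genuinely shortest (or induced) path from an admissible source other than $s_{i,j}$ can reach $t_{i,j}$ by threading through the gadget wiring. Checking this reduces to re-reading the arc set added to the column and row terminals in the construction, which is precisely where the ``similar lines'' remark in the statement conceals its technical content.
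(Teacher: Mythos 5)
Your proposal is correct and follows essentially the same route as the paper, which itself only states that the claim follows ``with arguments along similar lines'' as \autoref{obs:siti}: you consider the path of the partition covering $t_{i,j}$, rule out every other fixed start vertex either via the single-arc shortcut that would strand the primed partner and force an extra path, or via the unreachability of $t_{i,j}$ from $a_1^{m,n+1}$, and then use the fact that the $k'$ paths biject the fixed sources onto the fixed sinks. This is precisely the intended adaptation, so no gap remains beyond the index bookkeeping you already flag.
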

    \begin{proof}
        After \autoref{obs:top_row} and \autoref{obs:siti}, we are left with $k +  \frac{k \dot (k-1)}{2}$ paths to cover the remaining graph. In the remaining graph, there are $k +  \frac{k \dot (k-1)}{2}$ vertices with in-degree $0$, these are $(1) \: s_{i,j}$, for $i < j$ and $j \leq k$ and $(2) \: b_1^{i,0}, i \in [k]$. The path starting at $b_1^{l,0}$ where $l \in [k]$ and $l \leq i$ cannot cover  $t_{i,j}$, as the path would be the arc $(b_1^{l,0}, t_{i,j})$, hence creating one more in-degree $0$ vertex, i.e., $b_2^{l,0}$.  Also, path starting at $s_{i,j}$ cannot end at $t_{m,l}$ where $1 < m < l \leq k $ for $k \neq i$ and $l \neq j$, due to the arcs $E_{\text{fcsct}}$ or as there is no path from $s_{i,j}$ to reach it. Therefore, a path starting from $s_{i,j}$ has to end at $t_{i,j}$, with $i < j$. 
    \renewcommand{\qedsymbol}{$\Diamond$}\end{proof}

Therefore, if $G'$ has an spp of size $k'$, then for each row $i$, with $i \in [k]$, any path starting at $s_i$ has to skip exactly one gadget and end at $t_i$. If we do not skip any gadget, $s_{i,j}$ cannot be connected to $t_{i,j}$, with $j >i$. Once we skip a gadget, we are at the bottom level of the row, and hence we cannot skip again. 
Through this skipped gadget, $s_{i,j}$ is connected to $t_{i,j}$. The bottom of the row is covered by a path starting from $b_1^{i, 0}$ and the top of the row is covered by a path starting at $v \in T^{i}$ after the skipped gadget and ending at $a_{k+2}^{i,n+1}$.

Finally, we have the following claim about $G'$ that follows by construction.

    \begin{claim} \label{obs:connect}
    There exists a path between $s_{i,j}$ and $t_{i,j}$ through two gadgets $G^{i,u}$ and $G^{j,v}$, where $i>j$, if and only if there is an edge $uv$ in the graph~$G$. 
    \end{claim}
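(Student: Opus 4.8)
The plan is to prove both implications by directly tracing arcs, since the statement is essentially a read-off of the construction. First I would write down the candidate verifier path $P$ joining $s_{i,j}$ to $t_{i,j}$ through the two gadgets, namely
$$ s_{i,j},\ s'_{i,j},\ a_j^{i,u},\ b_j^{i,u},\ a_i^{j,v},\ b_i^{j,v},\ t'_{i,j},\ t_{i,j}, $$
and then classify its arcs. The column-terminal arcs $(s_{i,j},s'_{i,j})$ and $(t'_{i,j},t_{i,j})$, the entry arc from $s'_{i,j}$ into the top of the $j$-wire of $G_{i,u}$, the exit arc from the bottom of the $i$-wire of $G_{j,v}$ into $t'_{i,j}$, and the two wire arcs $(a_j^{i,u},b_j^{i,u})$ and $(a_i^{j,v},b_i^{j,v})$ are all present unconditionally by construction. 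The only remaining arc is the crossing arc $(b_j^{i,u},a_i^{j,v})$, and by definition this arc was placed in $E'$ exactly when $uv\in E(G)$.

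The forward implication is then immediate: if $uv\in E(G)$, the crossing arc exists, so every arc of $P$ is present and $P$ is a directed path from $s_{i,j}$ to $t_{i,j}$ meeting both $G_{i,u}$ and $G_{j,v}$.

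For the converse I would take any directed path from $s_{i,j}$ to $t_{i,j}$ that visits a vertex of $G_{i,u}$ and a vertex of $G_{j,v}$, and argue that it must use the crossing arc $(b_j^{i,u},a_i^{j,v})$, whence $uv\in E(G)$. The key observation is that $G'$ is a DAG in which every arc points rightwards within a row or downwards between rows, and that the only arcs leaving a row-$i$ gadget toward a row-$j$ gadget are the crossing arcs $(b_j^{i,\cdot},a_i^{j,\cdot})$; in particular, the unique arc joining $G_{i,u}$ directly to $G_{j,v}$ is $(b_j^{i,u},a_i^{j,v})$. The hard part is to exclude indirect routings that might leave $G_{i,u}$, pass through terminals or intermediate gadgets, and re-enter at $G_{j,v}$ without that specific arc; I would close this by invoking the terminal-connection pattern (the entry arcs for the pair $(i,j)$ emanate only from $s'_{i,j}$, and the exit arcs lead only into $t'_{i,j}$) together with acyclicity, so that a walk forced to start at $s_{i,j}$, end at $t_{i,j}$, and touch both named gadgets has no freedom beyond $P$. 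This bookkeeping is the only non-routine step; everything else follows directly from the construction.
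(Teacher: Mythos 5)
Your forward direction is exactly the verifier path the construction intends, and since the paper offers no written proof of this claim (it is stated as following ``by construction''), your explicit arc-by-arc trace is already more detailed than what the paper provides. That half is correct: all arcs of $P$ except the crossing arc are unconditional, and the crossing arc $(b_j^{i,u},a_i^{j,v})$ exists precisely when $uv\in E(G)$.

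The converse, however, contains a genuine overclaim. You assert that a path from $s_{i,j}$ to $t_{i,j}$ that merely \emph{touches} both $G_{i,u}$ and $G_{j,v}$ ``has no freedom beyond $P$.'' That is false: such a path can enter row $i$ at $a_j^{i,u}$, walk rightwards along the top level through $G_{i,u+1},\dots,G_{i,w}$ (these consecutive top-level arcs exist within and between gadgets of a row), descend the $j$-wire of $G_{i,w}$, and cross into row $j$ via $(b_j^{i,w},a_i^{j,v})$ whenever $wv\in E(G)$. Such a path visits vertices of both named gadgets yet certifies the edge $wv$, not $uv$, so under the ``touches both gadgets'' reading the only-if direction simply does not hold, and neither the terminal-connection pattern nor acyclicity rules these detours out. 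The statement is only correct under the reading that the path is routed through \emph{exactly one wire of each} of $G_{i,u}$ and $G_{j,v}$ and meets no other gadget --- which is how the claim is actually applied in the main equivalence, since in a solution of size $k'$ every non-skipped gadget is already fully covered by the selector and the row-covering paths. Under that reading your observation that the unique arc from $G_{i,u}$ directly into $G_{j,v}$ is $(b_j^{i,u},a_i^{j,v})$ closes the argument immediately, and the extra bookkeeping you flag as ``the hard part'' is not needed. You should therefore either restrict the statement to wire-confined paths before proving the converse, or accept that the literal ``visits a vertex of each'' version is unprovable because it is false.
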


\begin{claim} \label{claim:W-1-main-claim}  
$G$ has a $k$-clique if and only if $G'$ has a shortest path partition of size $k'$.
\end{claim}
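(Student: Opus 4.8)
The plan is to prove both directions by matching the combinatorial structure of a $k$-clique against the selector/verifier anatomy of a size-$k'$ partition that the preceding claims have already forced. For the forward direction I would start from a clique $\{u_1,\dots,u_k\}$ of $G$, assign $u_i$ to row $i$ (the assignment may be any bijection, since a clique is symmetric and every pair is an edge), and exhibit the partition explicitly. In each row $i$ I use a \emph{selector} that runs from $s_i$ along the top level up to $G_{i,u_i-1}$, takes the skipping arc over the gadget $G_{i,u_i}$ down to the bottom level, continues along the bottom to $b_2^{i,n+1}$ and finishes at $t_i$; the part of row $i$ left uncovered is then filled by one \emph{bottom path} starting at $b_1^{i,0}$ (covering the bottom levels before the skip) and one \emph{top path} starting at the first top-level vertex after the skip and ending at the sink $a_{k+2}^{i,n+1}$ (covering the top levels after the skip). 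This spends $3k$ paths. For each pair $i<j$ I route a \emph{verifier} from $s_{i,j}$ through wire $j$ of the skipped gadget $G_{i,u_i}$, across the connecting arc $(b_j^{i,u_i},a_i^{j,u_j})$ --- which is present precisely because $u_iu_j\in E(G)$ --- through wire $i$ of $G_{j,u_j}$, and on to $t_{i,j}$; this adds $\binom{k}{2}$ paths, for a total of $\frac{k\cdot (k-1)}{2}+3k=k'$.

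The key bookkeeping step here is to check that these paths genuinely \emph{partition} $V(G')$. The point is that the $k-1$ wires of each skipped gadget $G_{i,u_i}$ are consumed by exactly the $k-1$ verifiers incident to row $i$: the pair $(i,j)$ with $j>i$ uses wire $j$ and the pair $(j,i)$ with $j<i$ also uses wire $j$, so every index $r\in[k]\setminus\{i\}$ is used exactly once. The non-skipped gadgets have their top and bottom levels split cleanly between the selector, the top path and the bottom path, and the start/end terminals and the two dummy gadgets per row are absorbed by the appropriate path endpoints. I would then argue that each constructed path is a \emph{shortest} path, so that, by \autoref{rem:combi}, the identical construction also settles \textsc{DAGIPP}: the shortest-path enforcers were added exactly so that no long path built above admits a shortcut between its two endpoints, and I would verify this endpoint-class by endpoint-class.

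For the backward direction the heavy lifting is already carried out by the structural claims, so the final step is mostly assembly. Given a size-$k'$ solution $\mathbb{P}$, \autoref{obs:in_out_deg}, \autoref{obs:top_row}, \autoref{obs:siti} and \autoref{obs:sijtij} pin down the start and end vertices of every path, forcing each row $i$ to contain a selector from $s_i$ to $t_i$ that skips exactly one gadget $G_{i,u_i}$; I read off $u_i$ as the vertex selected in row $i$. Each path starting at a column start terminal $s_{i,j}$ must, by \autoref{obs:sijtij}, reach $t_{i,j}$, and the only available route does so through the two skipped gadgets $G_{i,u_i}$ and $G_{j,u_j}$; by \autoref{obs:connect} such a route exists if and only if $u_iu_j\in E(G)$. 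Hence $u_iu_j\in E(G)$ for every $i<j$, and since $G$ has no loops this simultaneously forces the $u_i$ to be pairwise distinct, so $\{u_1,\dots,u_k\}$ is a genuine $k$-clique.

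I expect the main obstacle to be the forward direction's partition-and-shortestness verification rather than the backward direction. One must check at once that every vertex of every gadget (including the two dummy gadgets in each row) is covered exactly once, with no clashes between selectors and verifiers on the skipped-gadget wires, and that none of the long paths --- in particular the selector and the top path --- can be shortcut. This is exactly where the skipping arcs and the shortest-path enforcers do their work, and getting the coverage accounting and the shortest-path property to hold \emph{simultaneously} is the delicate part of the argument.
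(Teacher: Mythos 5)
Your proposal is correct and follows essentially the same route as the paper's own proof: the forward direction builds the same explicit partition (one selector, one bottom path and one top path per row, plus one verifier per pair, totalling $3k+\binom{k}{2}=k'$ paths, with the verifier arcs existing exactly because the selected vertices are pairwise adjacent), and the backward direction assembles the conclusion from the preceding structural claims exactly as the paper does. If anything, your wire-accounting for the skipped gadgets and your remark that adjacency forces distinctness are spelled out more explicitly than in the paper's rather terse write-up.
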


\begin{proof}
 If vertices $v_1,v_2,\dots, v_k$ form a clique in $G$, then we construct an spp of size~$k'$ in $G'$ as follows: for each row $i$, with $i\in[k]$, we take a path starting at $s_i$ and leading to $t_i$ while skipping gadget $G_{i,v_i}$; we cover all the vertices on the left of the skipped gadget with a single path starting at $b_1^{i,0}$, and similarly, we cover all the vertices on the right of the skipped gadget with a single path ending at $a_{k+2}^{i,n+1}$. Then, we connect $s_{i,j}$ to $t_{i,j}$, with $i < j$, through the skipped gadget (this is possible because of \autoref{obs:connect}).
 
 For the other direction, assume we have an spp $\mathbb{P}$ of size~$k'$ in $G'$: include the vertices corresponding to any gadget which was skipped by a path which connects $s_i$ to $t_i$ for $i \in [k]$ into the $k$-element set~$C$. This forms a clique in~$G$, as any two vertices of $\{v_i,v_j\} \in C$, are distinct and have an edge between them because of the {verifiers}.\renewcommand{\qedsymbol}{$\Diamond$}
\end{proof}
    
\noindent    
As the construction is polynomial-time, \W{1}-hardness follows.   
\end{proof}

\section{\textsf{XP} Algorithms for SPP} \label{xp-algo}
We now present our \textsf{XP} algorithms to prove the following result. We note that the result for \textsc{USPP} was also proved in~\cite{dumas2024graphs} (with a different method: they first give an upper bound on the treewidth of the graph, and then they apply Courcelle's theorem).
The basic idea is to use existing algorithmic results for \textsc{Disjoint (Shortest) Paths} from the literature.

\begin{theorem}\label{lem:SPPinXP}
The following problems are in \XP when parameterized by the number of paths: \textsc{USPP}, \textsc{DAGSPP}, and \textsc{DSPP} when restricted to planar directed graphs, to directed graphs of bounded directed treewidth, or when $k=2$.\footnote{See~\cite{DBLP:journals/jct/JohnsonRST01} for the definition of \emph{directed treewidth}.}
\end{theorem}

\noindent For our \XP algorithms, the following combinatorial result is crucial, that allows us to reduce the problem to \textsc{DP} (without the shortest path requirement).

\begin{lemma}\label{lemm:DSP}
Let~$G=(V,E)$ be a graph (directed or undirected). Then, $V$ can be partitioned into $k$ vertex-disjoint shortest paths if and only if there are $k$ vertex-disjoint paths between some $s_i$ and $t_i$, for $1\leq i\leq k$, such that $\sum_{i=1}^{k} d(s_i,t_i) = |V|-k$.
\end{lemma}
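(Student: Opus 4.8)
The plan is to prove both directions through a single vertex-counting identity, exploiting the following elementary fact: a \emph{shortest} path between $s$ and $t$ contains exactly $d(s,t)+1$ vertices, whereas an \emph{arbitrary} path between $s$ and $t$ contains \emph{at least} $d(s,t)+1$ vertices, since its length is at least the distance. This asymmetry is what drives the whole argument.

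For the forward direction, suppose $V$ is partitioned into $k$ vertex-disjoint shortest paths $P_1,\dots,P_k$, and let $s_i,t_i$ denote the endpoints of $P_i$. Since each $P_i$ is a shortest path, it contains exactly $d(s_i,t_i)+1$ vertices. As the $P_i$ partition $V$, summing the vertex counts gives $\sum_{i=1}^k (d(s_i,t_i)+1) = |V|$, hence $\sum_{i=1}^k d(s_i,t_i) = |V|-k$. The paths $P_i$ themselves are $k$ vertex-disjoint paths between the chosen pairs $(s_i,t_i)$, so they witness the right-hand side of the equivalence.

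For the backward direction --- the slightly more delicate one --- I would start from $k$ vertex-disjoint paths $Q_1,\dots,Q_k$, where $Q_i$ joins $s_i$ and $t_i$ and $\sum_{i=1}^k d(s_i,t_i)=|V|-k$. Writing $n_i$ for the number of vertices of $Q_i$, the fact above gives $n_i \geq d(s_i,t_i)+1$, and therefore $\sum_{i=1}^k n_i \geq \sum_{i=1}^k d(s_i,t_i) + k = |V|$. On the other hand, the $Q_i$ are pairwise vertex-disjoint, so $\sum_{i=1}^k n_i \leq |V|$. The resulting sandwich $\sum_{i=1}^k n_i = |V|$ forces two conclusions simultaneously: first, the disjoint paths together cover \emph{all} of $V$; and second, each inequality $n_i \geq d(s_i,t_i)+1$ must be tight, so every $Q_i$ is in fact a shortest path. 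Hence $Q_1,\dots,Q_k$ is a partition of $V$ into $k$ vertex-disjoint shortest paths.

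The one place I would be most careful is precisely this last step: extracting \emph{both} the covering property \emph{and} the shortest-path property at once from the single equality $\sum_i n_i = |V|$; everything else is routine bookkeeping. I would also note that the argument is insensitive to whether $G$ is directed or undirected, since in either direction the distances $d(s_i,t_i)$ that appear are realized by actual paths and are therefore finite, so no degenerate infinite-distance case arises.
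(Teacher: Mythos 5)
Your proposal is correct and follows essentially the same approach as the paper: both directions rest on the counting identity that a shortest $s$--$t$ path has exactly $d(s,t)+1$ vertices while an arbitrary one has at least that many, combined with the disjointness bound $\sum_i n_i \leq |V|$. The paper's own converse direction merely asserts that the equation ``can only be satisfied if each vertex belongs to exactly one path and all paths are shortest''; your sandwich argument is just the explicit justification of that assertion.
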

\begin{proof} 
Let $G=(V,E)$ be a directed or an undirected graph.
If $V$ can be partitioned into $k$ vertex-disjoint shortest paths $P_i$,  for $1\leq i\leq k$, then we can take their endpoints as $s_i$ and $t_i$. As the paths are shortest and form a partition,  $\sum_{i=1}^{k} d(s_i,t_i) = |V|-k$ must hold.
Conversely, if there are $k$ vertex-disjoint  paths in~$G$, connecting, say,  $s_i$ and $t_i$, for $1\leq i\leq k$, then the condition $\sum_{i=1}^{k} d(s_i,t_i) = |V|-k$ can only be satisfied if each vertex of the graph belongs to exactly one of the paths and if all the paths are shortest paths.
Hence, $V$ can be partitioned into $k$ vertex-disjoint shortest paths.
\end{proof}

\autoref{lemm:DSP} allows us to prove \autoref{lem:SPPinXP} by enumerating all possible $X:=\{(s_1,t_1), \dots, (s_k,t_k) \}$, resulting in an instance  of \textsc{DSP}. Now, we can either apply known algorithms with \XP running time (or better), for either \textsc{DP} or \textsc{DSP}. We give details of this proof strategy in the following.

\begin{proof}[Proof of \autoref{lem:SPPinXP}] Let $G=(V,E)$, together with an integer~$k$, be an instance of \textsc{SPP} (i.e., \textsc{DSPP}, \textsc{DAGSPP} or \textsc{USPP}, the arguments are complete analogous for these cases). 
By \autoref{lemm:DSP}, given an instance $(G,k)$ of \textsc{SPP}, with $G=(V,E)$, it is sufficient to iterate (in \XP time) through all possible $X:=\{(s_1,t_1), \dots, (s_k,t_k) \}$ where $s_i,t_i \in V$ and $\sum_{i=1}^{i=k} (d(s_i,t_i) + 1) = |V|$. The latter can be checked in polynomial time using standard algorithms for computing the distance.
Then, for each such $X$ we form an instance $(G,X)$ of the version of \textsc{DP} (or \textsc{DSP}) corresponding to our initial instance (i.e., \textsc{DDP}, \textsc{DAGDP} or \textsc{UDP}, or alternatively, \textsc{DDSP}, \textsc{DAGDSP} or \textsc{UDSP}). Now, $(G,k)$ is a \yes-instance of \textsc{SPP}, respectively, if and only if some constructed $(G,X)$ instance is a \yes-instance of \textsc{DP} (or of \textsc{DSP}).

We next see in which cases \textsc{DP} or \textsc{DSP} can be solved in \XP time by known algorithms.

For DAGs, Fortune et al.~\cite[Theorem~3]{ForHopWyl80} showed that, for arbitrary fixed directed pattern graphs, subgraphs homeomorphic to them can be detected in DAGs in polynomial time (even when the node-mapping is fixed).
Hence, taking as a pattern graph a collection of $k$ isolated directed edges and as node-mapping the chosen terminals, the according homeomorphism problem is identical to \textsc{DAGDP} (with fixed $k$). Hence, the instance $(G,X)$ can be solved in time $\Oh(|V|^{f(|X|)})=\Oh(|V|^{f(k)})$ for some function~$f$.  Alternatively, an \XP algorithm for \textsc{DAGDSP} is also given in~\cite{DBLP:conf/esa/Berczi017}.

For planar (directed) graphs, we refer to~\cite{DBLP:conf/focs/CyganMPP13} for an \FPT algorithm for \textsc{DDP}, or to~\cite{DBLP:conf/esa/Berczi017} for an \XP algorithm for \textsc{DDSP}. An \XP algorithm is given for \textsc{DDP} on graphs of bounded directed treewidth~\cite{DBLP:journals/jct/JohnsonRST01}. For the case of $k=2$, see~\cite{DBLP:conf/esa/Berczi017}. Note that results for other classes exist and can be applied, see e.g.~\cite{DBLP:journals/jgt/Bang-JensenCM17}.

For the undirected case, we can refer to the seminal paper of Robertson and Seymour~\cite{ROBERTSON199565}, yielding a cubic-time algorithm on input $(G,X)$ for \textsc{UDP}, an algorithm that was later improved to be quadratic in~\cite{kawarabayashi2012disjoint}, yielding again the \XP-claim. Alternatively, \XP algorithms for \textsc{UDSP} were also given in~\cite{DBLP:journals/siamdm/BentertNRZ23,lochet2021polynomial}.

These results establish the statement.
\end{proof}
    
Notice that these algorithmic results rule out \paraNP-hardness results (in contrast to those existing for PP and IPP) for the corresponding problems.

\section{Neighborhood Diversity and Vertex Cover Parameterizations} \label{nd-param}

One of the standard structural parameters studied within parameterized complexity is the \emph{vertex cover number}, i.e., the size of the smallest vertex cover that a graph has. As graphs with bounded vertex cover number are highly restricted,  less restrictive parameters that generalize vertex cover are interesting, as \emph{neighborhood diversity} is, introduced by Lampis~\cite{lampis2012algorithmic}.

\subsection{Neighborhood diversity for \textsc{USPP} and \textsc{UIPP}}

Crucial to our \FPT-results is the following interesting combinatorial fact.

\begin{proposition} \label{nd:diam}
If $G$ is connected, then any induced path has length at most $\nd(G)$. In particular, $\diam(G) \leq \nd(G)$.
\end{proposition}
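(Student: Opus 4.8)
The plan is to exhibit $\diam(G)$ many distinct neighborhood-diversity types living on a single diametral shortest path. Concretely, fix a pair $u,v$ of vertices with $d(u,v)=\diam(G)=:d$ (finite since $G$ is connected), and fix a shortest $u$--$v$ path $P=v_0v_1\cdots v_d$. The elementary fact I will use repeatedly is that every subpath of a shortest path is itself shortest, so $d(v_a,v_b)=|a-b|$ for all $0\le a,b\le d$.

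The heart of the argument is the following claim: two path vertices $v_i,v_j$ with $i<j$ can share a type only if $\{i,j\}=\{0,d\}$, i.e.\ only if they are the two endpoints of $P$. Granting this, at most one pair among the $d+1$ vertices $v_0,\dots,v_d$ is type-equivalent, so these vertices realize at least $(d+1)-1=d$ distinct types, whence $\nd(G)\ge d=\diam(G)$.

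To prove the claim I would split on whether the shared type is a clique or an independent set (the only two possibilities, by the definition of $\nd$). If the type is a clique, then $v_i\sim v_j$, so $j-i=d(v_i,v_j)=1$ and $N[v_i]=N[v_j]$; if moreover $i\ge 1$, then the path-neighbor $v_{i-1}\in N[v_i]=N[v_j]$ forces $v_{i-1}\sim v_j=v_{i+1}$, contradicting $d(v_{i-1},v_{i+1})=2$, and symmetrically $j\le d-1$ yields a contradiction via $v_{j+1}$; hence $i=0$, $j=d$ (and $d=1$). If the type is an independent set, then $N(v_i)=N(v_j)$ with $v_i\not\sim v_j$; since $v_{i+1}\in N(v_i)=N(v_j)$ we get $j=i+2$, and again a boundary vertex $v_{i-1}$ (when $i\ge 1$) or $v_{j+1}$ (when $j\le d-1$) would be forced adjacent to the far endpoint, contradicting that their path-distance equals $3$. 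This again pins down $\{i,j\}=\{0,d\}$ (with $d=2$).

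The main thing to be careful about is precisely these degenerate boundary cases: when $d\in\{1,2\}$ the two endpoints may legitimately share a type, which is exactly why the bound is $\ge d$ rather than $\ge d+1$; and one must check that the neighbor $v_{i-1}$ or $v_{j+1}$ invoked in each contradiction actually exists on $P$ (the index hypotheses $i\ge 1$ and $j\le d-1$). The cases $d=0$ (a single vertex) and $d=1$ (where $\nd(G)\ge 1$ holds trivially) should be noted as immediate. No heavy computation is involved; the only real content is the distance bookkeeping on $P$ together with the clique/independent-set dichotomy of types.
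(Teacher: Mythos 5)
Your proof is correct and follows essentially the same strategy as the paper's: both arguments show that two same-type vertices on a shortest path force a shortcut, via the clique/independent-set dichotomy of neighborhood-diversity classes. Your version merely recasts the paper's contradiction argument as a direct count of types along a diametral path, with a slightly more careful treatment of the degenerate cases $d\in\{1,2\}$ where the two endpoints may legitimately share a type.
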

\begin{proof}
Let $P$ be an induced path with $k$ vertices in $G$ (note that any shortest path, in particular a diametrical path, is also an induced path). Note that if $k\geq 4$, no two vertices of $P$ are twins in $G$, and so, every vertex of $P$ needs to be in a different neighborhood diversity class, and so, the claim is true (in fact even stronger, $k\leq\nd(G)$). Hence, if $\nd(G)\geq 2$, we are done: $G$ does not contain an induced path of length at least $\nd(G)+1$ (i.e. with $k\geq \nd(G)+2$ vertices). If $k\leq 3$, then the endpoints of $P$ may be twins and thus, may belong to the same neighborhood diversity class, if that class forms an independent set (but no other pair of vertices of $P$ may lie in the same class). Thus, if $\nd(G)\leq 2$, $G$ may contain an induced with $\nd(G)+1$ vertices, but no more.
\end{proof}

\noindent We will also use the following result.

\begin{theorem}[{\cite[Theorem 6.5]{CygFKLMPPS2015}}]\label{thm:ILP}
An \textsc{Integer Linear Programming} instance of size $L$ with
$p$ variables can be solved using $$\Oh(p^{2.5p+o(p)}\cdot (L + \log M_x) \log(M_x M_c ))$$ arithmetic operations and space polynomial in $L + log M_x$, where $M_x$ is an upper bound on the absolute value a variable can take in a solution, and $M_c$ is the largest absolute value of a coefficient.
\end{theorem}

    Now, call two induced paths $P_i$ and $P_j$ (viewed as sets of vertices) \emph{equivalent}, denoted by $P_i\equiv P_j$, in a graph~$G$ with $d=\nd(G)$ if  $|P_i \cap C_{l}| = |P_j \cap C_{l}|$ for all $l$ with $1 \leq l \leq d$, where $C_1,\dots,C_d$ denote the neighborhood diversity classes.
Any two equivalent induced paths have the same length.
\autoref{nd:diam} and our discussions imply that there are 
$2^{\Oh(\nd(G))}$ many equivalence classes of induced paths $(+)$.

\begin{theorem}\label{thm:USPP-ND}
    \textsc{UIPP} and \textsc{USPP} are \FPT when parameterized by
    neighborhood diversity.
\end{theorem}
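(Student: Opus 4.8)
The plan is to reduce \textsc{USPP} to an Integer Linear Program (ILP) whose number of variables is bounded by a function of $\nd(G)$, and then invoke Lenstra's theorem on the fixed-parameter tractability of ILP optimization with few variables. First I would compute the neighborhood diversity classes $C_1,\dots,C_d$ (with $d=\nd(G)$) in linear time, and reduce to the connected case: isolated vertices can only share a type with other isolated vertices, so at most one class consists of isolated vertices (each handled by a trivial one-vertex path) and at most $d$ non-trivial connected components remain. I would treat them together in a single ILP, since a shortest path never crosses components. Crucially, \autoref{nd:diam} guarantees that inside each component every shortest path has at most $d+1$ vertices, so each path ``type'' is small.

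Next I would set up the ILP. By the discussion preceding the theorem (fact $(+)$), the shortest paths of $G$ fall into $\Oh(2^{d})$ equivalence classes under $\equiv$; let $\mathcal{T}$ be the set of these types, where each type $t$ is recorded by its count vector $(a_{t,1},\dots,a_{t,d})$ with $a_{t,l}=|P\cap C_l|$ for any representative $P$. For long paths $a_{t,l}\in\{0,1\}$, while the few short types (paths on two or three vertices whose endpoints share a type) may have a single entry equal to $2$; in all cases $\sum_l a_{t,l}\le d+1$. For each candidate type I would decide \emph{realizability} --- whether $G$ actually contains a shortest path of that type --- which can be tested in polynomial time on the quotient graph on the $d$ classes (verifying that the prescribed sequence of classes yields a path whose endpoints are at the claimed distance). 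Introducing one nonnegative integer variable $x_t$ per realizable type, I impose the exact-covering constraints $\sum_{t\in\mathcal{T}} a_{t,l}\,x_t=|C_l|$ for every class $C_l$ (this forces a partition of $V$), minimize $\sum_{t} x_t$, and answer \yes iff the optimum is at most $k$. Since $|\mathcal{T}|=\Oh(2^{d})$, the number of variables is bounded in terms of the parameter, and Lenstra's algorithm then solves the program in \FPT time.

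The step I expect to be the main obstacle is the correctness of the equivalence between ILP solutions and genuine shortest-path partitions, i.e., showing that any nonnegative integer solution of the count constraints can be realized by actual vertex-disjoint shortest paths, and conversely. The forward direction (a real partition yields a feasible point) is immediate. For the converse I would rely on the interchangeability of same-type vertices: given the type counts, I assign concrete vertices of each class $C_l$ to the path-slots demanding $C_l$ arbitrarily but injectively, which is possible exactly because the per-class demands sum to $|C_l|$. I must then argue that (i) the resulting sequences are valid paths and (ii) they remain \emph{shortest}; both follow since replacing a vertex by another of the same type preserves all adjacencies outside the vertex itself and preserves all inter-type distances, so neither the path structure nor the endpoint distance changes. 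Care is needed for the corner cases where a single slot uses two vertices of one clique class (these are exactly the two-vertex edge-paths, since a longer path cannot reuse a clique class) and for endpoints of short paths that share a type; handling these explicitly completes the argument.
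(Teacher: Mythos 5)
Your proposal is correct and follows essentially the same route as the paper: enumerate the $\Oh(2^{\nd(G)})$ equivalence classes of shortest paths via \autoref{nd:diam}, encode the partition as an ILP with one variable per class and exact-covering constraints $\sum_t a_{t,l}x_t=|C_l|$, and solve it with Lenstra-type ILP in \FPT time. Your extra care about realizability of types and the interchangeability of same-type vertices only makes explicit what the paper's Claim~\ref{claim:ND-ILP} does implicitly when it "arbitrarily picks $p^j$ vertices from each $C_j$."
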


\begin{proof}As we can solve \textsc{UIPP} and \textsc{USPP} separately on each connected component, we can assume in the following that the input graph is connected. 


Recall that there exists a partition of $V(G)$ into $d$ $\nd$-classes $C_1, \dots , C_d$, and each $C_i$ for $i \in [d]$ either induces a clique or an independent set and all its vertices are mutual twins. As mentioned before, such a partition can be computed in linear time using~\cite[Algorithm 2]{DBLP:conf/stacs/HabibPV98}. 

Compute (in time $2^{\Oh(\nd(G))}$, by $(+)$) the set of all $2^{\Oh(\nd(G))}$ induced path equivalence classes; this can be represented by a set $\cal P$ of \emph{types} of induced paths (each represented by one representative induced path), in which any two induced paths are not equivalent. For \textsc{USPP}, we discard those classes that do not yield shortest paths.
Construct and solve the following Integer Linear Program (\text{ILP}), with a variable $z_P\geq 0$ corresponding to each type of path $P\in \cal P$.
Each $P\in \cal P$ is characterized by a vector $(P^1,\dots,P^d)$ with $P^j=\vert C_j\cap P\vert $ (note that, as argued in the proof of \autoref{nd:diam}, $P^j\leq 2$).

     $$\text{minimize} 
     \sum_{P\in\cal P}z_P
     $$     $$\text{subject to }\sum_{P\in \cal P}z_P\cdot P^j=|C_j|\text{  for all }j\in [d]$$

The variable $z_P$ encodes how many
induced paths equivalent to~$P$ are taken into the solution. Hence, the objective function expresses minimizing the number of induced paths used in the partition. The constraints ensure the path partitioning.

\begin{claim}\label{claim:ND-ILP}
There exists an ipp (respectively spp) of $G$ with $k$ induced (respectively, shortest) paths if and only if the objective function attains the value $k$ in the ILP described above.
\end{claim}
\begin{proof}
  For the forward direction, consider the induced/shortest paths $\mathbb{P}= \{P_1,P_2, \dots, P_k\}$ as a solution. To construct a solution of the \text{ILP}, for every $P \in \cal P$, $z_P$ equals the number of  paths in $\mathbb P$ that are {equivalent} to $P$. Clearly, $\sum_{P\in \cal P} z_P = k$. The 
  constraint is satisfied as every vertex is covered exactly once by $\mathbb P$.

For the other direction, suppose that there exist integers $\{z_{P}\mid P\in \mathcal{P}\}$  which satisfy the \textsc{ILP} and $\sum_{P\in\cal P}z_P = k$. We can produce an ipp/spp of $G$ with $k$ paths as follows. As long as there exists a set $P\in\cal P$ with $z_P > 0$, construct a path by picking exactly $P^j$ vertices from each $C_j$. Since the vertices in $C_j$ form either a clique or an independent set of mutual twins, this choice can safely be arbitrary. Since $P$ corresponds to a valid type of induced/shortest path, the set of selected vertices forms such a path. Remove these vertices from~$G$, and set $z_P := z_P - 1$. This algorithm produces exactly $k$ paths which form an ipp/spp that covers each vertex exactly once because of the satisfied constraints of the ILP.\renewcommand{\qedsymbol}{$\Diamond$}
\end{proof}

Notice that the number of variables of the ILP is exactly $\vert \mathcal{P}\vert$, which is $2^{\Oh(\nd(G))}$, by $(+)$. Now, we apply Theorem~\ref{thm:ILP} with $p=2^{\Oh(\nd(G))}$, $L=\Oh(n)$, $M_x=M_c=n$ to get our \FPT\ claim.
\end{proof}

 


It is worth mentioning that by Theorem~\ref{thm:ILP}, the time consumption of the proposed algorithms is at worst double-exponential in the parameter: ignoring lower-order terms, the growth is $\Oh\left(2^{2.5 d\cdot 2^d}\right)$, while the space consumption is rather single-exponential, $\Oh(2^d)$. In certain cases, however, the number of induced/shortest path equivalence classes of a graph (that determines that number of variables of the derived ILP) need not be exponential in the neighborhood diversity, which would immediately improve the running time estimates for the proposed algorithm. Nonetheless, it would be interesting to give better parameterized algorithms for these path partitioning problems.

Moreover, using the mentioned estimate of Lampis~\cite[Lemma 2]{lampis2012algorithmic} that a graph with vertex cover number $k$ has neighborhood diversity at most $2^k+k$, we can immediately infer the following result, where the dependence on the parameter is even worse than that for neighborhood diversity.

\begin{corollary}\label{cor:vc-param}
\textsc{USPP} and \textsc{UIPP} are \FPT when parameterized by
vertex cover number.
\end{corollary}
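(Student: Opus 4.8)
The plan is to obtain the result as an immediate consequence of the two preceding neighborhood-diversity theorems, by relating $\nd(G)$ to $\vc(G)$. The key fact, due to Lampis~\cite{lampis2012algorithmic}, is that every graph~$G$ satisfies $\nd(G) \leq 2^{\vc(G)} + \vc(G)$. To see this, I would fix a minimum vertex cover~$S$ with $|S| = \vc(G)$. The complement $V \setminus S$ is an independent set, and for any $v \in V\setminus S$ we have $N(v) \subseteq S$. Hence two vertices $u, v \in V\setminus S$ with $N(u) = N(v)$ are of the same type, and since there are only $2^{|S|}$ possible subsets of~$S$, the vertices outside~$S$ fall into at most $2^{\vc(G)}$ types; adding the at most $\vc(G)$ further types contributed by the vertices of~$S$ itself yields the stated bound.

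With this inequality in hand, the rest is routine: since \textsc{USPP} and \textsc{UIPP} are \FPT when parameterized by $\nd(G)$ (by the two theorems above), and $\nd(G)$ is bounded by a function of $\vc(G)$ alone, both problems are automatically \FPT when parameterized by $\vc(G)$. Concretely, I would compute a minimum (or even just approximate) vertex cover, read off the resulting upper bound on $\nd(G)$, and then invoke the neighborhood-diversity algorithm; the overall running time is a function of $\vc(G)$ times a polynomial in the input size, which is exactly the \FPT\ guarantee.

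There is essentially no genuine obstacle here, since the algorithmic work was already carried out in establishing the neighborhood-diversity results; the corollary is a packaging step. The only point worth flagging is the degradation of the parameter dependence: substituting $\nd(G) = 2^{\vc(G)} + \vc(G)$ into the (roughly double-exponential) bound from the neighborhood-diversity algorithm produces a triple-exponential dependence on $\vc(G)$. Thus, while the statement is qualitatively \FPT, the explicit running time is markedly worse than for the neighborhood-diversity parameterization, which is precisely why a separate, more efficient vertex-cover algorithm for \textsc{PP} is developed elsewhere in the paper.
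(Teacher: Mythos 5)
Your proposal is correct and follows exactly the paper's route: the corollary is derived as an immediate consequence of the neighborhood-diversity \FPT\ results via Lampis's bound $\nd(G)\leq 2^{\vc(G)}+\vc(G)$, with the same caveat about the worsened parameter dependence. Nothing is missing.
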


Note that we also have a similar result for \textsc{UPP}, either  by a more general approach in~\cite{GajLamOrd2013}, or based on a more direct and simple reasoning, as shown later in \autoref{thm:vc-param}.


\subsection{Neighborhood diversity of directed graphs for \textsc{DSPP} and \textsc{DIPP}}

Now we want to try to adapt these neighborhood diversity results for the directed case. For this reason we want to introduce the new notion of \emph{directed neighborhood diversity} (dnd for short) that might be of independent interest. To this end, let $N^-(v)\coloneqq \{ x\in V \mid (x,v)\in E\}$ and $N^+(v)\coloneqq \{ x\in V \mid (v,x)\in E\}$ be the in- and out-neighborhood of~$v$, respectively. For two vertices $v,u\in V$, we define $v\sim_{dnd}u$ to hold if the following three conditions are satisfied:
\begin{enumerate}
    \item $N^-(v)\setminus \{u\}=N^-(u)\setminus \{v\}$,
    \item $N^+(v)\setminus \{u\}=N^+(u)\setminus \{v\}$, and
    \item $(v,u)\in E \Leftrightarrow (u,v) \in E$.
\end{enumerate}

\begin{lemma}
    For any directed graph $G=(V,E)$, the relation $\sim_{dnd}$ is an equivalence relation.
\end{lemma}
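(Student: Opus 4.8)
The plan is to verify the three defining properties of an equivalence relation—reflexivity, symmetry, and transitivity—for the relation $\sim_{dnd}$ on the vertex set $V$ of a directed graph $G=(V,E)$. Since $\sim_{dnd}$ is defined by the conjunction of three conditions, each of reflexivity, symmetry, and transitivity must be checked against all three conditions simultaneously. Reflexivity is immediate: for any $v\in V$, conditions (1) and (2) reduce to $N^-(v)\setminus\{v\}=N^-(v)\setminus\{v\}$ and $N^+(v)\setminus\{v\}=N^+(v)\setminus\{v\}$, which hold trivially, and condition (3) becomes the tautology $(v,v)\in E \Leftrightarrow (v,v)\in E$.

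Symmetry is also straightforward but worth stating carefully because the set-difference terms involve both vertices. For conditions (1) and (2), the defining equalities $N^-(v)\setminus\{u\}=N^-(u)\setminus\{v\}$ and $N^+(v)\setminus\{u\}=N^+(u)\setminus\{v\}$ are literally symmetric in $v$ and $u$ (swapping the roles of $v$ and $u$ yields the same equations with the two sides interchanged), so they are preserved under exchanging $v$ and $u$. Condition (3), the biconditional $(v,u)\in E \Leftrightarrow (u,v)\in E$, is visibly symmetric in $v$ and $u$ as well. Hence $v\sim_{dnd}u$ implies $u\sim_{dnd}v$.

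Transitivity is where I expect the only real (and still mild) subtlety, precisely because of the $\setminus\{u\}$, $\setminus\{v\}$, $\setminus\{w\}$ corrections in conditions (1) and (2). Suppose $v\sim_{dnd}u$ and $u\sim_{dnd}w$; I want $v\sim_{dnd}w$. The plan is to focus on condition (1) (condition (2) being identical with $N^+$ in place of $N^-$), and establish $N^-(v)\setminus\{w\}=N^-(w)\setminus\{v\}$. The care needed is that the hypotheses only control the neighborhoods up to removing the specific involved vertices, so I must track how the three vertices $v,u,w$ can enter or leave each other's neighborhoods. The cleanest approach is to take an arbitrary $x\in N^-(v)\setminus\{w\}$ and show $x\in N^-(w)\setminus\{v\}$, splitting into the cases $x=u$ and $x\neq u$; in the case $x\neq u$ one uses the hypotheses on the sets restricted away from $u$ (and away from $v$, respectively $w$) together with condition (3) to handle the boundary instances where $x$ equals one of $v,u,w$, and in the case $x=u$ one invokes condition (3) for the pair $\{u,w\}$ to transfer the adjacency. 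The reverse inclusion is symmetric.

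The only genuine obstacle is bookkeeping: making sure that when $x$ coincides with one of the three distinguished vertices, the condition-(3) biconditionals are applied to the correct pair so that the arc between them is correctly accounted for on both sides of the set equality. Once transitivity of conditions (1) and (2) is checked this way, transitivity of condition (3) follows by a short chain of biconditionals, $(v,w)\in E \Leftrightarrow (w,v)\in E$, obtained by relating each to the corresponding adjacency involving $u$ (again using the restricted-neighborhood equalities to deduce, for instance, whether $w\in N^-(v)$ from $w\in N^-(u)$ and vice versa). Assembling reflexivity, symmetry, and transitivity then yields that $\sim_{dnd}$ is an equivalence relation, completing the proof.
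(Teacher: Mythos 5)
Your proposal is correct and follows essentially the same route as the paper: reflexivity and symmetry are immediate, and the only substance is a direct verification of transitivity in which condition (3) is used to control whether $u$ and $v$ (resp.\ $u$ and $w$) lie in each other's neighborhoods before applying the restricted-neighborhood equalities. The paper merely organizes this bookkeeping differently --- a global case split on whether $(v,u),(u,v)\in E$ followed by set-algebraic manipulation of the singletons --- whereas you chase elements with a case split on $x=u$ versus $x\neq u$; both work.
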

\begin{proof}
Trivially, $\sim_{dnd}$ is reflexive and symmetric. We are left to show transitivity. Let $v,u,w\in V$ be different vertices with $v \sim_{dnd} u$ and $u \sim_{dnd} w$.
    
First assume that $(v,u),(u,v)\in E$. This implies $v\in N^-(u) \setminus \{ w \}= N^-(w)\setminus \{ u\}$ and $v\in N^+(u) \setminus \{ w \}= N^+(w)\setminus \{ u\}$. Hence, $(v,w),(w,v)\in E$. As $N^-(u) \setminus \{ v \}= N^-(v)\setminus \{ u\}$ , $(u,w),(w,u)\in E$.
    Thus, \begin{equation*}
        \begin{split}
            N^-(v) \setminus \{ w \} &= ((N^-(v)\setminus \{u\})\cup \{u\})\setminus \{w\} = ((N^-(u) \setminus \{v\})\cup \{u\})\setminus \{w\}\\ 
            & = ((N^-(u) \setminus \{w\})\cup \{u\})\setminus \{v\} = ((N^-(w) \setminus \{u\})\cup \{u\})\setminus \{v\}\\
            &= N^-(w)\setminus \{v\}.
        \end{split}
    \end{equation*}
Analogously, $N^+(v) \setminus \{ w \} = N^+(w)\setminus \{v\}$.    Therefore, $v \sim_{dnd} w$.

Now assume $(v,u),(u,v)\notin E$. Then, $v\notin N^-(u) \setminus \{ w \}= N^-(w)\setminus \{ u\}$ and $v \notin N^+(u) \setminus \{ w \}= N^+(w)\setminus \{ u\}$. Thus, $(v,w),(w,v)\notin E$.  Since $N^-(u) \setminus \{ v \}= N^-(v)\setminus \{ u\}$ , $(u,w),(w,u)\notin E$. Hence, 
    \begin{equation*}
        \begin{split}
            N^-(v) \setminus \{ w \} &= (N^-(v)\setminus \{u\})\setminus \{w\} = (N^-(u) \setminus \{v\})\setminus \{w\}\\ 
            & = (N^-(u) \setminus \{w\})\setminus \{v\} = (N^-(w) \setminus \{u\})\setminus \{v\}\\
            &= N^-(w)\setminus \{v\}.
        \end{split}
    \end{equation*}
    Therefore, $\sim_{dnd}$ is transitive and hence an equivalence relation.
\end{proof}

Note that the third condition in the definition of relation $\sim_{dnd}$ is needed to obtain an equivalence relation. To see this, consider for example the directed graph on vertices $u,v,w$ consisting of a transitive triangle, where $u$ is the source and $w$ is the sink. Without the third condition, we would have $u\sim v$ and $v\sim w$, but $u\not\sim w$, so $\sim$ would not be an equivalence relation.

Now we want to consider the equivalence classes under this equivalence relation. The previous proof provides  an observation for these.
\begin{corollary}
    Let $G=(V,E)$ be a directed graph and $C_1,\ldots, C_d$ be the equivalence classes with respect to $\sim_{dnd}$. Then for each $i \in \{1,\ldots,d\}$, $C_i$ is either an independent set or a bi-directed clique.
\end{corollary}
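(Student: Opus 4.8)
The plan is to prove the corollary directly from the definition of the equivalence relation $\sim_{dnd}$, reasoning about what condition~(3) forces on vertices within a single equivalence class. Fix an equivalence class $C_i$ and take any two distinct vertices $u,v \in C_i$; since $u \sim_{dnd} v$, condition~(3) tells us that $(u,v)\in E \Leftrightarrow (v,u)\in E$, so the arcs between $u$ and $v$ come either as a bidirected pair or not at all. This already rules out a single one-directional arc inside the class, which is the key local observation. It remains to promote this pairwise statement to a global dichotomy for the whole class.

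First I would show that the ``adjacent-or-not'' status is uniform across the class. I would argue by a connectivity/propagation argument using condition~(1) (or symmetrically~(2)): suppose $u,v,w \in C_i$ are distinct with $u,v$ bidirectionally adjacent but $v,w$ non-adjacent. Because $u \sim_{dnd} w$, their in-neighborhoods agree outside $\{u,w\}$; since $v \in N^-(u)\setminus\{w\}$, this forces $v \in N^-(w)\setminus\{u\}$, i.e.\ $(v,w)\in E$, contradicting non-adjacency. The same manipulation that appears in the transitivity proof of the preceding lemma (rewriting $N^-(u)\setminus\{v\}=N^-(v)\setminus\{u\}$ and tracking membership of the third vertex) is exactly what I would reuse here. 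This shows that within $C_i$ one cannot mix an adjacent pair and a non-adjacent pair.

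From the uniformity I would conclude the dichotomy: either every pair of distinct vertices in $C_i$ is bidirectionally adjacent, in which case $C_i$ induces a bidirected clique (using condition~(3) to guarantee both arcs are present for each pair), or every pair is non-adjacent, in which case $C_i$ is an independent set. I would handle the trivial cases $|C_i|=1$ (both descriptions hold vacuously) separately, and note the border case of two-element classes is covered directly by condition~(3).

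The main obstacle I anticipate is the propagation step: making fully rigorous the claim that a single adjacent pair inside $C_i$ forces \emph{all} pairs to be adjacent. The pairwise condition~(3) only constrains the two vertices it mentions, so the global conclusion genuinely requires invoking the neighborhood conditions~(1)--(2) to transfer adjacency information across the class, and one must be careful with the set-difference exceptions $\setminus\{u\}$, $\setminus\{v\}$ so that the ``witness'' vertex establishing adjacency is never the one being excluded. Apart from this, the argument is essentially the same bookkeeping as in the lemma's transitivity proof, so I expect it to be short once the propagation lemma is set up cleanly.
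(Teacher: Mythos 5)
Your proposal is correct and follows essentially the same route as the paper: the paper derives this corollary as an ``observation'' from the transitivity proof of the preceding lemma, whose two cases ($(v,u),(u,v)\in E$ versus $(v,u),(u,v)\notin E$) carry out exactly your propagation step, using condition~(3) for the pairwise bidirected-or-absent dichotomy and conditions~(1)--(2) to transfer the adjacency status to a third vertex of the class. The only point to tidy up is that your three-vertex argument directly handles pairs sharing a vertex, and the extension to two disjoint pairs needs the (trivial) two-step chain, which you already anticipate.
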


A similar result for the neighborhood diversity on undirected graphs is well-known. There, an equivalence class is either an independent set or a clique. But this is not the only similarity, as we see next. 

\begin{remark}
If we take an undirected graph $G=(V,E)$ and build the directed graph $G'=(V,E')$ with $E'\coloneqq \{(v,u), (u,v) \mid \{v,u\}\in E\}$, then $\text{$\sim_{nd}$}=\text{$\sim_{dnd}$}$, where $\sim_{nd}$ is the underlying relation of the undirected neighborhood diversity. This holds, since $N(v) = N^-(v) = N^+(v)$ and $\{v,u\} \in E$ if and only if $(v,u) \in E'$ if and only if $(u,v) \in E'$, which is the case if and only if $\{u,v\}\in E$.  
\end{remark}

As we can see, the definition above is a natural way to generalize the neighborhood diversity relation from undirected to directed graphs. Therefore, we define the \emph{directed neighborhood diversity} of a graph $G=(V,E)$ (denoted by $\dnd(G)$) as the number of equivalence classes under $\sim_{dnd}$.

\begin{lemma}\label{lem:path_dnd_classes}
Let $G=(V,E)$ be a directed graph with directed neighborhood diversity classes $C_1,\ldots,C_{\dnd(G)}$. For any induced path $P=v_1,\ldots,v_{\ell}$ in~$G$, there is at most one $i\in \{1,\ldots,\dnd(G)\}$ such that $\vert C_i \cap P \vert >1$. In this case, $\{v_1,v_{\ell}\} = C_i \cap P$. 
\end{lemma}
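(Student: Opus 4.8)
The plan is to argue by a local ``neighbor-transfer'' contradiction, in the same spirit as the shortcut arguments behind \autoref{nd:diam}, but now exploiting the two directed conditions defining $\sim_{dnd}$ together with the chordlessness of an induced path. Before anything else I would fix the reading of ``induced directed path'': since $G[p]$ must be a path, there is no arc (in either orientation) between two vertices of $p$ whose positions differ by at least two. First I would suppose, aiming at both conclusions at once, that two distinct vertices $v_a$ and $v_b$ of $p$ with $a<b$ satisfy $v_a \sim_{dnd} v_b$, and show that this forces $a=1$ and $b=\ell$.

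For the claim $a=1$, I would assume $a>1$, so the predecessor $v_{a-1}$ exists and $(v_{a-1},v_a)\in E$, that is $v_{a-1}\in N^-(v_a)$. As $v_{a-1}$ sits at position $a-1<b$, it differs from $v_b$, so condition~$1$ of $\sim_{dnd}$ (namely $N^-(v_a)\setminus\{v_b\}=N^-(v_b)\setminus\{v_a\}$) transfers it to give $v_{a-1}\in N^-(v_b)$, i.e.\ $(v_{a-1},v_b)\in E$. But $v_{a-1}$ and $v_b$ are at path positions $a-1$ and $b$ with $b-(a-1)\geq 2$, so this arc joins two non-consecutive vertices of $p$ and is a chord, contradicting that $p$ is induced; hence $a=1$. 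Symmetrically, assuming $b<\ell$ and using the successor $v_{b+1}\in N^+(v_b)$ together with condition~$2$ produces the arc $(v_a,v_{b+1})$, again a chord since $b+1-a\geq 2$, so $b=\ell$.

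Having established that the only pair of vertices of $p$ which may share a dnd-class is $\{v_1,v_\ell\}$, both stated conclusions follow quickly. If some class $C_i$ met $p$ in three vertices, then every pair among them would have to be the endpoint pair, which is impossible; hence $\lvert C_i\cap p\rvert\leq 2$ always, and when $\lvert C_i\cap p\rvert=2$ its two elements are exactly $v_1$ and $v_\ell$. Finally, two \emph{distinct} classes cannot both meet $p$ in more than one vertex, for each such intersection would be $\{v_1,v_\ell\}$, forcing $v_1$ into both classes and contradicting that the dnd-classes partition $V$.

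The main obstacle I anticipate is not the algebra but committing to the correct notion of ``induced path'' in the directed setting, since the whole argument rests on the transferred arc being a genuine chord; I would therefore state the no-arc-at-path-distance-$\geq 2$ property explicitly at the outset. A secondary care point is the index bookkeeping: checking that $v_{a-1}\neq v_b$ and $v_{b+1}\neq v_a$ so that conditions~$1$ and~$2$ may legitimately move the neighbor across, and that the produced arc is not accidentally one of the path's own consecutive arcs (it is not, as $b>a$). I note that condition~$3$ of $\sim_{dnd}$ plays no role in this proof.
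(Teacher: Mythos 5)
Your proof is correct and follows essentially the same route as the paper's: assume two path vertices share a dnd-class, transfer the predecessor (resp.\ successor) neighbor across via condition~1 (resp.\ condition~2) of $\sim_{dnd}$, and obtain a chord contradicting inducedness, forcing the two vertices to be $v_1$ and $v_\ell$. Your additional bookkeeping (verifying $v_{a-1}\neq v_b$, deriving the ``at most one class'' conclusion from disjointness of the classes, and noting that condition~3 is unused) is sound and only makes explicit what the paper leaves implicit.
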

\begin{proof}
Let $P=v_1\ldots v_{\ell}$ be an induced directed path and $C_i$ be a directed neighborhood diversity class with $\{v_j,v_k\} \subseteq C_i \cap P$ and $v_j \neq v_k$. Without loss of generality, $j<k$. Assume $j \neq 1$. Then $v_{j-1}\in N^-(v_j) \setminus \{v_k\} = N^-(v_k) \setminus \{v_j\}$. This contradicts the fact that $P$ is induced. Therefore, $j=1$. Analogously, $k=\ell$ (substitute $v_{j-1}$ by $v_{k+1}$ and $N^-$ by $N^+$).     
\end{proof}

\noindent
This provides the following analogue of \autoref{nd:diam} for directed graphs.

\begin{corollary}
    Let $G$ be a directed graph. Any induced path has length at most $\dnd(G)$.
\end{corollary}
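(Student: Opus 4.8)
The final statement to prove is the Corollary: for a directed graph $G$, any induced path has length at most $\dnd(G)$.

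The plan is to derive this directly from \autoref{lem:path_dnd_classes}, which we are permitted to assume. The key combinatorial fact established there is that an induced path $p = v_1 \ldots v_\ell$ can meet at most one $\sim_{dnd}$-equivalence class $C_i$ in more than one vertex, and in that exceptional case the two vertices are precisely the endpoints $v_1$ and $v_\ell$. First I would fix notation: let $d = \dnd(G)$ with equivalence classes $C_1, \ldots, C_d$, and let $p = v_1 \ldots v_\ell$ be an arbitrary induced path on $\ell$ vertices, so that its length (number of edges) is $\ell - 1$. The goal is to show $\ell - 1 \leq d$, i.e. $\ell \leq d + 1$.

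The core step is a counting argument on how the $\ell$ vertices of $p$ distribute among the $d$ classes. By \autoref{lem:path_dnd_classes}, every class contains at most one vertex of $p$, with the sole possible exception of one class $C_i$ that contains exactly the two endpoints. I would split into two cases. If no class meets $p$ in two vertices, then the $\ell$ vertices lie in $\ell$ distinct classes, so $\ell \leq d$, giving length $\ell - 1 \leq d - 1 < d$. If exactly one class $C_i$ contains two vertices of $p$ (necessarily $v_1$ and $v_\ell$), then those two endpoints occupy a single class while the remaining $\ell - 2$ interior vertices lie in distinct classes, all different from $C_i$; hence the total number of occupied classes is $1 + (\ell - 2) = \ell - 1$, so $\ell - 1 \leq d$, which is exactly length at most $d$. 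In both cases the length is bounded by $d = \dnd(G)$, completing the argument.

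I do not expect any genuine obstacle here, since all the real work has been done in \autoref{lem:path_dnd_classes}; the Corollary is essentially a bookkeeping consequence. The one point requiring a little care is the distinction between the length (edge count $\ell - 1$) and the vertex count $\ell$, and making sure the endpoint-sharing case is the one that yields the tight bound $\ell - 1 = d$ rather than a strict inequality. I would state the proof compactly, as follows.

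\begin{proof}
Let $d = \dnd(G)$ and let $C_1, \ldots, C_d$ be the equivalence classes under $\sim_{dnd}$. Let $p = v_1 \ldots v_\ell$ be any induced path in $G$, which has length $\ell - 1$. By \autoref{lem:path_dnd_classes}, at most one class $C_i$ satisfies $\vert C_i \cap p \vert > 1$, and in that case $C_i \cap p = \{v_1, v_\ell\}$. If no such class exists, then the $\ell$ vertices of $p$ lie in $\ell$ pairwise distinct classes, so $\ell \leq d$ and the length $\ell - 1 < d$. Otherwise, the two endpoints $v_1, v_\ell$ share the single class $C_i$, while the $\ell - 2$ internal vertices lie in pairwise distinct classes, each different from $C_i$; hence $p$ meets exactly $1 + (\ell - 2) = \ell - 1$ classes, so $\ell - 1 \leq d$. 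In either case the length of $p$ is at most $\dnd(G)$.
\end{proof}
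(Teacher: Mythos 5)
Your proof is correct and follows exactly the route the paper intends: the corollary is stated as an immediate consequence of \autoref{lem:path_dnd_classes}, and your case analysis (no class repeated, giving $\ell\leq d$; or the endpoints sharing one class, giving $\ell-1\leq d$) is the straightforward bookkeeping the paper leaves implicit. The careful distinction between vertex count $\ell$ and length $\ell-1$ is handled properly.
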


This is a tight bound. Indeed, for any given $d \in \mathbb{N}$, there exists a directed graphs $G_d$ with $\diam(G_d)=\dnd(G)$: For $d=1$, just use a path of two vertices. For the other cases, define $G_d=(V_d,E_d)$ with $V_d\coloneqq \{v, v', v_1, \ldots, v_{d-1}\}$ and $E_d \coloneqq \{ (v_i,v_{i+1})\mid i\in \{1,\ldots, d-2\}\}\cup \{(v_{d-1}, v),(v_{d-1}, v'),(v,v_1),(v',v_1)\}$. The directed neighborhood diversity classes are $C_i = \{v_i\}$ for $i\in \{1,\ldots, d-1\}$ and $C_d=\{v,v'\}$. The only path between $v$ and $v'$ is $v,v_1,\ldots,v_{d-1},v'$.

By this corollary, we can obtain an analogue of \autoref{thm:USPP-ND} for directed graphs as follows:


\begin{theorem}\label{thm:dspp+dipp-dnd}
  \textsc{DSPP} and \textsc{DIPP} are \FPT when parameterized by the directed neighborhood diversity of the input graph.
\end{theorem}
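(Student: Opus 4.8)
The plan is to transcribe the proof of \autoref{thm:USPP-ND} into the directed setting, with the directed neighborhood diversity classes $C_1,\dots,C_d$ (where $d=\dnd(G)$) playing the role of the undirected \nd-classes. The diameter bound \autoref{nd:diam} is replaced by the corollary preceding this theorem, which already guarantees that every induced directed path---and hence, by \autoref{rem:combi}, every shortest directed path---has length at most $d$, and this bound needs no connectivity assumption, so no reduction to components is required. Since all candidate paths are short, I can again group them into finitely many equivalence classes and solve an \text{ILP} counting how many paths of each class are used. I treat both problems uniformly and write \emph{admissible path} for a shortest path in the \textsc{DSPP} case and for an induced path in the \textsc{DIPP} case.

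First I would fix the notion of a \emph{type}: the ordered sequence $(C_{a_1},\dots,C_{a_\ell})$ of classes visited by an admissible path $v_1,\dots,v_\ell$ with $v_i\in C_{a_i}$. By \autoref{lem:path_dnd_classes} all the $C_{a_i}$ are distinct except that the first and last may coincide, so the number of realizable types is bounded by a function of $d$ (of order $2^{\Oh(d\log d)}$), which is all the \text{ILP} machinery needs. The crucial new ingredient is an \emph{interchangeability lemma}: if $(C_{a_1},\dots,C_{a_\ell})$ is the type of some admissible path, then any choice of pairwise distinct representatives $w_i\in C_{a_i}$ again yields an admissible path. For a consecutive pair I would show that the arc $(v_i,v_{i+1})$ forces the arc $(w_i,w_{i+1})$ by applying the defining equalities $N^+(v_i)\setminus\{w_i\}=N^+(w_i)\setminus\{v_i\}$ and $N^-(v_{i+1})\setminus\{w_{i+1}\}=N^-(w_{i+1})\setminus\{v_{i+1}\}$ in turn; the only consecutive pair lying in a common class is $(v_1,v_\ell)$ with $\ell=2$, where that class is a bidirected clique and any two of its vertices are joined both ways. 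Absence of chords transfers by the same computation (so induced paths stay induced), and for \textsc{DSPP} I would additionally argue that the swap preserves the distance between the endpoints: when the two endpoints share a class and $\ell\geq 3$, that class must be independent (otherwise the endpoints would be adjacent, contradicting shortestness), so no arc between an old and a new endpoint can introduce a shortcut.

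With interchangeability established, I would set up exactly the \text{ILP} of \autoref{thm:USPP-ND}: a nonnegative integer variable $z_p$ for each realizable type $p$, whose count vector is $(p^1,\dots,p^d)$ with $p^j=\vert\{\,i:C_{a_i}=C_j\,\}\vert$, minimizing $\sum_p z_p$ subject to $\sum_p z_p\,p^j=\vert C_j\vert$ for every $j\in[d]$. The forward direction reads the $z_p$ off a given partition. For the backward direction I would use the same greedy materialization as in the undirected proof: repeatedly pick a type $p$ with $z_p>0$, take still-unused representatives from each of its classes, form the corresponding path---admissible by interchangeability---and decrement $z_p$; the constraints ensure representatives never run out and that each vertex is used exactly once, so the paths are vertex-disjoint and span $V$. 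As the number of variables depends only on $d$, \cite[Theorem 6.5]{CygFKLMPPS2015} solves the \text{ILP} in \FPT\ time, which yields the claim.

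I expect the interchangeability lemma to be the only genuine obstacle. The arc-direction bookkeeping must be carried out carefully, and one has to separate the three cases cleanly: distinct adjacent classes (handled by the two neighborhood equalities), a $2$-vertex path inside a bidirected clique, and endpoints sharing a class, which for an admissible path of length at least two forces that class to be an independent set and underlies the shortestness-preservation argument. Everything else is a direct transcription of the undirected argument.
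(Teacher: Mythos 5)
Your proposal is correct and follows essentially the same route as the paper, which itself only states that one "can define the equivalence of paths and show a result as Theorem~\ref{thm:USPP-ND} analogously" because vertices in the same $\sim_{dnd}$-class can be reached from, and can reach, the same vertices. Your interchangeability lemma and the case analysis (distinct consecutive classes, the $2$-vertex bidirected-clique case, and shared endpoint classes) are exactly the details the paper leaves implicit, and the ILP setup is identical.
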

\begin{proof}
The proof is exactly the same as the one of \autoref{thm:USPP-ND}, using the crucial fact from \autoref{lem:path_dnd_classes} which implies that there are at most $2^{\Oh(\dnd(G))}$ types of induced/shortest directed paths (that can be computed in $2^{\Oh(\dnd(G))}n$ time), and since each directed neighborhood diversity class is a set of twins, the choice of some vertex in the class is completely arbitrary. Thus, the same ILP as in the proof of \autoref{thm:USPP-ND} provides the desired \FPT algorithm.
\end{proof}

Let us compare $\dnd(G)$ with $\nd(U(G))$ of the underlying undirected graph $U(G)$.
By the very definitions, we observe:
\begin{proposition}\label{prop:dnd-und}
For every directed graph $G$, $\nd(U(G))\leq \dnd(G)$.
\end{proposition}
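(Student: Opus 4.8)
The plan is to show that the directed neighborhood diversity relation $\sim_{dnd}$ on $G$ \emph{refines} the undirected neighborhood diversity relation $\sim_{nd}$ on $U(G)$. Once this is established, every $\sim_{dnd}$-class is contained in a single $\sim_{nd}$-class, so the partition induced by $\sim_{dnd}$ is at least as fine as the one induced by $\sim_{nd}$, which immediately yields $\nd(U(G)) \leq \dnd(G)$.

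First I would recall that in $U(G)$ the neighborhood of a vertex $v$ is $N_{U(G)}(v) = N^-(v) \cup N^+(v)$, since $\{u,v\}$ is an edge of $U(G)$ exactly when $(u,v)\in E$ or $(v,u)\in E$. Then I would take any pair $v,u$ with $v \sim_{dnd} u$ and verify the defining condition of $\sim_{nd}$, namely $N_{U(G)}(v)\setminus\{u\} = N_{U(G)}(u)\setminus\{v\}$. This follows by taking the union of conditions~(1) and~(2) in the definition of $\sim_{dnd}$ and distributing the set difference over the union:
\[
N_{U(G)}(v)\setminus\{u\} = \bigl(N^-(v)\setminus\{u\}\bigr)\cup\bigl(N^+(v)\setminus\{u\}\bigr) = \bigl(N^-(u)\setminus\{v\}\bigr)\cup\bigl(N^+(u)\setminus\{v\}\bigr) = N_{U(G)}(u)\setminus\{v\}.
\]
Notably, condition~(3) of $\sim_{dnd}$ is not required for this implication; conditions~(1) and~(2) alone already force the undirected equivalence.

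The only point requiring care is the direction of the inequality: a refinement always has at least as many classes as the coarser relation. Since each $\sim_{dnd}$-class sits inside one $\sim_{nd}$-class (equivalently, each $\sim_{nd}$-class is a disjoint union of $\sim_{dnd}$-classes), the number of $\sim_{dnd}$-classes is at least the number of $\sim_{nd}$-classes, i.e.\ $\dnd(G) \geq \nd(U(G))$. There is essentially no genuine obstacle here, as the statement is a short structural observation; the main ``step'' is simply confirming that conditions~(1)--(2) give containment of the neighborhoods in the correct way and keeping the counting direction straight.
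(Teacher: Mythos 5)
Your proof is correct and matches the paper's intent: the paper simply asserts the inequality follows ``by the very definitions,'' and your refinement argument (conditions (1)--(2) of $\sim_{dnd}$ imply the $\sim_{nd}$ condition on $U(G)$ by taking unions, hence $\sim_{dnd}$ refines $\sim_{nd}$ and has at least as many classes) is exactly the intended justification.
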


Note that equality in \autoref{prop:dnd-und} can be strict, consider for example any tournament $G$ of order~$n$. Since there is exactly one arc between every pair $u,v$ of vertices of~$G$, the third condition for $u\sim_{dnd}v$ is never satisfied, so $\dnd(G)=n$. However, $U(G)$ is a complete graph, so $\nd(U(G))=1$.

Similarly to the undirected case, the directed neighborhood diversity number of  the underlying graph is upper-bounded by an exponential function of the vertex cover number of the underlying undirected graph. However, the details are a bit different, as shown next.

\begin{proposition}
For every directed graph $G$, $\dnd(G)\leq 4^{\vc(U(G))}+\vc(U(G))$.  
\end{proposition}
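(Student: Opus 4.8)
The plan is to mimic Lampis's argument giving $\nd(G)\le 2^{\vc(G)}+\vc(G)$ in the undirected setting, but to account separately for in- and out-neighbourhoods; this is exactly what replaces the base $2$ by $4$. First I would fix a minimum vertex cover $S$ of the underlying undirected graph $U(G)$, so that $|S|=\vc(U(G))$, and set $I\coloneqq V\setminus S$. Since $S$ covers every edge of $U(G)$, the set $I$ is independent in $U(G)$; translating this back to $G$, there is no arc (in either direction) between two vertices of $I$. In particular, for every $v\in I$ we have $N^-(v)\subseteq S$ and $N^+(v)\subseteq S$.

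Next I would assign to each vertex $v\in I$ the signature $\sigma(v)\coloneqq(N^-(v),N^+(v))\in 2^{S}\times 2^{S}$, and show that two vertices $u,v\in I$ with $\sigma(u)=\sigma(v)$ satisfy $u\sim_{dnd}v$. Here I would verify the three defining conditions: condition~3 holds trivially, since there is no arc between $u$ and $v$ (both sides of the biconditional are false); and because $u\notin N^{\pm}(v)$ and $v\notin N^{\pm}(u)$ (again as $u,v\in I$ share no arc and their neighbourhoods lie in $S$), the corrections in conditions~1 and~2 are void, so these conditions reduce to $N^-(v)=N^-(u)$ and $N^+(v)=N^+(u)$, i.e.\ to $\sigma(u)=\sigma(v)$. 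Consequently the vertices of $I$ split into at most $|2^{S}\times 2^{S}|=4^{|S|}=4^{\vc(U(G))}$ equivalence classes.

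Finally I would count all classes of $\sim_{dnd}$ according to whether they meet $S$. Each class intersecting $S$ consumes at least one of the $|S|=\vc(U(G))$ vertices of $S$, so there are at most $\vc(U(G))$ such classes; every remaining class is contained entirely in $I$ and is therefore determined by a single signature, giving at most $4^{\vc(U(G))}$ of them. Summing the two bounds yields $\dnd(G)\le 4^{\vc(U(G))}+\vc(U(G))$, as claimed. The only step demanding genuine care, and thus the analogue of the main obstacle, is the directed signature bookkeeping of the second paragraph: one must confirm that for vertices of $I$ the ``$\setminus\{u\}$'' and ``$\setminus\{v\}$'' corrections in conditions~1 and~2 really drop out and that condition~3 is automatic, so that equivalence on $I$ is captured precisely by the \emph{pair} $(N^-,N^+)$ rather than by a single neighbourhood as in the undirected case — which is exactly where the extra factor turning $2^{\vc}$ into $4^{\vc}$ comes from.
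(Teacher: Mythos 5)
Your proof is correct and follows essentially the same approach as the paper: the paper encodes your signature $(N^-(v),N^+(v))$ as a $4$-coloring of the vertex cover (recording for each cover vertex whether it lies in neither, only the out-, only the in-, or both neighborhoods of $v$), which carries exactly the same information, and likewise bounds the classes meeting the cover by $\vc(U(G))$. Your version is in fact slightly more explicit, since you verify the three conditions of $\sim_{dnd}$ for independent-set vertices with equal signatures, a step the paper leaves implicit.
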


\begin{proof} Namely, consider all (not necessarily proper) 4-colorings of the vertex cover set.
They should come with the following meaning (with respect to a vertex~$u$ in the independent set):
\begin{itemize}
    \item $0$: not touched by any edge incident to $u$,
    \item $1$: in $N^+(u)\setminus N^-(u)$,
    \item $2$: in $N^-(u)\setminus N^+(u)$,
    \item $3$: in $N^+(u)\cap N^-(u)$.
\end{itemize}
This idea already shows the claim. 
Namely, at worst each vertex in a minimum vertex cover forms its own equivalence class, and two vertices $u,u'$ in the independent set (the complement of the vertex cover) are equivalent if and only if they induce the same 4-coloring as defined above. As there are at most  $4^{\vc(U(G))}$ many such 4-colorings, this upper-bounds the number of equivalence classes with independent set vertices only.
\end{proof}

This immediately provides the following result, which we will refine and make more explicit in the next subsection.

\begin{corollary}\label{cor:dspp+dipp-vc}
 \textsc{DSPP}  and \textsc{DIPP} are \FPT  when parameterized by the vertex cover number of the underlying undirected graph.
\end{corollary}

We could also define the directed neighborhood diversity by only in-arcs or out-arcs. But different from the undirected case, we must specify if we consider the open or closed neighborhood for the diversity. If we would define the relation~$\sim$ for two vertices $v,u\in V$ only by $N^-(v) \setminus \{u\} = N^-(u)\setminus \{v\}$ (completely in analogy to the undirected case), then the transitive triangle would again be a counter-example for the required transitivity. As for $\sim_{dnd}$, we could include the condition $(u,v)\in E$ if and only if $(v,u) \in E$. But even for this modification, there is a  counter-example for transitivity of this relation~$\sim$:  $G=(\{u,v,w\},E)$ with $E \coloneqq \{(w,v), (u,v), (v,w)\}$. Here, $v\sim w$ and $v \sim u$ but $w \not\sim u$. 

So either we try to discuss further adaptations of the notion of neighborhood diversity towards directed graphs, 
or we can base a definition on the open or closed neighborhoods, e.g., $u\sim_{\text{open}^-}v\iff N^-(u)=N^-(v)$. But we did not find a way such that these definitions help generalize our algorithms. For example, consider for the open neighborhood (the closed one works analogously) $G=(\{v,u,w,x,y\},E)$ with $E\coloneqq \{(y,v),(y,u),(v,w),(v,x)\}$. The three $\sim_{\text{open}^-}$-classes are $C_1\coloneqq\{y\}$, $C_2\coloneqq \{v,u\}$, $C_3 \coloneqq \{w,x\}$. So our algorithm would advise to use a path $C_1,C_2,C_3$ and $C_2,C_3$. But this is not possible.  


\subsection{More on vertex cover parameterization}

We now study the parameterization by the vertex cover number of the (underlying) graph, which provides better running times than via the more general (directed) neighborhood diversity parameterization of \autoref{cor:dspp+dipp-vc}.

\begin{proposition}\label{thm:vc-param}
\textsc{UPP}, \textsc{USPP} and \textsc{UIPP} are \FPT  when parameterized by vertex cover number, as they admit single-exponential size kernels. The same holds for directed graphs for \textsc{DPP}, \textsc{DSPP} and \textsc{DIPP}, for the vertex cover number of the underlying graph.
\end{proposition}
\begin{proof}
Assume that $\vc(G)$ is our problem parameter, given a graph $G=(V,E)$, and let $C\subseteq V$ be a minimum vertex cover (which can be computed in \FPT time by \cite{CheKanXia2010,HarNar2024}). 
Now we claim that if there are more than $2\vc(G)$ many vertices in the independent set $ I=V\setminus C$ that have the same type, then at least one of them must form a single-vertex path in a minimum pp $\mathbb{P}$ of $G$. Namely, consider a vertex $x_0\in I$ with $x_1,\dots,x_{2\vc(G)}\in I$ being $2\vc(G)$ many other vertices of the same type, i.e., $N_G(x_0)=N_G(x_i)$ for $i=1,\dots,2\vc(G)$. Any of these vertices~$x_i$ (with $i=0,\dots,2\vc(G)$) that is not forming a  single-vertex path in $\mathbb{P}$ demands at least one neighbor $y_i\in N_G(x_0)$ as a neighbor on its path. This neighbor $y_i$ can only be `shared' by one other vertex $x_i'\in I$ on this path, i.e., there might be at most one $i'\neq i$ such that $y_i=y_{i'}$. But as $|N_G(x_0)|\leq \vc(G)$ and as there are (at least) $2\vc(G)+1$ many vertices of the same type as $x_0$, so by pigeon hole, at least one vertex must form a single-vertex path in~$\mathbb{P}$.

Implementing this reduction rule leaves us with $|I|\leq 2\vc(G)\cdot 2^{\vc(G)}$, as the number of neighborhood equivalence classes within $I$ is clearly bounded by $2^{\vc(G)}$. Hence, we can assume that the size of $G$ is bounded by a function in $\vc(G)$ which shows that \textsc{UPP}, \textsc{UIPP} and \textsc{USPP} parameterized by vertex cover, have a kernel and are hence in \FPT.

Note that the exact same argument holds for directed graphs as well, based on our reasoning in the preceding subsection.
\end{proof}

We 
can  also obtain a direct \FPT algorithm for \textsc{UPP}, leading to the following result.
It is not that clear how this result can be adapted to the other path partition variants (\textsc{UIPP} and \textsc{USPP}) studied in this paper.

\begin{proposition}\label{prop:UPP-VC-FPT}
\textsc{UPP} is \FPT  when parameterized by
vertex cover number, testified by an algorithm running in time $\Oh^*\left(\vc(G)!\cdot 2^{\vc(G)}\right)$ on input~$G$.
\end{proposition}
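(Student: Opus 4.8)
The plan is to first compute a minimum vertex cover $C$ of $G$, with $c:=\vc(G)$, in \FPT time using \cite{CheKanXia2010}, and then to enumerate the global ``shape'' that an optimal path partition induces on $C$. The key structural observation is that $I:=V\setminus C$ is an independent set, so no two vertices of $I$ are consecutive on any path. Consequently every path is of exactly one of two kinds: a single vertex of $I$, or a path whose vertices of $C$ occur in a well-defined linear order, into which at most one vertex of $I$ is inserted between any two consecutive $C$-vertices, plus at most one extra $I$-vertex attached at each of its two endpoints (any attempt to insert two would force two adjacent $I$-vertices). Thus the whole partition is determined, up to the placement of $I$, by how the $C$-vertices are distributed and ordered along the paths.

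First I would enumerate this $C$-skeleton directly: over all $c!$ permutations $\pi=(\pi_1,\dots,\pi_c)$ of $C$ and all $2^{\,c-1}$ subsets of the $c-1$ gaps of $\pi$ declared as \emph{path boundaries}. Each maximal run of $\pi$ between two consecutive boundaries is the ordered $C$-backbone of one path; the number of runs is one more than the number of chosen boundaries. For a fixed guess, two consecutive $C$-vertices $\pi_j,\pi_{j+1}$ lying in the same run must be \emph{linkable}: either $\pi_j\pi_{j+1}\in E$, or they are joined through a single $I$-vertex adjacent to both. Every guess that passes the subsequent feasibility test corresponds to a genuine partition structure, and, conversely, any path partition concatenates (in the order of its paths) to such a permutation with the within-path adjacencies as non-boundary gaps; hence the enumeration is both sound and complete.

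The heart of each iteration is the placement of the vertices of $I$. For a fixed guess I would form a bipartite assignment/flow instance between the vertices of $I$ and the available \emph{slots}: a \emph{mandatory} slot is a within-run gap whose two $C$-endpoints are non-adjacent in $G$ (it must be filled by an $I$-vertex adjacent to both, else the guess is infeasible); an \emph{optional} slot is a within-run gap between two $G$-adjacent $C$-vertices or one of the two ends of a run (it may absorb one adjacency-compatible $I$-vertex). Each slot holds at most one $I$-vertex, and each $I$-vertex is compatible with a slot exactly when it is adjacent to the required $C$-vertex(es). Since the number of runs is already fixed by the guess, minimizing the number of paths is equivalent to \emph{maximizing the number of absorbed $I$-vertices subject to all mandatory slots being filled}, the unabsorbed $I$-vertices becoming singleton paths. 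This is a maximum flow with unit capacities and lower bounds of $1$ on the mandatory slots, solvable in polynomial time; the number of paths for the guess is then (number of runs) $+$ (number of unabsorbed $I$-vertices). Taking the minimum of this quantity over all guesses and comparing it with $k$ decides the instance.

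Since each of the $c!\cdot 2^{\,c-1}$ guesses is processed in polynomial time, the total running time is $\Oh^*\!\left(\vc(G)!\cdot 2^{\vc(G)}\right)$, as claimed. The main obstacle I expect is the careful set-up and correctness argument for the $I$-placement step: one must verify that at most one $I$-vertex fits per gap and per endpoint (from independence of $I$), that \emph{mandatory} links are exactly the obstructions to feasibility, and that the flow-with-lower-bounds formulation indeed realizes the minimum number of paths consistent with the guessed skeleton. The permutation-plus-boundary count is precisely what yields the $\vc(G)!\cdot 2^{\vc(G)}$ factor, so the bookkeeping must stay within this budget and not, for instance, re-enumerate placements of $I$ explicitly.
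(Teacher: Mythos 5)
Your proposal is correct and follows essentially the same route as the paper's proof: compute a minimum vertex cover $C$, enumerate all $\vc(G)!$ orderings of $C$ together with all $2^{\vc(G)-1}$ choices of path boundaries along that ordering, and then optimally insert the independent-set vertices via a polynomial-time matching/flow computation, taking the best result over all guesses. The only (inessential) difference is in the placement subroutine: the paper builds an auxiliary bipartite graph with primed and dummy vertices and computes a maximum-\emph{weight} matching with weights in $\{1,2\}$, whereas you use a unit-capacity flow with lower bounds of $1$ on the mandatory slots --- both solve the same assignment problem in polynomial time and yield the claimed $\Oh^*\left(\vc(G)!\cdot 2^{\vc(G)}\right)$ bound.
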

\begin{proof}
The proposed algorithm runs as follows on the input graph $G=(V,E)$:\\[1ex] (0) Compute a minimum vertex cover~$C$ of the input graph; this can be done in time $\Oh(c^{\vc(G)})$ for some $c<1.3$, see~\cite{CheKanXia2010,HarNar2024}. We call the complement of~$ C$ as~$I$, being an independent set.\\ We also keep track of a `record partition' $\mathbb{P}$ that we initialize with the path partition where each path is consisting of one vertex.\\[1ex]
(1) Iterate through all permutations of the vertices in~$C=\{v_1,\dots,v_r\}$, where $r=\vc(G)$. The idea is that each such permutation describes a sequence of vertices in the order in which each of the paths that are considered in a partition traverse the $C$-vertices in this order.\\[1ex]
(2) For each such permutation~$\pi$, say, $v_{\pi(1)},\dots,v_{\pi(r)}$, 
walk through all $(r-1)$-dimensional bit-vectors $\bar{b}=b_1b_2\cdots b_{r-1}$. These bit-vectors describe where the sequence $v_{\pi(1)},\dots,v_{\pi(r)}$ is connected (or not) on a path. More precisely,  $b_j=1$ if and only if $v_{\pi(j)}$ and $v_{\pi(j+1)}$ are on the same path in the path partition~$\mathbb{P}(\pi,\bar b)$ that we are constructing. Therefore, a bit-vector with $\ell$ zeros will describe a collection of paths with $\ell+1$ paths that contain vertices of~$C$. 
Also, if $P\in \mathbb{P}(\pi,\bar b)$, then there exist $i\leq j$ such that, if we view $P$ as a set of vertices, $P\cap C=\{v_{\pi(i)},\dots,v_{\pi(j)}\}$ and for all indices $\iota\in [i,j-1]$, $b_\iota=1$, while $b_{i-1}=0$ or $i=1$ and $b_j=0$ or $j-1=r$.

\begin{remark}
Under the assumption that $I=V\setminus C$ might be much larger than~$C$, as suggested by the exponential-size kernel obtained in \autoref{thm:vc-param}, this does not relate $\ell$ to the solution size $k$, as then $k$ may be in the order of~$|I|$ rather than just 
in the order of $\ell\leq r=|C|$. Yet, we can say that, given a bit-vector~$\bar b$ with $\ell$ zeros and assuming that $I$ is sufficiently large, we will obtain at least $(\ell+1)+(|I|-2(\ell+1))=|I|-(\ell+1)$ many paths, a number that should be smaller than~$k$, as otherwise we can abort. Namely, by our construction, we have $\ell+1$ paths containing vertices from~$C$. In order to get as few paths as possible in our path partition, as many vertices from~$I$ as possible should be integrated into these paths. In each of these paths, we can integrate at most as many vertices as there are vertices belonging to~$C$, plus one, because vertices from~$I$ can occur only as `intermediate vertices' between two vertices from~$C$, or at the very ends of the paths, because $I$ is an independent set in~$G$ by assumption. Hence, at most 
$2r+(\ell+1)$ 
many vertices can be integrated into these $\ell+1$ many paths. Of these vertices, at most $r+(\ell+1)$ 
belong to~$I$, but `the rest' of~$I$ must be covered by single-vertex-paths. However, we could possibly also get more paths in a solution that is consistent with $\bar b$, namely, if no vertices from~$I$ are integrated in the paths that contain vertices from~$C$. In this extreme case, we are looking at a solution with $|I|+(\ell+1)$ many paths. In order to integrate as many vertices from~$I$ into   paths that contain vertices from~$C$, we will suggest to consider a maximum matching in an auxiliary graph, as described in the following.
\end{remark}

\noindent
(3) For all $\pi$ and $\bar{b}$,
create an auxiliary edge-weighted bipartite graph $G'=G'(\pi,\bar b)$ with edge set $E'$ and vertex set $V'$ containing $V$, such that $C$ is part of one set~$C'$ of the bipartition. 
Similarly, let $I'=V'\setminus C'$ contain~$I$.
Moreover, we will define a weight function $w:E'\to\{1,2\}$.
More precisely, to obtain $C'$, add to $C$ (always) the vertex $v_{\pi(1)}'$ and further vertices $v_{\pi(i+1)}'$ if $b_i=0$, and possibly add
vertices $u_i$ into~$I$ if $b_i=1$ to build $I'$.
For a worked-out example explaining our construction, we refer to \autoref{exampleE}.

First assume that $b_i=1$, i.e., $v_{\pi(i)}$ and $v_{\pi(i+1)}$ should lie on a path in $\mathbb{P}(\pi,\bar b)$.
Connect $v_{\pi(i)}$ and $x\in I$ by an edge of weight~$2$ if $v_{\pi(i)}x\in E$ and $v_{\pi(i+1)}x\in E$. 
Moreover, if there is no edge incident to  $v_{\pi(i)}$ so far but $v_{\pi(i)}v_{\pi(i+1)}\in E$, then introduce a new vertex $u_i$ (as part of~$I'$) and the edge $v_{\pi(i)}u_i$ of weight~$1$. 

Secondly, assume that $b_i=0$, i.e., $v_{\pi(i)}$ and $v_{\pi(i+1)}$ should not lie on a path in $\mathbb{P}(\pi,\bar b)$.
Still, the path that ends at $v_{\pi(i)}$ (concerning the $C$-vertices) could contain one more vertex (in~$I$). Therefore, we also connect $v_{\pi(i)}$ and $x\in I$ by an edge of weight~$1$ if $v_{\pi(i)}x\in E$.  Similarly,  connect $v_{\pi(r)}$ and $x\in I$ by an edge of weight~$1$ if $v_{\pi(r)}x\in E$.
Also the path that starts at $v_{\pi(i+1)}$ (concerning the $C$-vertices) could contain one more vertex (in~$I$).
To cover this case, we also connect $v_{\pi(i+1)'}$ and $x\in I$ by an edge of weight~$1$ if $v_{\pi(i+1)}x\in E$.
Similarly, we connect $v_{\pi(1)'}$ and $x\in I$ by an edge of weight~$1$ if $v_{\pi(1)}x\in E$.

This concludes the description of the auxiliary graph~$G'=(V',E')$ and the weight function $w$.

Now, compute a maximum weighted matching~$M$ in $G'$. If the vertices $v_{\pi(i)}$, $v_{\pi(i+1)}$,  \dots, $v_{\pi(j)}$ are matched, then this clearly corresponds to a path within~$G$ that traverses $v_{\pi(i)}$, $v_{\pi(i+1)}$,  \dots $v_{\pi(j)}$ in that order, possibly using vertices from~$I$ on its way, including possibly starting or ending in~$I$.
Let us make this more precise.

First, we test if there is any edge $v_{\pi(i+1)}'x$ in the matching, while $v_{\pi(i)}x$ was also an edge in $G'$. This would mean that we found a situation where we guessed (by the bit-vector~$\bar b$, as $b_i=0$) that one path ends at $v_{\pi(i)}$ and the next path starts at $v_{\pi(i+1)}$. Our matching proves that we could connect both paths, i.e., the path partition $\mathbb{P}(\pi,\bar b)$ would not be minimal. Also, there will be another setting of the bit-vector~$\bar b$ that covers this (better) case. Therefore, we need not consider this case any longer. 
Similarly, we can skip the case when $v_{\pi(1)}'x$ in the matching, while $v_{\pi(r)}x$ was also an edge in~$G'$.

Now, we describe $\mathbb{P}(\pi,\bar b)$ more precisely.
Our first path~$P_1$ should contain $v_{\pi(1)}$. If $v_{\pi(1)'}$ is matched with some $x_0\in I$, then $P_1$ starts with $x_0$, with the second vertex then being $x_1=v_{\pi(1)}$, while otherwise it starts with~$x_1$.
If $b_1=0$, then it might be the case that $x_1$ is matched with $x_2\in I$, where~$P_1$ will then end, or $x_1$ is not matched, which means that $P_1$ already ends with $x_1$.
If $b_1=1$, then we expect that $x_1$ is matched with some vertex (otherwise, the guessed bit-vector would be wrong and we can stop any further computation of this case here).
If $x_1$ is matched with some $x_2\in I$, this means that (within~$G$) we find the edges $x_1x_2$ and $x_2x_3$, with $x_3=v_{\pi(2)}$.
Otherwise, $x_1$ is matched with~$u_1$. By construction, this means that there was no $x\in I$ with $x\in N_G(x_1)\cap N_G(v_{\pi(2)})$, but $x_1\in N_G(v_{\pi(2)})$. Hence, we can take $x_2=v_{\pi(2)}$ as the next vertex on~$P_1$.
Continuing this way, we will construct the first path~$P_1$, whose end is determined by the first~$j$ with $b_j=0$. Then, the second path~$P_2$ will start with~$v_{\pi(j)}$ if $v_{\pi(j)}'$ is unmatched and with some $y\in V\setminus C$ if $y$ is matched with $v_{\pi(j)}'$ that is constructed in the very same fashion.

Notice that all vertices of $I$ outside the matching belong to paths of length zero.
As the matching was maximum, we see the minimum number of paths (respecting the choice of $\bar b$)
constructed in this way. More precisely, we claim the following property.

\begin{claim}
Let $M$ is a maximum weight matching of $G'(\pi,\bar b)=(V',E')$ with weight function~$w$ as defined above. Then, the path partition $\mathbb{P}(\pi,\bar b)$ contains $|V|-w(M)$ many paths. 
\end{claim}

\begin{proof}
It is best to think of this claim from the perspective of how many paths can be `saved' within $\mathbb{P}(\pi,\bar b)$ in comparison to the solution that consists of single-vertex paths only. If a vertex in~$C\cup I$ is not matched, then it means that it forms a single-vertex path. 
If we find an edge $e=v_{\pi(i)}x\in M$ with $x\in I$, then this means that we see the sub-path $v_{\pi(i)}xv_{\pi(i+1)}$ in $\mathbb{P}(\pi,\bar b)$, somehow `merging' three potential single-vertex paths and therefore reducing the number of paths by~$2$, which is the weight of~$e$. If we find an edge $e=v_{\pi(i)}x\in M$ where $b_i=0$, i.e., $w(e)=1$, then the path does not continue to $v_{\pi(i+1)}$ but will end in~$x$. Nonetheless, $x$ does not form a single-vertex path, so again we reduce the number of paths by the weight~$1$ of~$e$. A similar observation can be made concerning edges between some $v_{\pi(i)}'$ and some $x\in I$, as they treat 'the other end' of the paths. Finally, one can consider the edges $e=v_{\pi(i)}u_i\in M$. Again, $w(e)=1$. They reflect the possibility to connect two paths that would otherwise end in  $v_{\pi(i)}$ and start in $v_{\pi(i+1)}$, respectively. Hence, also in this case we reduce the number of paths by the weight~$1$ of~$e$.\renewcommand{\qedsymbol}{$\Diamond$}
\end{proof}

If the path partition $\mathbb{P}(\pi,\bar b)$ contains less paths than the current record partition $\mathbb{P}$, we will update~$\mathbb{P}$ accordingly.
After enumerating all permutations and bit vectors, $\mathbb{P}$ should satisfy the following claim.

\begin{claim}
The given graph~$G$ contains a path partition with at most $k$ many paths if and only if our algorithm outputs a record partition $\mathbb{P}$ with at most~$k$ many paths.
\end{claim}

\begin{proof}
It should be clear by what we argued above that our algorithm only outputs valid path partitions. Hence, if it outputs a record partition $\mathbb{P}$ with at most~$k$ many paths, then our graph contains such a pp.

Conversely, let $\mathbb{P}$ be a pp with the minimum number of paths. Let $k$ be the number of these paths. If $P_1,\dots,P_k$ are the paths in $\mathbb{P}$ ordered by decreasing length, then there is some smallest number $k'\leq k$ such that the paths $P_{k'+1},\dots,P_k$ contain only vertices from~$I$. Then, by reading the vertices from $C=\{v_1,\dots,v_r \}$ on the paths $P_1,\dots,P_{k'}$ `from left to right', we can define a permutation $\pi$ on $\{1,\dots,r\}$.
Moreover, we can define a bit-vector $\bar b$ that describes how many vertices of $C$ belong to $P_1,\dots,P_{k'}$. This bit-vector has $k'-1$ many zeros.
One can verify that this solution corresponds to a weighted maximum matching in the auxiliary graph $G'(\pi,\bar b)$ of weight $|V|-k$.
\renewcommand{\qedsymbol}{$\Diamond$}
\end{proof}

Notice that Step~(0) is executed only once, so that for the running time, it is decisive that we cycle through all permutations and through all settings of the bit-vectors.\end{proof}

\noindent
We are now illustrating the construction of the previous proof by an example.

\begin{figure}
    \centering
\begin{subfigure}[b]{.98\textwidth}
    \centering
	\begin{tikzpicture}[transform shape]
	\tikzset{scale=1,every node/.style={scale=0.5, fill = black,circle,minimum size=0.1cm}}
			
			\node[draw] (v1) at (0.5,1) {}; 
			\node[draw] (v2) at (2.5,1) {};
			\node[draw] (v3) at (4,1) {}; 
			\node[draw] (v4) at (5.5,1) {};	
			\node[draw] (v5) at (7,1) {}; 
			\node[draw] (v6) at (8,1) {};	
			\node[draw] (v7) at (9,1) {}; 
			\node[draw] (w1) at (0,-1) {}; 
			\node[draw] (w2) at (1,-1) {};	
			\node[draw] (w3) at (2,-1) {}; 
			\node[draw] (w4) at (3,-1) {};	
			\node[draw] (w5) at (4,-1) {}; 
			\node[draw] (w6) at (5,-1) {};	
			\node[draw] (w7) at (6,-1) {};
			\node[draw] (w8) at (7,-1) {}; 
			\node[draw] (w9) at (8,-1) {};		
			\node[draw] (w10) at (9,-1) {};
			
			\node[scale=1.66,fill=none]  at (0.5,1.2) {$v_1$};
			\node[scale=1.66,fill=none]  at (2.5,1.2) {$v_2$};
			\node[scale=1.66,fill=none]  at (4,1.2) {$v_3$}; 
			\node[scale=1.66,fill=none]  at (5.5,1.2) {$v_4$};	
			\node[scale=1.66,fill=none]  at (7,1.2) {$v_5$}; 
			\node[scale=1.66,fill=none]  at (8,1.2) {$v_6$};		
			\node[scale=1.66,fill=none]  at (9,1.2) {$v_7$}; 
			\node[scale=1.66,fill=none]  at (0,-1.2) {$w_1$}; 
			\node[scale=1.66,fill=none]  at (1,-1.2) {$w_2$};	
			\node[scale=1.66,fill=none]  at (2,-1.2) {$w_3$}; 
			\node[scale=1.66,fill=none]  at (3,-1.2) {$w_4$};	
			\node[scale=1.66,fill=none]  at (4,-1.2) {$w_5$}; 
			\node[scale=1.66,fill=none]  at (5,-1.2) {$w_6$};
			\node[scale=1.66,fill=none]  at (6,-1.2) {$w_7$};
			\node[scale=1.66,fill=none]  at (7,-1.2) {$w_8$}; 
			\node[scale=1.66,fill=none]  at (8,-1.2) {$w_9$};
			\node[scale=1.66,fill=none]  at (9,-1.2) {$w_{10}$};
   \draw[left,dotted,->] (w3) arc (0:280:.21);
    \draw[left,dashed,->] (w4) arc (0:280:.21);
			\path (v1) edge [dash dot] (w1);
			\path (v1) edge [dash dot] (w2);
			\path (v2) edge [dashed] (w3);
			\path (v2) edge [dotted] (w4);
			\path (v2) edge [dash dot] (v3);
			\path (v3) edge [dash dot] (w5);
			\path (v4) edge [dash dot] (w6);
			\path (v4) edge [dash dot] (w7);
			\path (v4) edge [bend left] (v6);
			\path (v5) edge [dash dot] (w8);
			\path (v5) edge [dash dot] (w9);
			\path (v5) edge (w10);
			\path (v5) edge (v6);
			\path (v6) edge (w2);
			\path (v6) edge [dash dot] (w9);
			\path (v6) edge [dashed] (w10);
			\path (v6) edge [dotted] (v7);
			\path (v7) edge [dash dot] (w10);
        \end{tikzpicture}

    \subcaption{The original graph $G$}
    \label{fig:griginalGraph}
\end{subfigure}\\
\begin{subfigure}[b]{.98\textwidth}
    \centering
    	
	\begin{tikzpicture}[transform shape]

	\tikzset{scale=1, every node/.style={scale=0.5, fill = black,circle,minimum size=0.1cm}}
			
			\node[draw] (v1) at (0.5,1) {}; 
			\node[draw] (v2) at (2.5,1) {};
			\node[draw] (v3) at (4,1) {}; 
			\node[draw] (v4) at (5.5,1) {};	
			\node[draw] (v5) at (7,1) {}; 
			\node[draw] (v6) at (8,1) {};	
			\node[draw] (v7) at (9,1) {}; 
			\node[draw] (v'1) at (0.5,-3) {}; 
			\node[draw] (v'2) at (2.5,-3) {};
			\node[draw] (v'4) at (5.5,-3) {};	
			\node[draw] (v'5) at (7,-3) {}; 
		    \node[draw] (u2) at (3.25,3) {};
		    \node[draw] (u5) at (7.5,3) {}; 
		    \node[draw] (u6) at (8.5,3) {};  
			\node[draw] (w1) at (0,-1) {}; 
			\node[draw] (w2) at (1,-1) {};	
			\node[draw] (w3) at (2,-1) {}; 
			\node[draw] (w4) at (3,-1) {};	
			\node[draw] (w5) at (4,-1) {}; 
			\node[draw] (w6) at (5,-1) {};	
			\node[draw] (w7) at (6,-1) {};
			\node[draw] (w8) at (7,-1) {}; 
			\node[draw] (w9) at (8,-1) {};		
			\node[draw] (w10) at (9,-1) {};
			\node[scale=1.66,fill=none]  at (0.8,1) {$v_1$};
			\node[scale=1.66,fill=none]  at (0.8,-3) {$v'_1$};
			\node[scale=1.66,fill=none]  at (2.8,1) {$v_2$};
			\node[scale=1.66,fill=none]  at (2.8,-3) {$v'_2$};
			\node[scale=1.66,fill=none]  at (4.3,1) {$v_3$}; 
			\node[scale=1.66,fill=none]  at (5.8,1) {$v_4$}; 
			\node[scale=1.66,fill=none]  at (5.8,-3) {$v'_4$};	
			\node[scale=1.66,fill=none]  at (7.3,1) {$v_5$};	
			\node[scale=1.66,fill=none]  at (7.3,-3) {$v'_5$}; 
			\node[scale=1.66,fill=none]  at (8.3,1) {$v_6$};		
			\node[scale=1.66,fill=none]  at (9.3,1) {$v_7$}; 
			\node[scale=1.66,fill=none]  at (0.3,-1) {$w_1$}; 
			\node[scale=1.66,fill=none]  at (1.3,-1) {$w_2$};	
			\node[scale=1.66,fill=none]  at (2.3,-1) {$w_3$}; 
			\node[scale=1.66,fill=none]  at (3.3,-1) {$w_4$};	
			\node[scale=1.66,fill=none]  at (4.3,-1) {$w_5$}; 
			\node[scale=1.66,fill=none]  at (5.3,-1) {$w_6$};
			\node[scale=1.66,fill=none]  at (6.3,-1) {$w_7$};
			\node[scale=1.66,fill=none]  at (7.3,-1) {$w_8$}; 
			\node[scale=1.66,fill=none]  at (8.3,-1) {$w_9$};
			\node[scale=1.66,fill=none]  at (9.3,-1) {$w_{10}$};
		    \node[scale=1.66,fill=none]  at (3.55,3) {$u_2$};
		    \node[scale=1.66,fill=none]  at (7.8,3) {$u_5$}; 
		    \node[scale=1.66,fill=none]  at (8.8,3) {$u_6$}; 
			
			\path (v1) edge [left,dashed] node [fill=none,scale=1.66] {1} (w1);
			\path (v1) edge [right,dotted] node [fill=none,scale=1.66] {1} (w2);
			\path (v'1) edge [left,dotted] node [fill=none,scale=1.66] {1} (w1);
			\path (v'1) edge [right,dashed] node [fill=none,scale=1.66] {1} (w2);
			\path (v'2) edge [left,dash dot] node [fill=none,scale=1.66] {1} (w3);
			\path (v'2) edge [right] node [fill=none,scale=1.66] {1} (w4);
			\path (v2) edge [left,dash dot] node [fill=none,scale=1.66] {1} (u2);
			\path (v3) edge [left,dash dot] node [fill=none,scale=1.66] {1} (w5);
			\path (v4) edge [left,dashed] node [fill=none,scale=1.66] {1} (w6);
			\path (v4) edge [right, dotted] node [fill=none,scale=1.66] {1} (w7);
			\path (v'4) edge [left,dotted ] node [fill=none,scale=1.66] {1} (w6);
			\path (v'4) edge [right,dashed] node [fill=none,scale=1.66] {1} (w7);
			\path (v5) edge [left,dash dot] node [fill=none,scale=1.66] {2}(w9);
			\path (v5) edge [below] node [fill=none,scale=1.66] {2}(w10);
			\path (v'5) edge [left, dash dot] node [fill=none,scale=1.66] {1} (w8);
			\path (v'5) edge [above] node [fill=none,scale=1.66] {1} (w9);
			\path (v'5) edge [right] node [fill=none,scale=1.66] {1} (w10);
			\path (v5) edge [left] node [fill=none,scale=1.66] {1} (u5);
			\path (v6) edge [above,dashed] node [fill=none,scale=1.66] {2}(w10);
			\path (v6) edge [left, dotted] node [fill=none,scale=1.66] {1} (u6);
			\path (v7) edge [right, dotted] node [fill=none,scale=1.66] {1}(w10);

        \end{tikzpicture}
    \subcaption{The resulting bipartite graph $G'$}
    \label{fig:transformGraph}
\end{subfigure}
    \caption{Drawings illustrating \autoref{exampleE}}\label{fig:ExampleE}
\end{figure}

\begin{example}\label{exampleE}
Consider $G=(V,E)$, with $V$ being decomposed as $C\cup I$, where $C=\{v_1,v_2,\dots,v_7\}$ and $I=\{w_1,w_2,\dots,w_{10}\}$. The set $C$ is a vertex cover of~$G$, with $E$ containing the edges $\{v_2v_3,v_4v_6,v_6v_6,v_6v_7\}$ plus several edges connecting $C$ and~$I$, as shown in \autoref{fig:ExampleE}.

Let us consider the identity as permutation and consider the bit-vector $\bar b=(0,1,0,0,1,1)$.
This lets us consider the graph~$G'$ as depicted in \autoref{fig:ExampleE}.
Notice that we use four layers for drawing~$G'$.
On the topmost layer, we show the three vertices $u_2,u_5,u_6$ that we have to introduce thanks to $\bar b$. On the second layer, we draw the vertices of~$C$ given in the ordering of our permutation, the identity. On the third layer, we draw the vertices from~$I$. Finally, our algorithm enforces us to introduce one more primed vertex whenever a new path is started with respect to~$C$. As $\bar b$ contains three zeros, this means that we have to introduce four vertices, which are $v_1',v_2',v_4',v_5'$.
All edges in this bipartite graph $G'$ that are incident to vertices on the topmost or lowermost layer have weight~$1$. Only edges connecting the second and the third layer may have either weight~$1$ or~$2$.
In our example, one can find two distinctively different matchings of maximum weight:
\begin{itemize}
    \item $M_1=\{v_1w_1,v_1'w_2,v_2u_2,v_2'w_3,v_3w_5,v_4w_6,v_4'w_7,v_5'w_8,v_5w_9,v_6w_{10}\}$, and
    \item $M_2=\{v_1'w_1,v_1w_2,v_2u_2,v_2'w_3,v_3w_5,v_4'w_6,v_4w_7,v_5'w_8,v_5w_9,v_6u_6,v_7w_{10}\}$.
\end{itemize}
We find $w(M_1)=w(M_2)=12$ and hence, both matchings should correspond to a collection of five paths in the original graph~$G$, as $G$ has 17 vertices. More precisely, $M_1$ corresponds to $$P_1=\{w_1v_1w_2,w_3v_2v_3w_5,w_4,w_6v_4w_7,w_8v_5w_9v_6w_{10}v_7\}\,,$$ while $M_2$ shows $P_2=\{w_1v_1w_2,w_3,w_4v_2v_3w_5,w_6v_4w_7,w_8v_5w_9v_6v_7w_{10}\}$. In \autoref{fig:ExampleE}, we mark $M_1$ with dashed lines and $M_2$ with dotted lines; if an edge appears in both matchings, the line is dashed-dotted. We follow a similar convention for marking the corresponding paths $P_1$ and $P_2$. We mark single-vertex paths by small self-loops, see $w_4$ in $P_1$ and $w_3$ in $P_2$.

We could also consider a different bit-vector, say, $\bar b'=(0,0,0,0,1,1)$.
Although we would now have one more path involving $C$-vertices, their overall number would not change. As the reader can verify, a possible solution obtained by our matching approach is now
$$P_1'=\{w_1v_1w_2,w_3v_2w_4,w_5v_3,w_6v_4w_7,w_8v_5w_9v_6w_{10}v_7\}\,.$$ Again, there would be a variation $P_2'$ similar as before.

Observe that the edge $v_4v_6$ does not play a big role in our construction in this example. We can take this to an extreme by considering the permutation $\pi=(7,1,2,6,3,5,4)$. Then, there are no two vertices $v_{\pi(i)}$ and $v_{\pi(i+1)}$ that are connected by an edge. Even worse, there is no vertex from~$I$ adjacent both to $v_{\pi(i)}$ and $v_{\pi(i+1)}$.
Therefore, the only reasonable bit-vector is the all-zero-vector~$\vec{0}$. This means we have seven paths containing $C$-vertices in any path partition of $G'(\pi,\vec{0})$. But in fact, our matching approach also yields a matching with 10 edges in $G'(\pi,\vec{0})$, with each edge having weight~$1$. To realize this, we could take a path partition that contains four paths of length one: $v_7w_{10}$, $v_6w_9$, $v_3w_5$, and $v_5w_8$, plus three paths of length two: $w_1v_1w_2$, $w_3v_2w_4$, and $w_6v_4w_7$. 

But the first choice obviously gave a better result for our aim of minimizing the number of paths in a path partition of~$G$.
\end{example}

\section{Duals and Distance to Triviality}\label{sec:duals}

Given a typical graph problem that is (as a standard) parameterized by solution size~$k$, it takes as input a graph $G$ of order~$n$ and~$k$, then its \emph{dual parameter} is $k_d=n-k$. This applies in particular to our problems \textsc{UPP}, \textsc{UIPP} and \textsc{USPP}. As these problems always have as a trivial solution the number~$n$ of vertices (i.e., $n$ trivial paths), we can also interpret this dual parameterization as a parameterization led by the idea of \emph{distance from triviality}; see~\cite{GuoHufNie2004}.
Moreover, all our problems can be algorithmically solved for each connected component separately, so that we can assume, without loss of generality, that we are dealing with connected graphs. Namely, if $(G,k)$ with $G=(V,E)$ is a graph and $C\subsetneq V$ describes a connected component, then we can solve $(G[C],k')$ and $(G-C,k-k')$ independently for all $1\leq k'<k$.
We now prove that our problems, with dual parameterizations, are in \FPT by providing a kernelization algorithm. The following combinatorial claims are crucial.

\begin{lemma}  \label{obs:match}
Let $G$ be an undirected graph of order~$n$.
    If $G$ has a matching that covers $2k$ vertices, then $(G,n-k)$ is a \yes-instance of \textsc{UPP}, \textsc{UIPP} and \textsc{USPP}. 
\end{lemma}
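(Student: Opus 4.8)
The plan is to exhibit a single explicit path partition of $G$ into exactly $n-k$ paths, all of which are simultaneously shortest paths (hence induced paths, hence unrestricted paths); this one construction then settles all three variants \textsc{UPP}, \textsc{UIPP} and \textsc{USPP} at once.

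First I would fix a matching $M=\{e_1,\dots,e_k\}$ covering exactly $2k$ vertices, so that $M$ consists of $k$ pairwise vertex-disjoint edges. Each edge $e_i=\{u_i,w_i\}$ is a path on two vertices, i.e., a path of length one, and these $k$ paths cover precisely the $2k$ matched vertices. The remaining $n-2k$ vertices, all unmatched and pairwise distinct, I would each treat as a single-vertex path (a path of length zero).

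Next comes the bookkeeping. The proposed partition consists of the $k$ two-vertex paths together with the $n-2k$ single-vertex paths, giving $k+(n-2k)=n-k$ paths in total. By construction they are pairwise vertex-disjoint (the matching edges are disjoint, and the leftover vertices are distinct and unmatched), and together they cover every vertex, so they span $G$. Hence $(G,n-k)$ is already a \yes-instance of \textsc{UPP}. To promote this same partition to the restricted variants, I would note that a single vertex is trivially both an induced and a shortest path, while each edge $e_i$ is an induced path of length one; by \autoref{rem:combi}, every induced path of length at most two is also a shortest path, so each $e_i$ is a shortest path as well. Thus all $n-k$ paths are shortest paths, and the partition is a valid solution for \textsc{USPP}, \textsc{UIPP} and \textsc{UPP} simultaneously.

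I do not expect a genuine obstacle here: the statement is essentially the observation that a matching of size $k$ ``saves'' $k$ paths relative to the all-singletons partition. The only points requiring care are the counting ($2k$ matched vertices correspond to $k$ matching edges, leaving $n-2k$ singletons) and the explicit appeal to \autoref{rem:combi} so that edges and isolated vertices qualify as shortest paths, which is exactly what lets one argument cover all three problems.
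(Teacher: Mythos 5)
Your construction is exactly the one the paper uses: take each matching edge as a two-vertex path, make every unmatched vertex a singleton path, and count $k+(n-2k)=n-k$ paths, each of which is trivially a shortest (hence induced) path. The proposal is correct and follows essentially the same approach as the paper, with the only (harmless) addition being the explicit appeal to \autoref{rem:combi}.
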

\begin{proof}
Let $M$ be a matching of $G$ such that $M$ covers $2k$ vertices. We construct a solution $\mathbb{P}$ of \textsc{PP}, \textsc{IPP} and \textsc{SPP} as follows. For every edge $(u,v) \in M$, include the path $\{u,v\}$ into $\mathbb{P}$. Put the rest of the vertices  not yet covered by~$M$ as single-vertex paths into the solution~$\mathbb{P}$ . We see that the path collection~$\mathbb{P}$ covers every vertex exactly once in~$G$ and has cardinality $n-k$.      
\end{proof}

\noindent
The preceding lemma has the following interesting consequence.
\begin{corollary}\label{cor:degree}  
If $G=(V,E)$ is a 
graph that possesses some $X \subseteq V$ with  $|X| \geq 2k$ such that $\deg(v) \geq  2k$ for every $v \in X$, then $G$ has a matching of size~$k$ and hence $(G,n-k)$ is a \yes-instance of \textsc{UPP}, \textsc{UIPP} and \textsc{USPP}. 
\end{corollary}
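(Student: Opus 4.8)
The plan is to reduce the statement to producing a matching of size~$k$ and then invoke \autoref{obs:match} directly. A matching of size~$k$ covers exactly $2k$ vertices, so once such a matching is in hand, \autoref{obs:match} immediately gives that $(G,n-k)$ is a \yes-instance of \textsc{UPP}, \textsc{UIPP} and \textsc{USPP}. Thus all the combinatorial content lies in showing that the degree hypothesis forces a matching of size~$k$, and the degree bound $2k$ is chosen precisely so that a naive greedy construction cannot get stuck too early.

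To build the matching, I would proceed greedily, adding one edge at a time, and argue that the process never stalls before reaching size~$k$. Suppose we have already constructed a matching $M$ of size $i-1$ for some $i\le k$; then $M$ covers $2(i-1)\le 2k-2$ vertices. First I would locate an uncovered vertex $v\in X$: since $|X|\ge 2k$ and at most $2k-2$ vertices are covered by~$M$, at least two vertices of~$X$ remain uncovered, so such a $v$ exists. Next I would find an uncovered neighbour of~$v$ with which to match it. Here the degree bound enters: $v$ has at least $2k$ neighbours, while at most $2k-2$ vertices are covered by~$M$, so at least two neighbours of~$v$ are uncovered. Choosing any such neighbour~$w$ (necessarily distinct from~$v$, as $G$ is simple and hence loopless), the edge $vw$ can be added to~$M$, raising its size to~$i$.

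Iterating this extension step from $i=1$ up to $i=k$ produces a matching of size exactly~$k$, covering $2k$ vertices, after which \autoref{obs:match} closes the argument and yields the claimed \yes-instances.

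I expect no serious obstacle here; the only point requiring care is the bookkeeping of covered vertices, namely verifying at each step that the number of covered vertices (at most $2k-2$ just before adding the $k$-th edge) stays strictly below both $|X|$ and the minimum degree~$2k$, which is exactly what simultaneously guarantees an uncovered vertex in~$X$ and an uncovered neighbour of it. The slack of two in each count (rather than merely one) is comfortable, and it shows that the threshold $2k$ in the hypothesis is not tight by much.
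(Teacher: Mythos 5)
Your proof is correct and is exactly the routine greedy argument the paper leaves implicit (the corollary is stated without proof, as an immediate consequence of \autoref{obs:match}): the counting at each step --- at most $2k-2$ covered vertices versus $|X|\geq 2k$ and $\deg(v)\geq 2k$ --- is right, and the reduction to \autoref{obs:match} is the intended one.
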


This consequence, as well as the following combinatorial observation, has no direct bearing on our algorithmic result, but may be of independent interest.

\begin{lemma} \label{obs:diam-bound}
    If $G$ is a connected graph with $\diam(G) > k$, then $(G,n-k)$ is a \yes-instance of \textsc{UPP}, \textsc{UIPP} and \textsc{USPP}.   
\end{lemma}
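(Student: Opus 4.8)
The plan is to exhibit an explicit path partition directly. The key reformulation I would use is that a partition of the $n$-vertex graph into $p$ paths uses exactly $n-p$ edges in total (a path on $t$ vertices contributes $t-1$ edges, and these sum over a partition to $n-p$). Hence the instance $(G,n-k)$, which asks for a partition into at most $n-k$ paths of the required type, is a \yes-instance precisely when $V$ can be covered by vertex-disjoint paths that together use at least $k$ edges. Viewed this way, a single sufficiently long path already does the job.

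First I would invoke the hypothesis $\diam(G)>k$: by the definition of the diameter as the greatest length of a shortest path (given in \autoref{def}), there is a shortest path $P$ of length at least $k+1$, i.e.\ on at least $k+2$ vertices and using at least $k+1$ edges. Crucially, $P$ is a \emph{shortest} path, so by \autoref{rem:combi}(1) it is also an induced path; thus the very same $P$ is simultaneously a valid solution path for \textsc{USPP}, \textsc{UIPP} and \textsc{UPP}.

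Next I would complete the partition by putting every vertex not lying on $P$ into its own single-vertex path; a single vertex is trivially a shortest path and an induced path, and such trivial paths are disjoint from $P$ and from each other. This yields a vertex-disjoint collection spanning $G$ that is valid for all three problems at once. It remains only to count the paths: if $P$ has $\ell\geq k+1$ edges, then it covers $\ell+1$ vertices, so the total number of paths is $1+(n-(\ell+1))=n-\ell\leq n-(k+1)\leq n-k$. Hence $(G,n-k)$ is a \yes-instance of all three variants.

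I do not expect a genuine obstacle here; the content is really the observation that the dual parameter measures ``edges saved'' and that one long shortest path saves at least $k$. The only points demanding care are the arithmetic of the path count and the remark that the chosen path must be valid for all three variants simultaneously — which is exactly why taking a \emph{shortest} path (rather than just any long path) is the right move, since it is induced and shortest and unrestricted all at once.
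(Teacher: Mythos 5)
Your proof is correct, and it matches the natural argument the paper intends: the paper states this lemma without giving an explicit proof, and your construction (one shortest path on at least $k+2$ vertices, which is simultaneously a shortest, induced, and unrestricted path, completed by singleton paths, giving $n-\ell\leq n-k$ paths in total) is exactly the analogue of the proof the paper does give for the preceding matching-based lemma. The count and the observation that a shortest path is valid for all three variants at once are both handled correctly.
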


More interestingly, \autoref{obs:match} is also valid for directed graphs in the following form as we can observe the directions of the matching edges as prescribed by the given directed graph.
\begin{lemma}  \label{obs:match-dir}
Let $G$ be a directed graph of order~$n$. 
If the underlying undirected graph $U(G)$ has a matching that covers $2k$ vertices, then $(G,n-k)$ is a \yes-instance of \textsc{DPP}, \textsc{DIPP} and \textsc{DSPP}. 
\end{lemma}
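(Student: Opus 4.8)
The plan is to mirror the undirected argument of \autoref{obs:match}, but to track the arc directions supplied by~$G$ so that each matching edge yields a genuine \emph{directed} path of length one. First I would fix a matching $M$ in the underlying undirected graph $U(G)$ that covers $2k$ vertices. For every matched pair $\{u,v\}\in M$, the edge $\{u,v\}$ of $U(G)$ came from at least one arc of $G$; that is, either $(u,v)\in E$ or $(v,u)\in E$ (possibly both). In either case the two vertices form a directed path on two vertices in $G$: if $(u,v)\in E$ we take the path $u\to v$, and otherwise we take $v\to u$. Since such a two-vertex directed path is trivially both an induced path and a shortest path, the very same collection works simultaneously for \textsc{DPP}, \textsc{DIPP} and \textsc{DSPP}.

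Next I would assemble the full partition $\mathbb{P}$. It consists of the $k$ two-vertex directed paths obtained above, together with one single-vertex path for each of the $n-2k$ vertices not covered by $M$. A single vertex is vacuously a directed path, a directed induced path, and a directed shortest path, so every part of $\mathbb{P}$ is admissible for all three problems. These paths are pairwise vertex-disjoint because $M$ is a matching and the leftover vertices are distinct, and together they span $V$ since every vertex is either matched or placed in its own singleton.

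Finally I would count: $\mathbb{P}$ contains $k$ paths coming from the matching edges and $n-2k$ singleton paths, for a total of $k+(n-2k)=n-k$ paths. Hence $(G,n-k)$ admits a partition into at most $n-k$ directed (shortest/induced) paths, so it is a \yes-instance of \textsc{DPP}, \textsc{DIPP} and \textsc{DSPP}, as claimed.

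The only genuinely directed subtlety — and thus the one step meriting care rather than the routine counting — is the observation that each undirected matching edge orients into a valid directed path; but this is immediate because an undirected edge of $U(G)$ by definition arises from an arc of $G$ in at least one of the two directions, and a single arc is itself a shortest and induced directed path. I therefore expect no real obstacle: the proof is essentially the undirected one of \autoref{obs:match} with this one-line orientation remark inserted, which is exactly why the excerpt states the directed version as an immediate analogue.
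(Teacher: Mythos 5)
Your proof is correct and follows exactly the route the paper intends: the paper proves the undirected version (\autoref{obs:match}) by turning each matching edge into a two-vertex path plus singletons for the uncovered vertices, and for the directed version it merely remarks that one ``observes the directions of the matching edges as prescribed by the given directed graph,'' which is precisely your orientation step. Your write-up just makes that one-line remark and the $k+(n-2k)=n-k$ count explicit.
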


Our combinatorial thoughts, along with \autoref{cor:vc-param} and \autoref{thm:vc-param}, allow us to show the following algorithmic result, the main result of this section.

\begin{theorem} \label{thm:dual-param} \textsc{UPP}, \textsc{UIPP} and \textsc{USPP}, as well as \textsc{DPP}, \textsc{DIPP} and \textsc{DSPP}, can be solved in \FPT time with respect to the dual parameterization.
\end{theorem}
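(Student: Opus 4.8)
The plan is to set up a win/win argument driven by the size of a maximum matching, exploiting the classical fact that a small matching forces a small vertex cover, which then lets us fall back on the vertex-cover parameterizations already established. Write $k_d=n-k$ for the dual parameter and let $\nu$ denote the matching number of $G$ (of the underlying undirected graph $U(G)$ in the directed case). First I would compute a maximum matching $M$ in polynomial time; the whole argument then reduces to comparing $|M|$ with $k_d$.

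If $|M|\geq k_d$, then $M$ covers at least $2k_d$ vertices, and \autoref{obs:match} (respectively \autoref{obs:match-dir} for the directed variants \textsc{DIPP} and \textsc{DSPP}) immediately certifies that $(G,k)$ is a \yes-instance: pairing the matched vertices into two-vertex paths and leaving the remaining vertices as singletons yields a (shortest/induced) path partition with exactly $n-|M|\leq n-k_d=k$ paths, and each such path, having length at most one, is trivially both shortest and induced, so it is valid for all variants. Hence in this branch we answer \yes outright. Otherwise $|M|<k_d$, and here the key observation is that the endpoints of a maximum matching form a vertex cover, so $\vc(G)\leq 2|M|<2k_d$ (and likewise $\vc(U(G))<2k_d$ in the directed setting). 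Thus the vertex cover number is bounded by a function of $k_d$, and we may invoke the vertex-cover \FPT algorithms: \autoref{thm:vc-param} for \textsc{UPP}, \textsc{UIPP}, \textsc{USPP}, and \autoref{cor:dspp+dipp-vc} for \textsc{DIPP} and \textsc{DSPP}.

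Assembling the running time, the maximum-matching computation is polynomial; the large-matching branch terminates in polynomial time; and the small-matching branch runs in time $f(\vc)\cdot\mathrm{poly}(n)\leq f(2k_d)\cdot\mathrm{poly}(n)$, which is \FPT in $k_d$. Note that the connected-components reduction mentioned just before the statement is not actually needed here, since both the matching lemma and the vertex-cover bound apply to arbitrary graphs.

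The argument is short precisely because the heavy lifting was done earlier (the vertex-cover kernels and the matching lemmas); the one genuine point to get right is the direction of the bound $\nu<k_d \Rightarrow \vc<2k_d$. I expect the main (minor) obstacle to be bookkeeping in the directed case: one must ensure that the matching used in \autoref{obs:match-dir} and the vertex cover used in \autoref{cor:dspp+dipp-vc} are both taken in the \emph{underlying undirected} graph $U(G)$, so that $\vc(U(G))\leq 2\nu(U(G))<2k_d$ lines up with the parameter for which the directed \FPT results are stated. There is no analogous directed vertex-cover result for \textsc{DPP} in the excerpt (its status is left open), which is exactly why \textsc{DPP} is excluded from the theorem.
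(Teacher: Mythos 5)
Your proposal is correct and follows essentially the same win/win strategy as the paper's own proof: compute a maximum matching, answer \yes{} via \autoref{obs:match} (resp.\ \autoref{obs:match-dir}) when it is large, and otherwise bound the vertex cover number by twice the dual parameter and invoke \autoref{thm:vc-param} resp.\ \autoref{cor:dspp+dipp-vc}. Your side remarks (that the connected-components reduction is dispensable, and that the directed case must consistently work with the underlying undirected graph) are accurate but do not change the argument.
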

\begin{proof}
We first handle the undirected graph case. Let $G=(V,E)$ and $k$ together be an instance of \textsc{UPP}, \textsc{UIPP}, or \textsc{USPP}, where $|V|=n$.
As said above, we can assume that $G$ is connected. 
Next, compute (in polynomial time) a maximum matching~$M$ in~$G$. If $|M|>k$, then \autoref{obs:match} justifies why $(G,n-k)$ is a \yes-instance. Hence, we can assume that any matching of $G$ has size at most~$k$. 
This implies (as matchings provide factor-2 approximations for \textsc{Minimum Vertex Cover}) that the vertex cover number $\vc(G)$ is at most $2k$. This allows us to first compute $\vc(G)$ in \FPT time; see~\cite{CheKanXia2010}.
Now, \autoref{cor:vc-param} and \autoref{thm:vc-param} show the claim in the case of \textsc{UIPP} and \textsc{USPP}, as well as for \textsc{UPP}.

For the case of directed graphs, we can first argue concerning the underlying undirected graph as in \autoref{obs:match-dir}. Then, we also make use of the part of \autoref{thm:vc-param} for directed graphs.
\end{proof}

\section{Covering and Edge-Disjoint Variants}\label{sec:ED}

As we mentioned in the introduction, the variants \textsc{PC}, \textsc{SPC} and \textsc{IPC} of the problems, where the paths cover the vertices but do not need to form a partition, are also studied. Many of our results do apply to these problems as well.

Alternatively, in some applications, paths may be required to be edge-disjoint but not necessarily vertex-disjoint, see for example~\cite{Slivkins2010}, where the problem \textsc{Edge-Disjoint Paths} is studied. Thus, one may define the problems \textsc{Edge-Disjoint Path Cover} (\textsc{ED-PC}), \textsc{Edge-Disjoint Shortest Path Cover} (\textsc{ED-SPC}) and \textsc{Edge-Disjoint Induced Path Cover} (\textsc{ED-IPC}) (as before we include the prefixes \textsc{U}, \textsc{D}, or \textsc{DAG} in the problem name abbreviations if we want to emphasize that the inputs are undirected, directed, or directed acyclic). These can be thought of as intermediate problems between the partition and the covering variants, indeed a (unrestricted, induced or shortest) path partition is an (unrestricted, induced or shortest) edge-disjoint path cover, which is also a (unrestricted, induced or shortest) path cover.

Some of our results apply to these three problems as well with slight modifications of the proofs. Hence, we only exhibit these modifications in the following.

\subsection{NP-completeness results for DAGs and bipartite undirected graphs}

\begin{theorem}
    \textsc{DAGSPC}, \textsc{DAGIPC}, \textsc{ED-DAGSPC} and \textsc{ED-DAGIPC} are \NP-hard even when the inputs are restricted to planar bipartite DAGs of maximum degree~4. 
\end{theorem}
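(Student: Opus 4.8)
The plan is to reuse \emph{verbatim} the planar bipartite DAG $G'$ of maximum degree $4$ constructed in the proof of \autoref{thm:DAGSPP-NPhard} from a \textsc{Planar $3$-DM} instance (recall $|V(G')| = 3p + 9q$), and to argue that all four covering variants ask, on this same $G'$, exactly the same question: does $G'$ admit a (shortest/induced, possibly edge-disjoint) cover with at most $p + 3q$ paths? Since any path partition is in particular an edge-disjoint path cover, which is in turn a path cover, the forward direction is immediate: if the \textsc{Planar $3$-DM} instance is a \yes-instance, then by \autoref{thm:DAGSPP-NPhard} there is a shortest/induced path \emph{partition} of $G'$ of size $p + 3q$, and this same family witnesses \yes for \textsc{DAGSPC}, \textsc{DAGIPC}, \textsc{ED-DAGSPC} and \textsc{ED-DAGIPC}. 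Because an edge-disjoint cover is also an (ordinary) cover, it then suffices to prove the single backward implication: if $G'$ has a shortest/induced path cover of size at most $p + 3q$, then the \textsc{Planar $3$-DM} instance is a \yes-instance.

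The structural fact driving the backward direction is that every directed path of $G'$ is confined to one gadget $H(v_i)$ together with its three element-vertices, the latter occurring only as terminal vertices (there are no arcs between distinct gadgets, and element-vertices are sinks). Thus any cover splits, per gadget, into a family of paths meeting its nine internal vertices, plus possibly some singleton paths covering element-vertices. I would then establish the crucial per-gadget bound: covering the nine internal vertices of $H(v_i)$ requires at least $3$ paths (the three sources $l_{11}^i, l_{21}^i, l_{31}^i$ must each begin a distinct path), and at least $4$ paths if, in addition, some element is \emph{claimed} (reached by a path ending in it). This last bound is exactly where the shortest/induced requirement is used and is the ``slight modification'' over the partition argument: claiming an element spends the outgoing arc of some $l_{\ast 2}^i$ on the element, so the displaced vertex ($l_{13}^i$ or $l_{23}^i$, resp.\ $l_{33}^i$) must be recovered by a separate path --- and the chord arcs $l_{11}^i \to l_{23}^i$ and $l_{21}^i \to l_{33}^i$ are precisely what prevent a single shortest/induced path from mopping it up, since the natural four-vertex mop-up path acquires a chord and is hence neither shortest nor induced. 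A short case distinction over which of $l_{12}^i, l_{22}^i, l_{32}^i$ routes to the claimed element completes this bound.

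With these bounds in hand I would finish by counting. Let $C$ be the set of gadgets claiming at least one element and $s$ the number of element-vertices covered by singleton paths; then the cover has at least $\sum_i m_i + s \ge 3q + |C| + s$ paths, where $m_i \ge 3$ is the number of paths meeting gadget $i$, with $m_i \ge 4$ for $i \in C$. Since a gadget can claim only its own three elements, at most $3|C|$ of the $3p$ element-vertices are claimed, so $s \ge 3p - 3|C|$. Hence a cover of size at most $p + 3q$ forces $3q + |C| + \max(0,\, 3p - 3|C|) \le p + 3q$, which holds only when $|C| = p$ and $s = 0$; then the $p$ claiming gadgets jointly claim all $3p$ elements, each claiming exactly its own three, so the corresponding $p$ triples are pairwise disjoint and cover $X \cup Y \cup Z$, i.e.\ they form a \textsc{Planar $3$-DM} solution. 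The edge-disjoint versions need no extra work, as an edge-disjoint cover is a cover; and planarity, bipartiteness, acyclicity, and maximum degree $4$ are inherited unchanged from \autoref{claim:DAG-reduction-props}.

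I expect the main obstacle to be establishing the per-gadget lower bound of $4$ for a claiming gadget: the tempting three-path coverings all turn out to route through a four-vertex path carrying one of the two chord arcs $l_{11}^i \to l_{23}^i$ or $l_{21}^i \to l_{33}^i$, so the crux is verifying, by the case distinction above, that no shortest/induced three-path covering of a gadget's internal vertices can simultaneously claim an element.
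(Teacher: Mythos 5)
Your proposal is correct, and it reuses the paper's construction exactly as the paper does; the difference lies in how the backward direction is argued. The paper's own proof is very terse: it asserts that in any solution of size $p+3q$ every shortest/induced path must contain exactly three vertices, so the cover is in fact a vertex partition, and then the partition argument of \autoref{thm:DAGSPP-NPhard} takes over. Your route never establishes disjointness: you decompose the cover per gadget (using that element-vertices are sinks and gadgets communicate only through them, so each path meets the internal vertices of at most one gadget), prove the two local lower bounds ($3$ paths per gadget via the three sources $l_{11}^i,l_{21}^i,l_{31}^i$, and $4$ paths for a gadget that claims an element), and close with a global count $\sum_i m_i + s \le p+3q$ forcing $|C|=p$ and $s=0$. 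I checked the crux you flagged: if only the three source-starting paths meet a gadget, the only way they cover all nine internal vertices is the all-internal triple $l_{11}l_{12}l_{13}$, $l_{21}l_{22}l_{23}$, $l_{31}l_{32}l_{33}$ (the chords $l_{11}^i\to l_{23}^i$ and $l_{21}^i\to l_{33}^i$ kill the four-vertex alternatives as shortest/induced paths), and such a triple claims no element — so a claiming gadget indeed needs a fourth path. What your approach buys is a fully self-contained argument that handles overlapping paths head-on, whereas the paper's one-line ``each path has exactly three vertices'' would itself need an averaging-plus-case analysis adapted to covers (where a four-vertex path can coexist with overlapping three-vertex paths) before the partition machinery applies; what the paper's approach buys is brevity and direct reuse of \autoref{claim:DAG-reduction-mainclaim}. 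Both are sound, and your treatment of the forward direction and of the edge-disjoint variants (sandwiched between partition and cover) matches the paper's.
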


\begin{proof}
We follow the same construction as in \autoref{thm:DAGSPP-NPhard}.  Observe that, by the same argument as in the proof of \autoref{claim:DAG-reduction-mainclaim}, each shortest or induced path in any solution of \textsc{(ED-)DAGSPC/(ED-)DAGIPC} must contain exactly three vertices (and thus forms a vertex-partition of the graph), and each gadget $H(v_i)$ is partitioned into $P_3$-paths in one of the two ways as shown in \autoref{twopart}. The rest of the argumentation is similar to \autoref{thm:DAGSPP-NPhard}. 
\end{proof} 

\begin{theorem}
    \textsc{USPC} and \textsc{ED-USPC} are \NP-hard, even for bipartite 5-degenerate graphs of diameter~4.
\end{theorem}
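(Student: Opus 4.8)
The plan is to reuse verbatim the graph $G'$ built in the proof of \autoref{thm:SPP-NP-hardness} from a \textsc{$4$-USPP} instance $G=(V,E)$ with $|V|=4k$, and to show that $G'$ admits a shortest-path cover (respectively, an edge-disjoint shortest-path cover) with $k+2$ paths if and only if $G$ is a \yes-instance of \textsc{$4$-USPP}. Since $G'$ is bipartite, $5$-degenerate and of diameter~$4$ by \autoref{obs:bipartite_degenerate} and \autoref{obs:diameter}, this yields the stated hardness on the desired graph class. The forward direction is immediate: the shortest-path \emph{partition} of size $k+2$ produced in \autoref{thm:SPP-NP-hardness} is, in particular, an edge-disjoint shortest-path cover, and any edge-disjoint cover is a cover; so a \textsc{$4$-USPP} solution gives a solution of size $k+2$ for both covering variants.

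The crux is the backward direction, where I would argue that any shortest-path cover (or edge-disjoint such cover) of $G'$ using only $k+2$ paths is forced to be a genuine partition, after which the analysis of \autoref{thm:SPP-NP-hardness} applies unchanged. The key observation, read off from the list in \autoref{obs:diameter}, is that the only shortest paths of $G'$ on five vertices are the six exceptional ones, and each of these consists entirely of the ten added special vertices $x_1,\dots,x_5,y_1,\dots,y_5$ --- none of them touches $V$. Consequently, the $4k$ vertices of $V$ can be covered only by paths on at most four vertices.

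Now let $n_j$ be the number of solution paths on exactly $j$ vertices, $j\in\{1,\dots,5\}$, so that $\sum_j n_j=k+2$. Covering $V$ by the non-$5$-paths forces $4(k+2-n_5)\ge 4k$, hence $n_5\le 2$. On the other hand, writing $W:=4n_1+3n_2+2n_3+n_4=5(k+2)-\sum_i|P_i|$ and using that a cover satisfies $\sum_i|P_i|\ge |V'|=4k+10$, we obtain $W\le 5k+10-(4k+10)=k$; since $n_1+n_2+n_3+n_4\le W\le k$, this gives $n_5\ge (k+2)-k=2$. Thus $n_5=2$, and then $n_1+n_2+n_3+n_4=k$ together with $W\le k$ forces $3n_1+2n_2+n_3\le 0$, i.e.\ $n_1=n_2=n_3=0$ and $n_4=k$. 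The total number of covered vertex-slots is therefore $2\cdot 5+k\cdot 4=4k+10=|V'|$, so the cover has no repeated vertex and is a genuine vertex-partition into two $5$-vertex and $k$ four-vertex shortest paths.

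From here the reasoning of \autoref{thm:SPP-NP-hardness} takes over: by \autoref{obs:two_paths_in_partition} the two $5$-vertex paths must be $x_1x_2x_3x_4x_5$ and $y_1y_2y_3y_4y_5$, so the remaining $k$ paths use only vertices of $V$ and hence have all four vertices in $V$; by \autoref{obs:dist} (together with \autoref{rem:combi}) each such four-vertex shortest path of $G'$ is a shortest path of length three in $G$, and these $k$ paths partition $V$, giving a \textsc{$4$-USPP} solution. The same chain of implications handles \textsc{ED-USPC}, since an edge-disjoint cover of size $k+2$ is in particular a cover of size $k+2$ and is thus forced to be a partition as well. I expect the main obstacle to be the clean verification that the six exceptional $5$-vertex paths avoid $V$ entirely --- this is precisely what collapses the covering relaxation back to a partition --- together with checking that the counting inequalities leave no slack; everything else is inherited from \autoref{thm:SPP-NP-hardness}.
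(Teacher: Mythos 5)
Your proposal is correct and follows essentially the same route as the paper: reuse the construction of \autoref{thm:SPP-NP-hardness} verbatim and observe that its structural claims still apply. The paper merely asserts that ``the rest of the proof is similar,'' whereas your counting argument ($n_5=2$, $n_4=k$, $n_1=n_2=n_3=0$, total slots $=|V'|$) makes explicit the key step --- that any (edge-disjoint) cover with $k+2$ shortest paths is forced to be a partition --- and does so correctly, resting on the valid observation that the six exceptional five-vertex shortest paths avoid $V$ entirely.
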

\begin{proof}
We follow the same construction as \autoref{thm:SPP-NP-hardness} (also see \autoref{bipartitefig}). We can observe that \autoref{obs:diameter}, \autoref{obs:two_paths_in_partition} and \autoref{obs:bipartite_degenerate} are true due to the construction. The rest of the proof is similar to \autoref{thm:SPP-NP-hardness}.


\end{proof}

\subsection{W-hardness for DAGs}

Our W-hardness result is only adaptable (as far as we can see) for the edge-disjoint path cover variants: for the unrestricted path cover variants, crucial elements of the proof do not work, as the same edges can be used by multiple solution paths. Also, recall that \textsc{DAGPC} is solvable in polynomial time~\cite{fulkerson1956note}.

The only problem with the construction of \autoref{thm:w-hard-dag} is that now paths can start from $s_{i,j}$, with $i,j \in [k]$, and go through to $a_{l}$,  $i \leq l \leq k $, even if $a_{l}$ belongs to a different path. To prevent this, we modify $G^{i,u}$, where $0< u \leq n$ and $i \in [k]$, in the construction as shown in \autoref{fig:UPD-I-SPP-W-gadget}. The rest of the proof is similar to \autoref{thm:w-hard-dag}. This construction is an adaptations of \cite{Slivkins2010}, where it was proved that \textsc{ED-DP} is \W{1}-hard when parameterized by solution size.

\begin{theorem}
     \textsc{ED-DAGSPC} and \textsc{ED-DAGIPC} (parameterized by solution size) are \W{1}-hard.
\end{theorem}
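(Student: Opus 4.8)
The plan is to reuse the \W{1}-hardness construction from \autoref{thm:w-hard-dag} essentially verbatim, since that reduction already produces a DAG~$G'$ in which the target number~$k'$ of paths forces a very rigid structure. The key observation is that the argument in \autoref{thm:w-hard-dag} never genuinely exploited vertex-disjointness as opposed to edge-disjointness: all the counting claims (\autoref{obs:in_out_deg}, \autoref{obs:top_row}, \autoref{obs:siti}, \autoref{obs:sijtij}) are driven by the number of in-degree-zero and out-degree-zero vertices, each of which must be the start (respectively end) of \emph{some} path in any cover. First I would state that we take the identical construction, producing $G'$ and $k'=\frac{k(k-1)}{2}+3k$ from a \textsc{Clique} instance $(G,k)$, and then verify that an edge-disjoint (shortest/induced) path \emph{cover} of size $k'$ must in fact be a vertex-partition.

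The crux is therefore the following lifting lemma: \emph{any} edge-disjoint shortest/induced path cover of $G'$ with at most $k'$ paths is automatically a vertex-disjoint path partition. To see this, I would count covering obligations. As in \autoref{obs:in_out_deg}, $G'$ has exactly $k'-k$ sources and $k'-k$ sinks; each source must begin some cover path and each sink must end some cover path, and distinct sources force distinct starting paths (and likewise for sinks). Running the same reasoning as \autoref{obs:top_row} — where the $k+2$ mutually ``parallel'' vertices $a_j^{i,n+1}$ can each only be reached along separate paths — shows that a cover of size $k'$ must allocate one additional start per row, exhausting the budget of $k'$ paths exactly. Since the total vertex count $|V(G')|$ equals the sum of the path lengths forced by this tight budget, there is no slack: no vertex can be covered twice without overshooting either the path count or the length accounting. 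Hence the cover is a genuine partition, and Claims~\ref{obs:connect} and~\ref{claim:W-1-main-claim} apply unchanged to recover (or build) a $k$-clique. The edge-disjointness hypothesis is used only to guarantee that the forced structural paths do not illegally share arcs, which holds because each constructed path in the forward direction already uses pairwise distinct arcs.

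I expect the main obstacle to be making rigorous the claim that \emph{covering} with the tight budget $k'$ leaves no room for repeated vertices. In the vertex-partition setting this is immediate from $|V(G')| = \sum (d(s_i,t_i)+1)$; for the cover version one must argue that if some vertex were covered twice, then either a source/sink obligation is violated or the total number of paths strictly exceeds $k'$. The cleanest way to handle this is to observe that the number of arcs that must be covered, combined with the fixed endpoints established by the claims carried over from \autoref{thm:w-hard-dag}, pins down the length of every path, so that the paths tile $V(G')$ exactly. I would phrase this as: ``Following the same counting as in the proof of \autoref{thm:w-hard-dag}, a solution of size $k'$ fixes the start and end vertex of every path, and the lengths of these paths sum to $|V(G')|$; hence every edge-disjoint cover of size $k'$ is in fact a vertex partition, and the argument of \autoref{claim:W-1-main-claim} goes through verbatim.'' The sketch remark already in the excerpt — that the result is adaptable only for the edge-disjoint variants — aligns with this, since a mere (non-edge-disjoint) shortest path cover could reuse arcs and thereby break the rigid routing that certifies the clique.

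\begin{proof}
We use the very same construction as in the proof of \autoref{thm:w-hard-dag}, building from a \textsc{Clique} instance $(G,k)$ a DAG~$G'$ together with the bound $k'=\frac{k\cdot(k-1)}{2}+3k$. We focus on \textsc{ED-DAGSPC}; the argument for \textsc{ED-DAGIPC} is identical, as every path produced in the construction is simultaneously shortest and induced.

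The forward direction is unchanged: from a $k$-clique $v_1,\dots,v_k$ of $G$ we build the partition described in \autoref{claim:W-1-main-claim}, which is in particular an edge-disjoint shortest path cover of size~$k'$. For the converse, suppose $\mathbb{P}$ is an edge-disjoint shortest path cover of $G'$ with at most $k'$ paths. By \autoref{obs:in_out_deg}, $G'$ has exactly $k'-k$ sources and $k'-k$ sinks, and each source (resp.\ sink) must be the first (resp.\ last) vertex of some path of~$\mathbb{P}$; distinct sources enforce distinct paths, and likewise for sinks. The counting argument of \autoref{obs:top_row} then forces, for each row~$i$, one further path starting in $T^{i,u}$ and covering $a_1^{i,n+1}$, so that $\mathbb{P}$ uses exactly $k'$ paths with all start and end vertices fixed precisely as in \autoref{obs:siti}, \autoref{obs:sijtij}, and the subsequent claims. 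Consequently the length of each path is determined, and these lengths sum to $|V(G')|$; hence no vertex can be covered more than once, i.e., $\mathbb{P}$ is in fact a vertex-disjoint shortest path partition of size~$k'$. Now \autoref{obs:connect} and \autoref{claim:W-1-main-claim} apply verbatim: the gadgets skipped by the selector paths $s_i\!\to\!t_i$ yield a set~$C$ of $k$ vertices that, thanks to the verifiers, is a clique in~$G$. As the construction is computable in polynomial time, \W{1}-hardness of \textsc{ED-DAGSPC} and \textsc{ED-DAGIPC} follows.
\end{proof}
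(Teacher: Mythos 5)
Your proposal uses the construction of \autoref{thm:w-hard-dag} \emph{verbatim}, whereas the paper explicitly replaces each gadget $G_{i,u}$ by the modified gadget of \autoref{fig:UPD-I-SPP-W-gadget}, which inserts extra ``bottleneck'' vertices $a'_r,b'_r$ on the top and bottom rails. This modification is not cosmetic, and the lifting lemma you rely on --- that any edge-disjoint shortest/induced path cover of the \emph{unmodified} $G'$ with $k'$ paths is automatically a vertex partition --- is false. The culprit is that edge-disjoint paths may still share vertices, and the original gadgets then admit a cheating cover: let the selector for row $i$ skip $G_{i,u}$ as before, but let the path that covers the top rail start already at the first top-level vertex of the \emph{skipped} gadget (the horizontal arcs there are unused by the selector, which left the top rail via the skipping arc at $G_{i,u-1}$), and let the path starting at $b_1^{i,0}$ continue through the bottom rail of $G_{i,u}$. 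These longer paths are still shortest paths between their (new) endpoints, they are edge-disjoint from everything else, and they absorb all vertices of the skipped gadget. The verifier from $s_{i,j}$ to $t_{i,j}$ is then free to descend through the wire of \emph{any} gadget pair $(G_{i,u'},G_{j,v'})$ with $u'v'\in E(G)$, sharing the vertices $a_j^{i,u'}$ and $b_j^{i,u'}$ with the rail-covering paths while using only the (so far unused) wire and terminal arcs. One obtains a valid edge-disjoint shortest path cover of size $k'$ essentially whenever $G$ has a single edge, so no clique is certified.

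Your counting argument does not exclude this because the endpoints of the rail-covering paths are \emph{not} pinned down by \autoref{obs:top_row} and its companions: those claims only locate the start of the top-rail path somewhere in $T^{i,u}$ and say nothing about where the $b_1^{i,0}$-path ends, so the path lengths are not determined and need not sum to $|V(G')|$. The paper's modified gadget repairs exactly this: each primed vertex $a'_r$ has a unique outgoing arc $(a'_r,a_r)$ (and symmetrically for $b'_r$), so by edge-disjointness only one path can pass through it; a rail-traversing path and a verifier can therefore no longer both use the same wire entrance, which forces every verifier into the gadget the selector actually skipped. To fix your write-up you would need to adopt the modified gadgets (or supply some equivalent mechanism) and then rerun the endpoint-fixing claims for the new construction; as it stands, the converse direction of your reduction does not hold.
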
 

\begin{figure}
\centering

  \tikzset{every picture/.style={line width=0.75pt}} 

\begin{tikzpicture}[
roundnode/.style={circle, draw=black,  thick, minimum size=5mm, scale=0.8},
squarednode/.style={rectangle, draw=red!60, fill=red!5, very thick, minimum size=5mm},
dot/.style={fill=black,circle,minimum size=1pt}
]

\def\x{0.6}
\def\y{-0.1}

\node[rectangle,draw,minimum width = 6cm, 
    minimum height = 4cm, rounded corners=5pt] (b) at (\x,\y) {};
    \node[roundnode] at (\x - 1.5cm ,\y + 0.5cm )     (a1){}  ;
    \node[roundnode] at (\x - 0.5 cm ,\y + 0.5cm )     (a2){}  ;
    \node[roundnode] at (\x - 1.5cm ,\y + 1.5cm )     (a11){}  ;
    \node[roundnode] at (\x - 0.5 cm ,\y + 1.5cm )     (a12){}  ;
    \node[roundnode] at (\x + 0.5cm ,\y + 1.5cm )     (a13){}  ;
    \node[roundnode] at (\x + 1.5cm ,\y + 1.5cm )     (a14){}  ;
     \node[roundnode] at (\x + 2.7cm ,\y + 1.5cm )     (a15){}  ;
    \node[roundnode] at (\x - 1.5cm ,\y - 1.5cm )     (b11){}  ;
    \node[roundnode] at (\x - 0.5 cm ,\y - 1.5cm )     (b12){}  ;
     \node[roundnode] at (\x + 0.5cm ,\y - 1.5cm )     (b13){}  ;
    \node[roundnode] at (\x + 1.5 cm ,\y - 1.5cm )     (b14){}  ;
    \node[roundnode] at (\x + 2.7 cm ,\y - 1.5cm )     (b15){}  ;
    \node[dot,scale=0.3] at (\x - 0.1cm ,\y + 0.5cm) (d1){};
    \node[dot,scale=0.3] at (\x - 0.0cm ,\y + 0.5cm) (d1){};
    \node[dot,scale=0.3] at (\x + 0.1cm ,\y + 0.5cm) (d1){};
 \node[roundnode] at (\x + 0.5cm ,\y + 0.5cm )     (ai1){}  ;
    \node[roundnode] at (\x + 1.5cm ,\y + 0.5cm )     (ai){}  ;
     \node[dot,scale=0.3] at (\x + 2cm ,\y + 0.5cm) (d1){};
    \node[dot,scale=0.3] at (\x + 2.1cm ,\y + 0.5cm) (d1){};
    \node[dot,scale=0.3] at (\x + 2.2cm ,\y + 0.5cm) (d1){};
    \node[roundnode] at (\x + 2.7cm ,\y + 0.5cm )     (ak){}  ;
    \node[roundnode] at (\x + 1.5cm ,\y + 2.5cm )     (si){}  ;
    \node[roundnode] at (\x + 2.7cm ,\y + 2.5cm )     (sk){}  ;
 \node[roundnode] at (\x - 1.5cm ,\y - 0.5cm )     (b1){}  ;
    \node[roundnode] at (\x - 1.5cm ,\y - 2.5cm )     (t1){}  ;
    \node[roundnode] at (\x - 0.5cm ,\y - 0.5cm  )     (b2){}  ;
    \node[roundnode] at (\x - 0.5cm ,\y - 2.5cm )     (t2){}  ;
    \node[dot,scale=0.3] at (\x - 0.1cm ,\y - 0.5cm) (d1){};
    \node[dot,scale=0.3] at (\x + 0cm ,\y - 0.5cm) (d1){};
    \node[dot,scale=0.3] at (\x + 0.1cm ,\y - 0.5cm) (d1){};
      \node[roundnode] at (\x + 0.5cm ,\y - 0.5cm )     (bi1){}  ;
      \node[roundnode] at (\x + 0.5cm ,\y - 2.5cm )     (ti1){}  ;
    \node[roundnode] at (\x + 1.5cm ,\y - 0.5cm )     (bi){}  ;
     \node[dot,scale=0.3] at (\x + 2.cm ,\y-0.5cm) (d1){};
    \node[dot,scale=0.3] at (\x + 2.1cm ,\y -0.5cm) (d1){};
    \node[dot,scale=0.3] at (\x + 2.2cm ,\y -0.5cm) (d1){};
    \node[roundnode] at (\x + 2.7cm ,\y - 0.5cm )     (bk){} ; 
    \node[] at (\x - 1.0cm , \y + 0.2cm) {$a_1$};
    \node[] at (\x - 1.1cm , \y + 1.2cm) {$a'_1$};
     \node[] at (\x - .15cm , \y + 0.2cm) {$a_2$};
     \node[] at (\x - .15cm , \y + 1.2cm) {$a'_2$};
     \node[] at (\x + 1cm , \y + 0.2cm) {$a_{i-1}$};
     \node[] at (\x+0.9cm  , \y + 2.5cm) {$s_{i,i+1}'$};
     \node[] at (\x + 2.05cm , \y + 0.2cm) {$a_{i+1}$};
     \node[] at (\x + 2.05cm , \y + 1.2cm) {$a'_{i+1}$};
     \node[] at (\x + 3cm , \y + 0.2cm) {$a_{k}$};
     \node[] at (\x + 3cm , \y + 1.2cm) {$a'_{k}$};    
     \node[] at (\x+3.2cm  , \y + 2.5cm) {$s_{i,k}'$};
     \node[] at (\x - 1.1cm , \y - 0.8cm) {$b_1$};
     \node[] at (\x - 1.1cm , \y - 1.8cm) {$b'_1$};
     \node[] at (\x - 2cm , \y - 2.5cm) {$t_{1,i}'$};
     \node[] at (\x - .15cm , \y - 0.8cm) {$b_2$};
     \node[] at (\x - .15cm , \y - 1.8cm) {$b'_2$};
     \node[] at (\x -1cm , \y - 2.5cm) {$t_{2,i}'$};
     \node[] at (\x + 1.0cm , \y - 0.5cm) {$b_{i-1}$};
     \node[] at (\x+1.2cm  , \y - 2.5cm) {$t_{i-1,i}'$};
     \node[] at (\x + 2.05cm , \y - 0.8cm) {$b_{i+1}$};
     \node[] at (\x + 2.05cm , \y - 1.8cm) {$b'_{i+1}$};
     \node[] at (\x + 3cm , \y - 0.8cm) {$b_{k}$};
     \node[] at (\x + 3cm , \y - 1.8cm) {$b'_{k}$};
     
    \draw[-{Stealth[scale=1]}] (a1) to (b1);
    \draw[-{Stealth[scale=1]}] (a2) to (b2);
    \draw[-{Stealth[scale=1]}] (ai1) to (a14);
    \draw[-{Stealth[scale=1]}] (ai) to (bi);
    \draw[-{Stealth[scale=1]}] (ai1) to (bi1);
    \draw[-{Stealth[scale=1]}] (b13) to (bi);
    \draw[-{Stealth[scale=1]}] (a11) to (a1);
    \draw[-{Stealth[scale=1]}] (a1) to (a12);
    \draw[-{Stealth[scale=1]}] (a12) to (a2);
    \draw[-{Stealth[scale=1]}] (a13) to (ai1);
    \draw[-{Stealth[scale=1]}] (a14) to (ai);
    \draw[-{Stealth[scale=1]}] (a15) to (ak);
    \draw[-{Stealth[scale=1]}] (si) to (a14);
    \draw[-{Stealth[scale=1]}] (sk) to (a15);
    \draw[-{Stealth[scale=1]}] (b11) to (t1);
    \draw[-{Stealth[scale=1]}] (b12) to (t2);
    \draw[-{Stealth[scale=1]}] (b13) to (ti1);
    \draw[-{Stealth[scale=1]}] (b1) to (b11);
    \draw[-{Stealth[scale=1]}] (b11) to (b2);
    \draw[-{Stealth[scale=1]}] (b2) to (b12);
    \draw[-{Stealth[scale=1]}] (bi1) to (b13);
    \draw[-{Stealth[scale=1]}] (bi) to (b14);
    \draw[-{Stealth[scale=1]}] (bk) to (b15);
    \draw[-{Stealth[scale=1]}] (ak) to (bk);
    \draw[-{Stealth[scale=1]}] (\x - 2.5cm, \y + 1.5cm) to (a11);
    \draw[-{Stealth[scale=1]}] (\x - 2.5cm, \y - 0.5cm) to (b1);
    \draw[-{Stealth[scale=1]}] (ak) to (\x + 3.8cm, \y + 1.5cm);
    \draw[-{Stealth[scale=1]}] (b15) to (\x + 3.8cm, \y - 0.5cm);

 \end{tikzpicture}
  \captionof{figure}{New gadget $G^{i,u}$, $u\in [n]$, $i \in [k]$.}
  \label{fig:UPD-I-SPP-W-gadget}
  \end{figure}
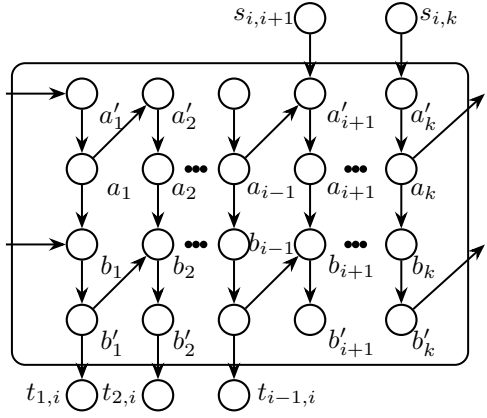

\subsection{\XP algorithms}

For our next result, we need the following theorem, indeed the techniques used in Section~\ref{xp-algo} are not applicable since they are based on using (vertex-) disjoint path problems.

\begin{theorem} [\cite{dumas2024graphs}] \label{k-short}
Let $G$ be an undirected graph whose vertex set can be covered by at most $k$ (not necessarily disjoint) shortest paths. Then the pathwidth of $G$ is in $O(k\cdot 3^{k})$.
\end{theorem}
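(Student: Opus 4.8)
The plan is to build a path decomposition of $G$ directly from the covering shortest paths $P_1,\dots,P_k$, using the geodesic structure to keep the bags small. First I would fix, for each $i\in[k]$, one endpoint $s_i$ of $P_i$ and record for \emph{every} vertex $v\in V(G)$ the coordinate $g_i(v):=d_G(s_i,v)$. Since graph distance to a fixed vertex changes by at most one along any edge, every edge $uv$ satisfies $|g_i(u)-g_i(v)|\le 1$ for all $i$; in other words the map $\phi\colon v\mapsto(g_1(v),\dots,g_k(v))\in\mathbb{Z}^k$ sends adjacent vertices to points at $\ell_\infty$-distance at most one. The crucial extra fact, coming from the paths being \emph{shortest}, is that if $v$ lies on $P_i$ then $g_i(v)$ is exactly the index (position) of $v$ along $P_i$; hence distinct vertices of a common path receive distinct values in that coordinate, and---because the $P_i$ cover $V(G)$---every vertex has at least one coordinate that is an ``honest'' position on some path.

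Next I would sweep a frontier across the coordinate space to produce the decomposition. I maintain a frontier vector $(f_1,\dots,f_k)$, where $f_i$ is a position along $P_i$, and I advance it through a single linear sequence so that each $f_i$ moves monotonically from the $s_i$-end to the other end. To the frontier position $(f_1,\dots,f_k)$ I associate the bag
\[
B(f_1,\dots,f_k)=\{\,v\in V(G): \text{for every }i\text{ with }v\in P_i,\ |g_i(v)-f_i|\le 1\,\}.
\]
I would then verify the three path-decomposition axioms: every vertex lies in some bag (take the frontier at $v$'s positions); every edge $uv$ lies in a common bag (using that $\phi$ moves by at most one per coordinate, so when the frontier is aligned with $u$ the vertex $v$ is still within the window); and the bags containing a fixed vertex form a contiguous interval of the sweep (guaranteed by advancing the frontier monotonically). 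The bound on bag size is where the $3^k$ appears: I classify the vertices of a bag by their \emph{signature} $\varepsilon\in\{-1,0,+1\}^k$, where $\varepsilon_i=g_i(v)-f_i$ whenever $v\in P_i$. For a fixed signature $\varepsilon$ and a fixed path index $j$, a vertex $v\in P_j$ of that signature has $g_j(v)=f_j+\varepsilon_j$ fixed, and since positions along $P_j$ are injective there is at most one such vertex; summing over the $k$ paths gives at most $k$ vertices per signature and hence $|B(f_1,\dots,f_k)|\le k\cdot 3^k$.

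The main obstacle I anticipate is the construction and validation of the \emph{single} linear sweep: I have $k$ frontier pointers that each want to advance, and I must interleave their advances into one total order so that (a) each vertex stays in a \emph{contiguous} block of bags and (b) no edge is ever ``skipped over''. Monotone interleaving is delicate precisely because distance-along-a-geodesic need not be monotone in a fixed external reference frame, so a vertex of $P_i$ may become relevant, drop out, and threaten to become relevant again; the proof must rule this out using that \emph{all} the $P_i$ are simultaneously shortest, so the per-coordinate windows cannot reopen. Making this synchronization rigorous---rather than the bag-size count, which is the easy part---is where I expect the real work to lie, and it is the step that genuinely leans on the combinatorial properties of shortest paths recorded in \autoref{rem:combi}.
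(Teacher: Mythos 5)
First, note that the paper does not prove this statement at all: it is imported from \cite{foucaud2022} (the work of Dumas, Foucaud, Perez and Todinca on graphs coverable by $k$ shortest paths), so your attempt has to stand entirely on its own. The easy parts of your plan are fine: $g_i$ is $1$-Lipschitz along edges, $g_i$ restricted to $V(P_i)$ is the position function because subpaths of shortest paths are shortest, and your signature count is valid --- though very loose: since every vertex lies on some $P_j$ and $g_j$ is injective on $V(P_j)$, each of your bags in fact contains at most $3$ vertices per path, hence at most $3k$ vertices in total, so if your construction worked it would prove pathwidth $\Oh(k)$, much stronger than the cited bound; that alone should make you suspicious. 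The genuine gap is exactly the step you defer to ``where the real work lies'': the existence of a single monotone interleaving of the $k$ frontiers that realizes every edge in some bag. This is not a technicality --- it is the entire content of the theorem, and the naive sweep fails. Concretely, let $G$ consist of two disjoint paths $P=p_0p_1\cdots p_{10}$ and $Q=q_0q_1\cdots q_{10}$ together with the two cross edges $p_0q_3$ and $p_3q_0$; one checks that $P$ and $Q$ are both shortest paths of $G$ and they cover $V(G)$. With $s_1=p_0$ and $s_2=q_0$, the edge $p_0q_3$ can only appear in a bag whose frontier satisfies $f_1\le 1$ and $f_2\ge 2$, while $p_3q_0$ needs $f_1\ge 2$ and $f_2\le 1$; no sweep in which both $f_1$ and $f_2$ advance monotonically visits both regions.

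In this instance, reversing the orientation of $Q$ repairs the sweep, but that only shows that the choice among the $2^k$ orientations is an essential, global ingredient that your proposal does not address. Worse, the compatibility between two shortest paths is only coarse: two cross edges $p_aq_b$ and $p_{a'}q_{b'}$ force $\bigl||a-a'|-|b-b'|\bigr|\le 2$, so even after orienting a pair consistently the induced correspondence between positions is monotone only up to an additive error of $2$, which already exceeds your window radius of $1$; with $k$ paths the $\binom{k}{2}$ pairwise orientation constraints can conflict and the errors can accumulate. So one needs either wider, $k$-dependent windows or a genuinely different bag definition --- this is presumably where the factor $3^k$ in the true bound actually comes from, rather than from a signature count over radius-$1$ windows. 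As written, your proposal restates the theorem (``a consistent monotone schedule exists'') rather than proving it; consult the construction in the cited source before trying to fill the hole.
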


The following corollary was already observed in~\cite{dumas2024graphs} for the problems \textsc{USPC} and \textsc{USPP}. The same argument shows the following.

\begin{corollary} \label{cor-sol}
    \textsc{ED-USPC} is in \XP when parameterized by solution size. 
\end{corollary}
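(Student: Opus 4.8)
The plan is to mirror the treewidth-based argument that \cite{foucaud2022} used for \textsc{USPC} and \textsc{USPP}, with \autoref{k-short} as the key structural ingredient. The first observation is that the three covering notions are nested: every edge-disjoint shortest path cover is in particular a shortest path cover (simply forget the edge-disjointness). Hence, if $(G,k)$ is a \yes-instance of \textsc{ED-USPC}, then $V(G)$ is covered by at most $k$ (not necessarily disjoint) shortest paths, so \autoref{k-short} applies and guarantees that the pathwidth, and therefore the treewidth, of $G$ is in $\Oh(k\cdot 3^k)$.

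First I would turn this into a decision procedure. Compute (or constant-factor approximate) a tree decomposition of $G$; if its width exceeds the bound $c\cdot k\cdot 3^k$ coming from \autoref{k-short}, then by the contrapositive of the previous paragraph $(G,k)$ must be a \no-instance and we reject immediately. Otherwise we are left with an instance whose treewidth is bounded by a function of the parameter $k$ alone, and it remains to decide \textsc{ED-USPC} on such bounded-treewidth inputs in \XP time.

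On the bounded-treewidth instance I would proceed exactly as for \textsc{USPC}: precompute the full distance matrix of $G$ in polynomial time, guess the two endpoints of each of the $k$ paths (there are $\Oh(n^{2k})$ such guesses, which is the source of the \XP running time), and for each guess test, via dynamic programming over the tree decomposition (equivalently, by Courcelle's theorem applied to a formula whose size depends only on $k$), whether there exist pairwise \emph{edge-disjoint} paths $P_1,\dots,P_k$ such that each $P_i$ joins its guessed endpoints $s_i,t_i$, has length exactly $d(s_i,t_i)$, and such that $\bigcup_i V(P_i)=V$. The only change relative to the partition and cover versions is that the feasibility predicate now quantifies over edge-disjoint, rather than vertex-disjoint, edge sets, which is itself expressible in monadic second-order logic of graphs; combined with the width bound $\Oh(k\cdot 3^k)$, each test runs in time $g(k)\cdot n$, so the overall running time is $n^{\Oh(k)}$.

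The step I expect to be the main obstacle is handling the ``shortest'' (geodesic) requirement, since distance is not itself definable in monadic second-order logic. This is exactly the point at which the \XP, rather than \FPT, behaviour enters: by fixing the endpoints $s_i,t_i$ through the $\Oh(n^{2k})$ guesses and reading off the target lengths $d(s_i,t_i)$ from the precomputed distance matrix, the constraint ``$P_i$ is a shortest $s_i$--$t_i$ path'' degenerates into ``$P_i$ is an $s_i$--$t_i$ path realising a fixed value'', which can be checked on the bounded-width decomposition. Care is needed to argue that this encoding, together with the edge-disjointness bookkeeping across the (at most $k$) paths, keeps the table sizes bounded by a function of the treewidth and $k$ only; this is where the proof of \cite{foucaud2022} for \textsc{USPC} transfers with only cosmetic modifications.
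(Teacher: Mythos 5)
Your proposal is correct and takes essentially the same route as the paper: both observe that an edge-disjoint shortest path cover is in particular a shortest path cover so that \autoref{k-short} bounds the pathwidth by a function of $k$, then guess the $\Oh(n^{2k})$ endpoint tuples and verify each guess via the counting version of Courcelle's theorem (equivalently, dynamic programming over the bounded-width decomposition), using the precomputed distances to turn the geodesic requirement into a cardinality constraint that is expressible in CMSOL.
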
 
\begin{proof}
Here we cannot use the technique of \autoref{lem:SPPinXP} based on \textsc{DP} and its variants. \autoref{cor-sol}, follows from the same proof used in~\cite{dumas2024graphs} for \textsc{USPC}: first bound the pathwidth of the graph using \autoref{k-short}. Then, as for \autoref{lem:SPPinXP}, guess the set $X$ of endpoints of the solution paths (enumerating all the solutions is done in \XP time). That way, one can compute the sum $s$ of the pairwise distances between the endpoints in $X$. Then, one can use the counting version of Courcelle's theorem by expressing the problem in Counting Monadic Second Order Logic (CMSOL) and solve it in \FPT time: one expects a partition whose total size is equal to $s$.
\end{proof}

\subsection{Neighborhood diversity and vertex cover parameterizations}

\begin{theorem}\label{ed-nd}
    \textsc{USPC}, \textsc{UIPC}, \textsc{ED-USPC} and \textsc{ED-UIPC} (as well as \textsc{DSPC}, \textsc{DIPC}, \textsc{ED-DSPC} and \textsc{ED-DIPC}) are in \FPT when parameterized by (directed) neighborhood diversity. 
\end{theorem}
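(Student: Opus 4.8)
The plan is to re-run the integer-programming engine behind \autoref{thm:USPP-ND} and \autoref{thm:dspp+dipp-dnd}, adapting only the constraints. First I would recall the two structural facts that make that engine work. By \autoref{nd:diam} (and its induced-path and directed analogues), every shortest or induced path visits at most $\nd(G)+1$ (respectively $\dnd(G)+1$) vertices, so by $(+)$ there are only $\Oh(2^{\nd(G)})$ (respectively $\Oh(2^{\dnd(G)})$) \emph{types} of such paths, a type being the vector $(p^1,\dots,p^d)$ recording how many vertices the path takes from each class $C_j$. Second --- and this is what I would make explicit here --- because adjacency in a graph of bounded neighborhood diversity is decided at the level of classes (any two classes are either completely joined or completely non-adjacent, and the directed classes are independent sets or bidirected cliques), \emph{any} choice of vertices respecting a given type yields a valid path of the required kind. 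Thus a type may be instantiated freely, and the relevant combinatorics is entirely captured by the type vectors.

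For the pure covering problems \textsc{USPC}, \textsc{UIPC} (and \textsc{DSPC}, \textsc{DIPC}) I would keep one variable $z_p$ per type $p$ and minimise $\sum_p z_p$, but replace the partition equalities of \autoref{thm:USPP-ND} by the covering inequalities
\[
\sum_{p}z_p\,p^j\ \ge\ |C_j|\qquad\text{for all }j\in[d].
\]
The forward direction is immediate. For the converse I would prove a short \emph{realisability lemma}: given nonnegative integers $z_p$ satisfying these inequalities, the $\sum_p z_p$ abstract paths can be instantiated so that every vertex is covered. Since each path uses at most two vertices of any one class, and distinct vertices only within a single path, assigning the at least $|C_j|$ available ``slots'' of class $C_j$ to the $|C_j|$ vertices of $C_j$ is a degree-constrained bipartite assignment that is feasible by Hall's theorem; interchangeability of same-type vertices guarantees that the resulting paths are still valid. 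Minimising $\sum_p z_p$ then yields exactly the minimum cover.

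For the edge-disjoint problems \textsc{ED-USPC}, \textsc{ED-UIPC} (and their directed versions) I would enrich the program with capacity constraints on edge (arc) types. For each unordered pair of classes $\{a,b\}$ let $e_p^{\{a,b\}}$ be the number of consecutive class-pairs $\{a,b\}$ appearing along a type-$p$ path, and let the capacity $\mathrm{cap}(\{a,b\})$ be the number of edges of that type in $G$, namely $|C_a|\,|C_b|$ for joined distinct classes and $\binom{|C_a|}{2}$ inside a clique class (with the obvious arc-type analogue in the directed case). I would then add $\sum_p z_p\,e_p^{\{a,b\}}\le \mathrm{cap}(\{a,b\})$ for every type, keeping the coverage inequalities and the objective $\sum_p z_p$. \textbf{The main obstacle is the corresponding realisability statement here}: the abstract edges of a single edge-type form a complete (bipartite) graph, but the vertex choices for different edge-types are coupled through the shared internal vertices of each path, so edge-disjoint instantiation is not an independent problem per edge-type. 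I expect to resolve this by an Eulerian/flow argument exploiting the high symmetry and edge-transitivity of complete and complete bipartite graphs --- routing the abstract edges class-pair by class-pair while maintaining a consistent partial vertex assignment --- and this verification is where essentially all the work lies.

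Finally, in every variant the number of variables stays $\Oh(2^{\nd(G)})$ (or $\Oh(2^{\dnd(G)})$), so once the equivalence of the instance with the integer program is established, solving it in \FPT\ time follows verbatim from the bounded-dimension integer programming result \cite[Theorem~6.5]{CygFKLMPPS2015}, exactly as in \autoref{thm:USPP-ND}. The directed statements are obtained by the same program using the $\dnd$-classes, whose independent-set/bidirected-clique structure together with the length bound $\dnd(G)$ on induced and shortest paths makes the type vectors and the arc-type capacities well defined, as in \autoref{thm:dspp+dipp-dnd}.
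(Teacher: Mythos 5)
Your proposal follows essentially the same route as the paper: the paper's proof of this theorem likewise re-uses the ILP of \autoref{thm:USPP-ND}, replacing the partition equalities by covering inequalities $\sum_{p}z_p\cdot p^j \geq |C_j|$ and, for the edge-disjoint variants, characterizing each path type additionally by its edge-type counts $t_{p,i,j}=|E_{i,j}\cap p|$ and adding the capacity constraints $\sum_{p}z_p\cdot t_{p,i,j}\leq |E_{i,j}|$, before appealing to the directed-neighborhood-diversity machinery of \autoref{thm:dspp+dipp-dnd} for the directed cases. The realisability of a feasible edge-disjoint ILP solution as actual edge-disjoint paths, which you flag as the main remaining obstacle, is not discussed in the paper at all --- it simply asserts that the modified ILP works --- so your explicit attention to that step is, if anything, more careful than the published argument.
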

\begin{proof}
This can be obtained with the same proof as the one of \autoref{thm:USPP-ND} by modifying the \textsc{ILP} as follows. Each path $P \in \cal P$ is characterized by a vector $(t_{P,1,1}, \ldots , t_{P,k,k})$ with $t_{P,i,j} = |E_{i,j} \cap P|$, where $i , j \leq k$. Here $E_{i,j}$ refers to the edges between the $i^{th}$ and $j^{th}$ equivalence class.

For the covering versions, we use:

$$\text{minimize} 
     \sum_{P\in\cal P}z_P
     $$     $$\text{subject to }\sum_{P\in \cal P}z_P\cdot P^j \geq |C_j|\text{  for all }j\in [d]\,,$$

while for the edge-disjoint variants, we write:

           $$\text{minimize} 
     \sum_{P\in\cal P}z_P
     $$     $$\text{subject to }\sum_{P\in \cal P}z_P\cdot P^j \geq|C_j|\text{  for all }j\in [d]$$
     $$ \text{and subject to }\sum_{P\in \cal P}z_P \cdot t_{P,i,j} \leq|E_{i,j}|\text{ for all }i,j\in [d]\,. $$
     For the directed cases, we refer to the arguments leading to \autoref{thm:dspp+dipp-dnd}.   
    \end{proof}

\begin{corollary} \label{ed-vc}
    \textsc{USPC}, \textsc{UIPC}, \textsc{ED-USPC} and \textsc{ED-UIPC} (as well as \textsc{DSPC}, \textsc{DIPC}, \textsc{ED-DSPC} and \textsc{ED-DIPC}) are in \FPT when parameterized by the vertex cover number of the given (or of the underlying undirected) graph.
\end{corollary}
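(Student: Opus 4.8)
The plan is to obtain this corollary as a direct consequence of \autoref{ed-nd} together with the standard exponential bound relating (directed) neighborhood diversity to the vertex cover number. This parallels exactly how \autoref{cor:vc-param} was derived from \autoref{thm:USPP-ND} in the undirected case, and how \autoref{cor:dspp+dipp-vc} was derived from \autoref{thm:dspp+dipp-dnd} in the directed case. The core observation is that membership in \FPT\ with respect to a parameter that is itself upper-bounded by a (computable) function of a second parameter automatically yields membership in \FPT\ with respect to that second parameter.

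Concretely, for the undirected problems \textsc{USPC}, \textsc{UIPC}, \textsc{ED-USPC} and \textsc{ED-UIPC}, I would first recall Lampis' estimate~\cite{lampis2012algorithmic}, namely $\nd(G) \leq 2^{\vc(G)} + \vc(G)$. For the directed problems \textsc{DSPC}, \textsc{DIPC}, \textsc{ED-DSPC} and \textsc{ED-DIPC}, I would instead invoke the bound $\dnd(G) \leq 4^{\vc(U(G))} + \vc(U(G))$ established earlier in the paper (the \autoref{prop:dnd-und}-adjacent Proposition). In either case the neighborhood diversity parameter is bounded by a single-exponential function of the vertex cover number.

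The steps are then: (i) compute a minimum vertex cover of the input graph (or of its underlying undirected graph in the directed setting) in \FPT\ time using the algorithm of~\cite{CheKanXia2010}; (ii) use the relevant exponential bound above to conclude that the corresponding (directed) neighborhood diversity is bounded by a function of $\vc$; and (iii) run the \FPT\ algorithm guaranteed by \autoref{ed-nd}, whose running time is $f(\nd)\cdot n^{\Oh(1)}$ (respectively $f(\dnd)\cdot n^{\Oh(1)}$). Composing the bound from (ii) with the running time in (iii) gives an overall running time of the form $f\!\left(2^{\Oh(\vc)}\right)\cdot n^{\Oh(1)}$, which is \FPT\ in $\vc$.

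I do not anticipate any genuine obstacle here, as the argument is a routine transfer along the parameter hierarchy; the only point requiring minor care is that the ILP-based algorithm of \autoref{ed-nd} must actually be run on the computed equivalence-class partition, so I would note that this partition can be obtained in linear time via fast modular decomposition~\cite{tedder2008simpler} (undirected case) or by directly computing the $\sim_{dnd}$-classes (directed case), ensuring the whole pipeline remains \FPT. The worst of the dependence is inherited from the double-exponential parameter dependence already present in \autoref{ed-nd}, now composed with an additional exponential, but this does not affect \FPT\ membership.
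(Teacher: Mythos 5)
Your proposal is correct and follows exactly the route the paper intends: the corollary is an immediate consequence of \autoref{ed-nd} combined with the single-exponential bounds $\nd(G)\leq 2^{\vc(G)}+\vc(G)$ (Lampis) and $\dnd(G)\leq 4^{\vc(U(G))}+\vc(U(G))$, after computing a minimum vertex cover in \FPT\ time via~\cite{CheKanXia2010}, mirroring how \autoref{cor:vc-param} and \autoref{cor:dspp+dipp-vc} are derived. Nothing is missing.
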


\subsection{Dual parameters}


\begin{theorem}
\textsc{UPC}, \textsc{ED-UPC}, \textsc{ED-USPC} and \textsc{ED-UIPC} (as well as the directed counterparts \textsc{ED-DSPC}, \textsc{ED-DIPC}, \textsc{ED-DSPC} and \textsc{ED-DIPC}) can be solved in \FPT time with respect to the dual parameterization. 
\end{theorem}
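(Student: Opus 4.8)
The plan is to mirror the proof of \autoref{thm:dual-param} for these covering and edge-disjoint variants, exploiting the fact that every path partition is simultaneously a valid edge-disjoint path cover and a valid (ordinary) path cover. First I would reduce to connected input graphs, as justified at the start of \autoref{sec:duals}, and set $\ell = n - k$ as the dual parameter for an instance $(G,k)$ with $|V| = n$.

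The core of the argument is a matching dichotomy. I would compute a maximum matching $M$ in the underlying undirected graph $U(G)$ in polynomial time. If $|M| \geq \ell$, then the partition built from $M$ --- taking each matched edge as a two-vertex path and each unmatched vertex as a singleton path --- spans $G$ with $n - |M| \leq k$ vertex-disjoint paths by \autoref{obs:match} (respectively \autoref{obs:match-dir} in the directed setting). Since such a partition is also an edge-disjoint cover and an ordinary cover, this immediately certifies that $(G,k)$ is a \yes-instance of each of \textsc{UPC}, \textsc{ED-UPC}, \textsc{ED-USPC}, \textsc{ED-UIPC} and the directed counterparts. Otherwise $|M| < \ell$, and since the endpoints of a maximum matching form a vertex cover, $\vc(U(G)) \leq 2|M| < 2\ell$; that is, the vertex cover number is bounded by (twice) the dual parameter. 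In the remaining bounded-vertex-cover regime I would invoke a vertex cover \FPT\ algorithm: for \textsc{ED-USPC}, \textsc{ED-UIPC} and their directed counterparts \textsc{ED-DSPC}, \textsc{ED-DIPC}, this is exactly \autoref{ed-vc}, so nothing new is needed there.

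The hard part will be the \emph{unrestricted} versions \textsc{UPC} and \textsc{ED-UPC}, which are deliberately absent from the neighborhood-diversity result \autoref{ed-nd}: unrestricted paths can be arbitrarily long relative to neighborhood diversity (a clique has neighborhood diversity $1$ but an arbitrarily long Hamiltonian path), so the bounded-number-of-path-types ILP does not apply under that parameter. The dual parameterization rescues exactly this case. Once the vertex cover $C$ has size $r < 2\ell$, the independent set $I = V \setminus C$ forces any path to alternate so that consecutive $I$-vertices are separated by $C$-vertices; hence every path uses at most $r$ vertices of $C$ and at most $r+1$ of $I$, so it has at most $2r+1$ vertices. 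This bounds the number of distinct path types --- classified, as in \autoref{ed-nd}, by how many vertices and how many edges each path contributes to every neighborhood-diversity class or pair of classes --- by a function of $r$ alone.

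I would then reuse the ILP of \autoref{ed-nd} essentially verbatim (the $\geq$ covering constraints for \textsc{UPC}, together with the $\leq |E_{i,j}|$ edge constraints for \textsc{ED-UPC}), now with a number of variables bounded by a function of $r$, and solve it in \FPT\ time via \cite[Theorem 6.5]{CygFKLMPPS2015}; the very same bounded-length observation, applied to the underlying undirected graph, handles the directed unrestricted edge-disjoint case. The conceptual point to highlight is that path length is unbounded under neighborhood diversity but \emph{bounded} under vertex cover, and the matching dichotomy guarantees that we only ever need to solve the bounded-vertex-cover instances, which is precisely where the ILP regains a finite variable set.
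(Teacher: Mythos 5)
Your proposal is correct, and its skeleton --- the maximum-matching dichotomy of \autoref{thm:dual-param} reducing every instance either to an immediate \yes-answer (a matching with $n-k$ edges yields a partition, hence also an edge-disjoint cover and a cover, with $k$ paths) or to the regime $\vc(U(G))\leq 2(n-k)$, where \autoref{ed-vc} disposes of \textsc{ED-USPC}, \textsc{ED-UIPC} and their directed counterparts --- is exactly the paper's. Where you genuinely diverge is on \textsc{UPC} and \textsc{ED-UPC}: the paper simply invokes the exponential kernel of \autoref{thm:vc-param} for these, whereas you observe that a vertex cover of size $r$ caps every path at $2r+1$ vertices (no two consecutive independent-set vertices), so the number of path types is bounded by a function of $r$ alone and the ILP of \autoref{ed-nd}, with the $\geq$ covering constraints and, for edge-disjointness, the $\leq |E_{i,j}|$ constraints, becomes applicable even to unrestricted paths. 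This is an arguably more robust route: the reduction rule behind \autoref{thm:vc-param} counts how many independent-set vertices can lie on non-singleton paths of a \emph{partition}, and its transfer to covers --- where a vertex may be traversed by several paths and a minimum cover can be strictly smaller than a minimum partition, as for $K_{1,3}$ --- is not spelled out in the paper; your bounded-length ILP sidesteps this and treats the edge-disjoint variant in the same stroke. Your diagnosis of why \textsc{UPC} is absent from \autoref{ed-nd} (path length is unbounded in the neighborhood diversity) is also accurate. The only points to make explicit are routine: the interchangeability of same-class vertices used to realize an ILP solution as actual paths does hold for unrestricted paths (replacing a path vertex by another vertex of its type preserves all adjacencies along the path), and in the directed unrestricted case one works with the $\sim_{dnd}$-classes, of which there are at most $4^{r}+r$.
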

\begin{proof}
These results are obtained using the technique of \autoref{thm:dual-param} in view of \autoref{ed-vc}, using \autoref{thm:vc-param} for \textsc{(ED-)UPC} and \autoref{ed-nd} for \textsc{(ED-)USPC} and \textsc{(ED-)UIPC}.
\end{proof}

\section{Conclusions}\label{conclusion}

We have explored the algorithmic complexity of the three problems \textsc{PP}, \textsc{IPP} and \textsc{SPP}, and as witnessed by \autoref{tab:survey-PPP}, our results show some interesting algorithmic differences between these three problems.

Many interesting questions remain to be investigated. For example, what is the parameterized complexity of \textsc{SPP} on undirected graphs, parameterized by the number of paths? Is it \W{1}-hard, like for DAGs? This was asked in~\cite{dumas2024graphs}, and our \W{1}-hardness result for DAGs can be seen as a first step towards an answer.

Another interesting question is whether \textsc{SPP} is \XP on directed graphs? This is the case for undirected graphs, for DAGs and for directed graphs that are planar or have bounded directed treewidth, but we do not know about the general case. In the conference version of this paper, we had asked whether \textsc{IPP} is in \XP for DAGs when parameterized by solution size, or if it is \paraNP-hard like in the general (and undirected) case. During the revision of this journal paper, the latter possibility has been confirmed in~\cite{LafMou2024}. We have added this reference also in Table~\ref{tab:survey-PPP} because it highlights that in terms of complexity, there are indeed many parallels between undirected graphs and DAGs concerning SPP/IPP problems.

We do not know whether \textsc{IPP} is \NP-hard on undirected bipartite graphs. Other classes, like planar graphs, could also be interesting to study in this regard.

We have seen that \textsc{IPP} and \textsc{SPP} admit \FPT algorithms when parameterized by (directed) neighborhood diversity. Can we obtain such algorithms for other (e.g., more general) parameters? 
This holds on undirected graphs (\textsc{UPP}) for the more general parameter modular-width~\cite{GajLamOrd2013}; we do not know if this stays true for directed graphs.

Regarding the parameterized algorithms that we obtained for vertex cover and neighborhood diversity, can they be improved? For \textsc{PP} parameterized by vertex cover, we managed to reduce the dependency to $\Oh^*\left(\vc!\cdot 2^{\vc}\right)$, which is $\Oh^*\left(2^{\vc\log\vc}\right)$ (\autoref{prop:UPP-VC-FPT}). It is not clear if this can be improved to, say, $\Oh^*\left(2^{\vc}\right)$. Also, it would be interesting to obtain a similar running time (or better) for \textsc{SPP} and \textsc{IPP} as well. Furthermore, we do not know whether polynomial kernelizations exist for these problems.

Similarly, in the light of our \FPT algorithms for the dual parameterizations of \textsc{PP}, \textsc{IPP} and \textsc{SPP}, we can ask whether they admit a polynomial kernel.

As we have indicated, some of our results also apply to the covering variants of the three problems (where the paths need not be disjoint). On the other hand, approximation algorithms have been designed for \textsc {SPC}, the covering version of \textsc{SPP}, in~\cite{TG21} (for general undirected graphs) and~\cite{foucaud2022} (for chordal graphs and related undirected graph classes). Are there such algorithms for \textsc{SPP} (and \textsc{IPP}) as well?

Another line of research is \emph{reconfiguration}.
In particular, shortest path reconfiguration in undirected graphs has been extensively studied, see \cite{GajJKL2022} for a recent piece of work that also gives a nice survey, but directed graphs have attracted less attention. Also, the reconfiguration of path partitions as studied in this paper has not been touched so far, although this has natural applications in reconfiguring supply networks.
A possible definition of a single reconfiguration step could be to select two paths $P,Q$ from the current path partition $\mathbb{P}$ and build two other paths $P',Q'$ partitioning the graph induced by the vertex sets of $P,Q$, so that one obtains another path partition $\mathbb{P}'$ by replacing $P,Q$ in $\mathbb{P}$ by $P',Q'$.

We also note that some of the techniques used in this paper for the \textsc{IPP} problem likely can be applied to the \textsc{Hole Partition} problem (where we want to partition a graph into induced cycles). The packing version of this problem was studied in~\cite{marx20}.

\textsc{SPP} could also be studied for edge-weighted graphs, as studied for example in~\cite{DBLP:conf/esa/Berczi017}.

\section*{Acknowledgments}
This paper is an extended version of a paper presented at CIAC 2023~\cite{FernauFMPN23}. The research of the second author was financed by the French government IDEX-ISITE initiative 16-IDEX-0001 (CAP 20-25), the International Research Center ``Innovation Transportation and Production Systems'' of the I-SITE CAP 20-25, and by the ANR project GRALMECO (ANR-21-CE48-0004).
The research of the last author was funded by a DAAD-WISE Scholarship 2022.

\bibliography{references}

\end{document}